\newtheorem{thm}{Theorem}[section]
\newtheorem{lemma}[thm]{Lemma}
\newtheorem{defin}{Definition}[section]
\theoremstyle{definition}
\renewcommand{\Re}{\operatorname{\rm Re}\nolimits}
\renewcommand{\Im}{\operatorname{\rm Im}\nolimits}
\def \tr {\operatorname{tr}}
\def \restrict {\upharpoonright}
\def \Real {{\mathbb R}}
\def \Sphere {\mathbb{S}}
\def \Complex {\mathbb{C}}
\def \Natural {{\mathbb N}}
\def \mco  {\mathcal{O}}
\def \Sphere {{\mathbb S}}
\def \Integers {{\mathbb Z}}
\def \cd {d'}
\def \sign {\text{sign}}
\def \bbE {{\mathbb E}}
\newcommand{\R}{\mathbb{R}}
\newcommand{\Z}{\mathbb{Z}}
\newcommand{\N}{\mathbb{N}}
\newcommand{\C}{\mathbb{C}}
\newcommand{\beq}{\begin{equation}}
\newcommand{\eeq}{\end{equation}}
\newcommand{\ba}{\begin{array}}
\newcommand{\ea}{\end{array}}
\newcommand{\bea}{\begin{eqnarray}}
\newcommand{\eea}{\end{eqnarray}}
\title [Resonances in Even Dimensions]
{Maximal Order of Growth for the
Resonance Counting Functions for Generic Potentials in Even Dimensions}
   \author { T.\ J.\  Christiansen and P.\ D.\ Hislop}
\thanks{T.J.C.\ partially supported by NSF grant DMS 0500267, P.D.H.\
partially supported by NSF grant 0503784.}
\begin{document}


\begin{abstract}
We prove that the resonance counting functions for Schr\"odinger
operators $H_V = - \Delta + V$ on $L^2 ( \R^d)$, for $d \geq 2$ {\it
even}, with generic, compactly-supported, real- or complex-valued
potentials $V$, have the maximal order of growth $d$ on each sheet
$\Lambda_m$, $m \in \Z \backslash \{ 0 \}$,
of the logarithmic Riemann surface. We obtain this
result by constructing, for each $m \in \Z \backslash \{ 0 \}$,
a plurisubharmonic function from a
scattering determinant whose zeros on the physical sheet $\Lambda_0$
determine the poles on $\Lambda_m$.
We prove
that the order of
growth of the counting function is related to a suitable
estimate on this function that we establish for generic potentials.
We also show that for a potential that
is the characteristic function of a ball,
the resonance counting function is
bounded below by $C_m r^d$
 on each sheet $\Lambda_m$, $m \in \Z \backslash \{0\}$.
\end{abstract}

\maketitle
\section{Introduction}\label{intro1}

We study the distribution of scattering poles or resonances of
Schr\"odinger operators $H_V = - \Delta + V$ on $L^2 ( \R^d)$, for
$d \geq 2$ and {\it even}, and real- or complex-valued,
compactly-supported potentials $V \in L_0^\infty ( \R^d; \C)$. Let
$\chi_V \in C_0^\infty (\R^d)$ be a smooth, compactly-supported
function satisfying $\chi_V V = V$, and denote the resolvent of
$H_V$ by $R_V (\lambda) = (H_V - \lambda^2)^{-1}$ for $0<\arg \lambda
<\pi$. In the
even-dimensional case, the operator-valued function $\chi_V R_V
(\lambda) \chi_V$ has a meromorphic continuation to $\Lambda$,
 the
infinitely-sheeted Riemann surface of the logarithm. We
denote by $\Lambda_m$ the $m^{th}$ open sheet consisting of $z \in
\Lambda$ with $m \pi < \arg z < (m + 1) \pi$. We are interested in the
number of poles $n_{V,m} (r)$,
counted with multiplicity,
 of the continuation of the truncated resolvent $\chi_V R_V (\lambda) \chi_V$
on $\Lambda_m$ of modulus at most $r > 0$.
The {\it order of growth} of the resonance counting function $n_{V,m}(r)$ for $H_V$ on the $m^{th}$-sheet
is defined by
\beq\label{resonance1}
\rho_{V,m} \equiv \limsup_{r \rightarrow \infty} \frac{ \log n_{V,m}(r)}{ \log r}.
\eeq
It is known that $\rho_{V,m} \leq d$ for $d\geq2$ even \cite{[Vodev1],[Vodev2]}. We prove that generically
(in the sense of Baire typical)
the resonance counting function has the maximal order of growth $d$ on each
non-physical sheet.

\begin{thm}\label{main1}
Let $d \geq 2$ be even, and let $K \subset \R^d$ be a fixed, compact set with nonempty interior.
There is a dense $G_\delta$ set $\mathcal{V}_F (K) \subset L_0^\infty (K; F)$, for $F=\R$ or $F=\C$, such
that if $V \in \mathcal{V}_F (K)$, then $\rho_{V,m} = d$ for all $m \in \Z \backslash \{ 0 \}$.
\end{thm}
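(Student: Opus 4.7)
The plan is to prove Theorem~\ref{main1} by combining a Baire category framework with a plurisubharmonic (PSH) transfer argument whose ``seed'' of maximal growth is the ball potential mentioned in the abstract. For fixed $m \in \Z \setminus \{0\}$ and $n, N \in \N$, set
\[
\mathcal{U}_{m,n,N} := \{V \in L_0^\infty(K;F) : n_{V,m}(r) > r^{d-1/n} \text{ for some } r > N\}.
\]
Each $\mathcal{U}_{m,n,N}$ will be shown to be open in $L^\infty$ via continuity in $V$ of an entire Fredholm-type determinant $D_{V,m}(\lambda)$ on $\Lambda_0$ whose zeros (with multiplicity) in $\{|\lambda|\le r\}$ count the resonances on $\Lambda_m$ of modulus at most $r$, together with Rouch\'e's theorem. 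Setting $\mathcal{V}_F(K) := \bigcap_{m \ne 0}\bigcap_{n, N} \mathcal{U}_{m,n,N}$, this is then a $G_\delta$ on which $\rho_{V,m} \ge d$; combined with Vodev's upper bound $\rho_{V,m} \le d$, the theorem reduces to showing each $\mathcal{U}_{m,n,N}$ is dense.

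For density, the determinant $D_{V,m}$ would be built from the scattering matrix and its analytic continuations across the positive real axis by $m\pi$, using the standard representation of the continued resolvent on $\Lambda_m$ in terms of the free resolvent and the $T$-operator on $\Lambda_0$. Vodev-type singular-value estimates yield the uniform bound $\log|D_{V,m}(\lambda)| \le C_V|\lambda|^d$, so that by Jensen's formula the order of the entire function $D_{V,m}$ equals $\rho_{V,m}$. Given $V_0 \in L_0^\infty(K;F)$ and $\epsilon > 0$, pick a ball $B \subset \operatorname{int}(K)$ and set $W := c\,\mathbf{1}_B$ with $c$ real and in the range of the abstract's ball theorem. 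Setting
\[
\Theta(V) := \limsup_{r \to \infty} \frac{1}{r^d}\log \max_{|\lambda| = r} |D_{V,m}(\lambda)|,
\]
the lower bound $n_{W,m}(r) \ge C_m r^d$ combined with Jensen then gives $\Theta(W) > 0$. Since $\Theta(V) > 0$ already implies $\rho_{V,m} = d$, it remains to show $\{V : \Theta(V) > 0\}$ is dense near $V_0$.

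To this end, introduce a multi-parameter holomorphic deformation $V_{\vec z} := (1 - z_1 - \cdots - z_N) V_0 + z_1 W + \sum_{i=2}^N z_i W_i$, where $W_2,\ldots,W_N$ span a rich-enough real subspace of $L_0^\infty(K;\R)$ to be calibrated. Joint holomorphy of $D_{V_{\vec z}, m}$ in $(\vec z, \lambda)$ makes the family of subharmonic functions $r^{-d}\max_{|\lambda|=r}\log|D_{V_{\vec z}, m}(\lambda)|$ uniformly bounded and PSH in $\vec z$, and its upper semicontinuous regularization $\Theta^*$ is PSH on $\C^N$ with $\Theta^*(e_1) \ge \Theta(W) > 0$. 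The ``suitable estimate'' alluded to in the abstract---a quantitative PSH transfer in the spirit of Sibony--Wong---then shows $\Theta(V_{\vec z}) > 0$ except on a negligible exceptional set $E \subset \C^N$, and calibrating $N$ and the $W_i$ so that $E \cap \R^N$ has empty interior in $\R^N$ produces real $\vec z$ of arbitrarily small norm with $V_{\vec z} \in \mathcal{U}_{m,n,N}$ and $\|V_{\vec z} - V_0\|_\infty < \epsilon$.

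The main obstacle will be the correct construction of $D_{V,m}$: it must be an entire function on $\Lambda_0$ whose zeros reproduce the resonance count on $\Lambda_m$ with the correct multiplicities, whose growth is globally controlled by $e^{C|\lambda|^d}$ without acquiring spurious exponential factors from the twist between sheets, and whose dependence on $V$ is jointly holomorphic in the sense required by the PSH transfer. A secondary challenge is the precise formulation of the ``suitable estimate'' and the control of the exceptional set: generic pluripolar subsets of $\C^N$ can contain real subspaces, so the multi-parameter family must be chosen carefully so that the pluripolar exceptional set in this parameter space has empty interior in $\R^N$---this is the technical heart of translating one good example ($W$) into density of good potentials in $L_0^\infty(K;\R)$.
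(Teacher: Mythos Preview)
Your overall architecture---Baire category to get the $G_\delta$, a plurisubharmonic transfer argument seeded by the ball potential to get density---is exactly what the paper does. But two points in your plan diverge from what actually works in even dimensions, and one of them is the crux of the paper.

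\textbf{The half-plane obstruction.} You postulate an ``entire Fredholm-type determinant $D_{V,m}(\lambda)$ on $\Lambda_0$'' and then invoke Jensen's formula and the quantity $\Theta(V)=\limsup r^{-d}\max_{|\lambda|=r}\log|D_{V,m}(\lambda)|$. There is no such entire function in even dimensions. The natural object is
\[
f_m(\lambda)=\det\bigl(I-m(I+VR_0(\lambda)\chi)^{-1}VT(\lambda)\chi\bigr),
\]
which is holomorphic only for $0<\arg\lambda<\pi$ and $|\lambda|$ large; its zeros there correspond to resonances on $\Lambda_m$. The paper is explicit that this is precisely what makes the even-dimensional case harder than the odd one, where a Weierstrass product on the full plane is available. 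Consequently your $\max_{|\lambda|=r}$ is not defined, ordinary Jensen does not apply, and the equivalence ``order of $D_{V,m}$ $=$ order of its zero-counting function'' you assert is not available. The paper replaces Jensen by an upper-half-plane analogue (integrate the argument principle over $[R,r]$), which yields
\[
\int_R^r\frac{n_{+,R}(f_m,t)}{t}\,dt=\frac{1}{2\pi}\int_0^\pi\log|f_m(re^{i\theta})|\,d\theta+(\text{boundary terms on }\R),
\]
and the substantial work is controlling those boundary terms: one needs $\int_R^r|f_m'/f_m|\,ds=O(r^{d-1})$ along the real axis and a comparison $|f_m(\lambda e^{i\pi})/f_m(\lambda)|\le\exp(C\langle\lambda\rangle^{d-2})$ to make the half-plane integral well-behaved. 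Only then does one get a legitimate PSH function of $(z,u)$, built from $\int_0^\pi\log|f_m(z,ue^{i\theta})|\,d\theta+\Re(u^{d-\epsilon})$, whose order in $u$ recovers $\max(d-\epsilon,\rho_{V(z),m})$. Your proposal flags ``the correct construction of $D_{V,m}$'' as the main obstacle, but the resolution is not to find an entire function---it is to accept the half-plane object and supply these boundary estimates.

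\textbf{The multi-parameter family is unnecessary.} Your worry that ``generic pluripolar subsets of $\C^N$ can contain real subspaces'' leads you to an $N$-parameter calibration. The paper avoids this entirely: a single complex parameter $z\mapsto V(z)=zV_1+(1-z)V_0$ suffices, because a pluripolar set $E\subset\C$ has one-dimensional Lebesgue measure zero on $\R$, so one can always pick a real $z_0$ arbitrarily close to $0$ with $z_0\notin E$. The calibration of $W_2,\dots,W_N$ you propose is not needed.

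Finally, your $G_\delta$ argument via Rouch\'e on $n_{V,m}(r)$ is plausible but differs from the paper, which instead shows that the sets $\{V:\int_0^\pi\log|f_{V,m}(re^{i\theta})|\,d\theta\le Mr^q\text{ for }r\in[2Nc_0,j]\}$ are closed (uniform convergence of determinants) and takes countable unions; this route meshes directly with the half-plane Jensen identity and avoids boundary-zero issues in Rouch\'e.
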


This result is the even-dimensional analog of our previous result \cite{ch-hi1} in the odd-dimensional
case. Roughly speaking, the theorem
states that most Schr\"odinger operators $H_V$
have the right number of resonances on each non-physical sheet.
As in \cite{{polar}}, the proof depends upon the
construction of a plurisubharmonic function from which we can recover
the order of growth
of the resonance counting function
on $\Lambda_m$.  This is more difficult than in the odd-dimensional case,
where one can use a Weierstrass factorization on the plane to help understand
the relation between the order of growth of the determinant of the scattering
matrix and the order of growth of its zero-counting function.  Moreover,
the even dimensional case requires the study of the poles of the resolvent far from the physical
sheet. For this, one cannot simply use the determinant of the scattering matrix on the
physical sheet.

In order to implement the argument
in \cite{polar}, we need
upper bounds
on $n_{V,m}(r)$ for all such $V$. In even dimensions, the
following upper bounds for any
$V \in L_0^\infty ( \R^d; F)$ were proven by Vodev \cite{[Vodev1],[Vodev2]}.
Let $\lambda_j$ be the poles of the continuation of
$\chi_V R_V (\lambda) \chi_V$, listed with multiplicity.
Vodev considered the following counting function:
\beq\label{vodev1}
N_V (r,a ) \equiv \{ \lambda_j \in \Lambda :
0 < |\lambda_j| \leq r, | \arg \lambda_j| \leq a \}.
\eeq
We use the notation $\langle w \rangle \equiv
( 1 + | w|^2)^{1/2}$. In our setting, Vodev proves that
\beq\label{vodev2}
N_V (r,a) \leq C a ( \langle r \rangle^d + ( \log a)^d),
~\mbox{for all} ~ r,a > 1.
\eeq
This implies that $n_{V,m} (r) \leq C_m \langle r \rangle^d$ on each sheet $m \neq 0$.

 We also
need an example of a potential in our class
for which the order of growth of the
resonance counting function
is $d$ on each sheet $\Lambda_m$, $m \in \Z^*=\Z\setminus\{0\}$.
\begin{thm}\label{main111}
Let $V (x) = V_0\chi_{[0,1]} (|x|)$, with $V_0>0$, be a real,
spherically symmetric, step potential.
For each $m \in \Z \backslash \{ 0 \}$, there is a constant
$C_m >0$, so that for
$r$ sufficiently large,
\beq\label{lowerbd000}
n_{V,m}(r)\geq C_m r^d + \mathcal{O} (r^{d-1}).
\eeq
\end{thm}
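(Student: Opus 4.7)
The plan is to exploit the spherical symmetry of $V(x)=V_0\chi_{[0,1]}(|x|)$ and count resonances one angular-momentum sector at a time. I would decompose $L^2(\R^d)$ into spherical harmonic sectors, so that $H_V$ becomes a direct sum of radial operators indexed by $\ell\in\N$. In each sector, the poles on $\Lambda$ of $\chi_V R_V(\lambda)\chi_V$ coming from that sector are the zeros on $\Lambda$, each of degeneracy $\mu_\ell=\dim\hol_\ell\sim c_d\ell^{d-2}$, of an explicit transcendental function. With $\nu=\ell+(d-2)/2$ (a nonnegative integer because $d$ is even) and $\mu=\sqrt{\lambda^2-V_0}$, matching the interior Bessel solution $J_\nu(\mu r)$, regular at $r=0$, to the exterior outgoing Hankel solution $H_\nu^{(1)}(\lambda r)$ at $r=1$ gives the resonance function
\[
F_\ell(\lambda)=\mu J_\nu'(\mu)H_\nu^{(1)}(\lambda)-\lambda J_\nu(\mu)H_\nu^{(1)\prime}(\lambda).
\]

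To reach the $m$th sheet I would substitute the integer-order continuation identity
\[
H_\nu^{(1)}(e^{im\pi}z)=(-1)^{m\nu}\bigl[(1-m)H_\nu^{(1)}(z)-m\,H_\nu^{(2)}(z)\bigr],
\]
together with the standard asymptotics $H_\nu^{(1,2)}(z)\sim\sqrt{2/(\pi z)}\,e^{\pm i(z-\nu\pi/2-\pi/4)}$ valid on the physical sheet, into $F_\ell$. This expresses $F_\ell$ on $\Lambda_m$ as a sum of two exponential-type terms of opposite sign, with explicit $m$-dependent coefficients that do not vanish for $m\neq 0$. Balancing the two terms and applying Rouch\'e's theorem (or a Jensen-formula argument) on a suitable annulus should produce at least $c_m r-\mco(\nu)$ zeros of $F_\ell$ on $\Lambda_m$ in $\{|\lambda|\leq r\}$, uniformly for $\ell$ in the range $0\leq\ell\leq c_0 r$, for some sufficiently small $c_0>0$.

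Summing the sector counts against the spherical-harmonic degeneracy yields
\[
n_{V,m}(r)\geq\sum_{0\leq\ell\leq c_0 r}\mu_\ell\bigl(c_m r-\mco(\ell)\bigr)\geq C_m r^d+\mco(r^{d-1}),
\]
which is the required bound. The main obstacle is the per-sector zero count when $\ell$ is comparable to $|\lambda|$: the standard Hankel asymptotics break down in this transition regime, and one must invoke the Debye uniform expansion of $H_\nu^{(1,2)}(z)$ in both order and argument, then verify that on $\Lambda_m$ no systematic cancellation between the $H_\nu^{(1)}$ and $H_\nu^{(2)}$ contributions reduces the number of zeros below a linear-in-$r$ rate. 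It is precisely the need to let $\ell$ range up to order $r$ that produces the maximal power $r^d$.
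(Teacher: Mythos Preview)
Your framework coincides with the paper's: separation into angular-momentum sectors, the matching condition $\mu J_\nu'(\mu)H_\nu^{(1)}(\lambda)-\lambda J_\nu(\mu)H_\nu^{(1)\prime}(\lambda)=0$, the integer-order continuation formula for $H_\nu^{(1)}(e^{im\pi}z)$, and the final sum against the spherical-harmonic multiplicity $\mu_\ell\sim c_d\ell^{d-2}$.

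The per-sector counting, however, is quite different. The paper does \emph{not} establish $c_m r-\mathcal{O}(\nu)$ zeros per sector; it finds only $\nu(1-\epsilon_1)$ zeros, all located near the upper boundary of Olver's eye-shaped region $\nu K$ in the scaled variable $z=\lambda/\nu$ (so at $|\lambda|\sim\nu$, precisely in the turning-point regime you flag as the obstacle). The tool is Olver's uniform Airy-function asymptotics rather than standard or Debye expansions. After the continuation identity one writes $F_m^{(\nu)}=(-1)^{m\nu}[F_0^{(\nu)}-2mG_0^{(\nu)}]$; the key structural fact is that $F_0^{(\nu)}(\nu z)\to-2i/\pi$ (a constant, being a perturbed Wronskian of $J_\nu$ and $H_\nu^{(1)}$), while $G_0^{(\nu)}(\nu z)$, a Wronskian of two $J_\nu$'s that vanishes identically when $V_0=0$, is only of order $(V_0/\nu^2)\,e^{-2\nu\rho(z)}$. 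The zero condition $e^{2\nu\rho}\approx imV_0/(4\nu^2(1-z^2))$ is then solved by Rouch\'e along the image of $\partial K^+$ in the $\rho$-plane (an interval of length roughly $\pi$ on the imaginary axis), giving $\sim\nu$ roots. Summing $\sum_{\ell\le r}\nu\,\mu_\ell\sim\sum_{\ell\le r}\ell^{d-1}\sim r^d$ completes the proof.

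Your alternative route---counting $\sim r$ zeros per sector in the far region $|\lambda|\gg\nu$, as in Zworski's odd-dimensional argument---can probably be carried through, but your heuristic of ``two exponential-type terms of opposite sign'' mischaracterizes the situation. The cancellation you worry about \emph{does} occur and is not something to be ruled out but to be computed: $G_0^{(\nu)}$ is not $\mathcal{O}(1)$ but $\mathcal{O}(V_0/\lambda^2)\,e^{-2i\lambda}$ after the leading $H_\nu^{(2)}(\mu)H_\nu^{(2)}(\lambda)$-type terms cancel, and $F_0^{(\nu)}$ contributes a constant rather than an exponential. One must extract these subleading terms explicitly before any Rouch\'e comparison is possible. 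The paper sidesteps this entirely by working in the turning-point region where Olver's expansions make the $V_0/\nu^2$ prefactor visible from the start.
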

Asymptotics of the resonance counting function for Schr\"odinger
operators in {\em odd} dimension
with certain radial potentials were proved in \cite{[Zworski2]}, see also
\cite{stef06}.

The existence of resonances for $H_V$ in even dimensions was first proved
by Tang and S\'a Barreto \cite{[TSaB]}. They proved for $d \geq 4$
that there must be at least one resonance if
$V \in C_0^\infty ( \R^d; \R)$ and $V$ is nontrivial.
This was later strengthened by S\'a Barreto \cite{[SaB1]}, who proved,
 for the same class
of potentials,
 a global lower bound on the number of resonances.
S\'a Barreto defined $N_V (r)$
to be the number of poles $\lambda$ of the continued, truncated resolvent, including multiplicities,
satisfying $1/r < \lambda < r$, with
$| \arg \lambda | < \log r$.
He proved that $N_V(r)$
grows more quickly than $\log r / (\log \log r)^p$, for any $p > 1$,
proving that there are infinitely-many resonances.
His work does not establish the existence
of infinitely-many resonances on each non-physical sheet, a result that follows
from Theorem \ref{main1},
for ``generic'' potentials.  In contrast, there are examples of
complex-valued potentials $V$ in dimension $d\geq 2$ for which there are
no resonances of $H_V$ away from the origin \cite{sownr}.

Earlier literature on the distribution of resonances for
Schr\"odinger operators in
the even-dimensional case includes Intissar's paper
\cite{[Intissar1]}. Intissar defined, for any $\epsilon > 0$ and $r > 1$,
a resonance counting function $N(\epsilon, r) \equiv \{ \lambda_j ~|~ r^{-\epsilon} <
| \lambda_j | < r^\epsilon , ~| \arg \lambda_j| < \epsilon \log r \}$.
For even dimensions $d \geq 4$, and for any $\epsilon \in (0 , \sqrt{2} /2 )$, he proved the
non-optimal
polynomial upper bound
$N(\epsilon, r) \leq C_\epsilon \langle r \rangle^{d+1}$.
Zworski \cite{[Zworski3]}
proved a Poisson formula for resonances in even dimensions.
Shenk and Thoe \cite{[ST1]}
proved the meromorphic continuation of the $S$-matrix for
potential scattering onto $\Lambda$.
In a related work, Tang \cite{[Tang1]} proved the existence
of infinitely-many resonances
for metric scattering
in even dimensions.

For surveys of related results in odd dimensions, we refer the reader to
\cite{[Zworski4]} and \cite{[Vodev3]}, in addition to the references in \cite{ch-hi1}.
We refer the reader to \cite{tjc3} for generic-type results for resonances in other situations.
The notion of Baire typical or generic potentials
was used by B.\ Simon in the study of singular continuous spectra for Schr\"odinger operators
in \cite{simon1}.


\subsection{Contents}
In section \ref{preliminaries1}, we first reduce the problem
of counting resonances on the $m^{th}$-sheet to one of counting the zeros of
a holomorphic function $f_m (\lambda)$ in the upper-half complex plane. We then
review aspects of the theory of plurisubharmonic functions.
Section \ref{complexanalysis} is devoted to the construction
of a plurisubharmonic function
associated with $f_m (\lambda)$
and families of potentials $V(z;x)$,
parameterized by $z \in \Omega \subset \C^{d'}$.
The main technical result on the order of growth of the
counting function for the zeros of $f_m (z, \lambda)$ is Theorem \ref{main-psh1}.
The main theorem on the distribution of resonances is proved in section \ref{mainth1}.
The proof of Theorem \ref{main111} on
the lower bound on the number of resonances on each sheet $\Lambda_m$, $m \in \Z \backslash \{ 0 \}$,
for the step potential with the correct exponent $d$, is
given in section \ref{examples}.
The appendix in section \ref{app-bounds} contains
the details of the uniform asymptotic expansions of Bessel and
 Hankel functions,
due to Olver \cite{[Olver1],[Olver2],[Olver3]}, and their application
to the location of the zeros.

\noindent
{\bf Acknowledgments.}  The first author is grateful for the
hospitality of MSRI during the completion of this paper, and to MSRI
and the MU Research Leave for partial support during this time.

\section{Reduction to an Operator on $\Lambda_0$ and other Preliminaries}\label{preliminaries1}

For $0<\arg \lambda <\pi$,
we denote by $R_0(\lambda)=(-\Delta -\lambda^2)^{-1}$, where $\Delta $ is
the nonpositive Laplacian on $\Real^d$.  It is well-known that for $d
\geq 4$ even and any compactly-supported, smooth function $\chi \in C_0^{\infty}(\Real^d)$,
the cut-off resolvent $\chi R_0(\lambda)\chi $ has an analytic continuation to $\Lambda$,
the logarithmic cover of the plane (see, for example, \cite{{lrb},[ST1]}). For $d=2$
there is a logarithmic singularity at $\lambda =0$. This is easily seen from the
representation
\beq\label{resolv1}
\chi R_0(\lambda) \chi = E_1(\lambda) + ( \lambda^{d-2} \log \lambda) \; E_2(\lambda),
\eeq
where $E_1(\lambda)$ and $E_2(\lambda)$ are entire operator-valued functions
and for $d=2$, the operator $E_2(\lambda=0)$ is a finite-rank operator.
This representation follows from the formula for the Green's function
\beq\label{resolv2}
R_0 (\lambda) = \frac{i}{4} \left( \frac{ \lambda}{2 \pi |x-y|} \right)^{(d-2)/ 2}
H_{(d-2)/2}^{(1)} ( \lambda |x-y|),
\eeq
where the Hankel function of the first kind is defined by $H_\nu^{(1)} (z) = J_\nu (z)
+ i N_\nu (z)$, and the expansion of the Neumann Bessel function $N_\nu (z)$, when $\nu \in \N$.
We shall continue to
denote this continuation by $\chi R_0(\lambda)\chi $.  We shall use the following
key identity, for $m\in \Integers$, that follows from (\ref{resolv2}) and (\ref{hankel11}),
\begin{equation}\label{eq:difference}
R_0(e^{im\pi}\lambda)=R_0(\lambda)-m T(\lambda)
\end{equation}
where $T(\lambda)$ has Schwartz kernel
\begin{equation}\label{Top1}
T(\lambda, x,y)= i \pi (2 \pi)^{-d} \lambda^{d-2}
\int_{\Sphere^{d-1} }e^{i\lambda (x-y)\cdot \omega}d\omega
\end{equation}
\cite[Section 1.6]{lrb}.
We note that for any $\chi \in C_0^{\infty}(\Real^d)$, $\chi T(\lambda)
\chi$ is a holomorphic trace-class valued operator for $\lambda \in \C$.

Let $V\in L^{\infty}_0 (\Real^d)$, where we suppress the notation $F$
when there is no need to distinguish real- or complex-valued potentials.
When $0<\arg \lambda <\pi$,
we denote by $R_V(\lambda)=(-\Delta +V-\lambda^2)^{-1}$.
Let $\chi_V \in C_0^\infty ( \R^d)$ be a compactly supported function
with $\chi_V V = V$. The resonances are the poles of the
 meromorphic continuation
of $\chi_V R_V(\lambda) \chi_V$ to $\Lambda$. The set of
resonances is independent of the
cut-off function in the class described here.
By the second resolvent formula, we have
\beq\label{resolvent1}
\chi_V R_V ( \lambda ) \chi_V ( 1 + V R_0 ( \lambda ) \chi_V)=
\chi_V R_0 (\lambda) \chi_V,
\eeq
and the poles of $\chi_V R_V (\lambda) \chi_V$ correspond,
with multiplicity, to the
zeros of $I+VR_0(\lambda)\chi_V$.

We can reduce the analysis of the zeros
of the continuation of $I+VR_0(\lambda)\chi_V$ to $\Lambda_m$
to the analysis of zeros of a related operator on $\Lambda_0$.  This
is very similar to the technique used by Froese in odd dimensions
in \cite{froese}.
Using (\ref{eq:difference}), if $0<\arg \lambda<\pi$ and $m\in \Integers$, then
$e^{im\pi} \lambda \in \Lambda_m$, and
\begin{align*}
I+VR_0(e^{im \pi}\lambda)\chi & = I+V(R_0(\lambda)-m T(\lambda))\chi_V \\
& = (I+VR_0(\lambda)\chi_V)(I-m (I+VR_0(\lambda)\chi_V)^{-1}V T(\lambda)\chi_V).
\end{align*}
For any fixed $V\in L^{\infty}_0 (\Real^d)$,
there are only finitely many poles of $(I+VR_0(\lambda) \chi_V)^{-1}$
with $0<\arg \lambda<\pi$.  Thus
\begin{equation}\label{eq:fmfirst}
f_m(\lambda)=\det(I -m (I+VR_0(\lambda)\chi_V)^{-1}V T(\lambda)\chi_V)
\end{equation} is
a holomorphic function of $\lambda$
when $0<\arg \lambda<\pi$ and
$|\lambda|>c_0 \langle \|V\|_{L^{\infty}}\rangle$.
Moreover, with at most a finite number of exceptions, the zeros
of $f_m(\lambda)$ with $0<\arg \lambda<\pi$ correspond, with multiplicity,
to the poles of $\chi_V R_V(\lambda) \chi_V$ with $m \pi <\arg \lambda < (m+1)\pi$.
Henceforth, we will consider the function $f_m (\lambda)$,
for $m \in \Z \backslash \{ 0 \}$, in a neighborhood
of  $\Lambda_0$.
We write $\Z^*$ for $\Z \backslash \{ 0 \}$.

We mention certain symmetries of the resonances, identified as poles of the meromorphic
continuation of the $S$-matrix, on various sheets that
were discussed in \cite{[ST1]}.
For $d$ even and $V$ real, if $\lambda \in \Lambda_m$ is a scattering
pole, then $\lambda_S = - \overline{\lambda} = \overline{ e^{-i \pi} \lambda} = |\lambda|
e^{i ( \pi - \arg \lambda)} \in \Lambda_{-m}$
is also a scattering pole. This follows from the identity for the $S$-matrix:
$S ( \overline{\lambda})^*
= 2 I - S(e^{i \pi} \lambda)$.
This result can also be seen from (\ref{eq:fmfirst}). If $\lambda \in \Lambda_m$
is a resonance, then $\tilde{\lambda} = e^{-i m \pi} \lambda \in \Lambda_0$ satisfies
$f_m(\tilde{\lambda})
=0$. For the symmetric point $\lambda_S = - \overline{\lambda} \in \Lambda_{-m}$, we
set $\tilde{\lambda_S} = e^{i m \pi} \lambda_S \in \Lambda_0$.
Using the identity $T (- \overline{\lambda}) = - \overline{T(\lambda)}$ for $0 < \arg \lambda < \pi$,
it easily follows that for $V$ real, $f_{-m}(\tilde{\lambda_S}) = 0$.
Note that in odd dimensions and $V$ real, the scattering poles in the lower-half complex
plane are symmetric about the imaginary axis as follows from the identity $S(\overline{\lambda})^*
= S(-\lambda)$.

We now turn to some preliminary analysis
of the cut-off resolvent and the operator $T(\lambda)$ defined in (\ref{Top1}).
We will always work with a fixed, but arbitrary, compact subset $K \subset \R^d$,
with nonempty interior,
and potentials $V \in L_0^\infty ( K ; F)$, for $F=\R$ or $F=\C$.
Our cut-off functions $\chi \in C_0^\infty ( \R^d)$ satisfy $\chi = 1$ on $K$
so $\chi V = V$ for all $V \in L_0^\infty (K; F)$.
Recall the notation $\langle x \rangle = ( 1 + |x|^2)^{1/2}$.
We begin with $L^2$-estimates.

\begin{lemma}\label{l:basics}
Suppose $\chi\in C_0^{\infty}(\Real^d)$ and $|\Im \lambda |\leq \alpha$, for any $\alpha > 0$.
Then,
\beq\label{eq:t-est1}
\|\chi T(\lambda)\chi\|_{L^2\rightarrow L^2}
\leq C\langle \lambda \rangle^{-1},
\eeq
where the constant $C$ depends on $\alpha$ and $\chi$.
Moreover, if $V\in L^{\infty}_{0}(K; F)$,
then there
is a constant $c_0$ such that
\begin{equation}\label{eq:invbound}
\|(I+VR_0(\lambda)\chi)^{-1}\|_{L^2 \rightarrow L^2}\leq 2
\end{equation}
when $-\pi/2 <\arg \lambda < 3\pi/ 2$, $|\Im \lambda |
\leq \alpha$, and $|\lambda|\geq c_0 \langle \|V\|_{L^{\infty}}\rangle$.
Moreover, for $\lambda$ in this same set,
$$\|I- (I+VR_0(\lambda)\chi)^{-1}\|_{L^2\rightarrow L^2}
\leq C \langle \|V\|_{L^{\infty}}\rangle \langle \lambda \rangle ^{-1}.$$
In particular, the bound (\ref{eq:invbound}) holds on $\Lambda_0$
when $|\lambda|> c_0\langle \|V\|_{L^{\infty}}\rangle.$  The constants
$C$ and $c_0$ depend on $\alpha$ and $\chi$.
\end{lemma}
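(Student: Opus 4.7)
The plan is to establish (\ref{eq:t-est1}) first, then deduce the uniform bound $\|\chi R_0(\lambda)\chi\|_{L^2\to L^2}\leq C\langle\lambda\rangle^{-1}$ throughout the indicated region, and finally obtain the estimates on $(I + VR_0(\lambda)\chi)^{-1}$ via a Neumann series.

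For the $T$-estimate, I would use that the kernel formula (\ref{Top1}) factors as
\[
T(\lambda)=c_d\,\lambda^{d-2}\, B_\lambda^{*}B_\lambda,\qquad B_\lambda\colon L^2(\Real^d)\to L^2(\Sphere^{d-1}),\quad (B_\lambda f)(\omega)=\hat{f}(\lambda\omega),
\]
so that $\chi T(\lambda)\chi=c_d\lambda^{d-2}(B_\lambda\chi)^{*}(B_\lambda\chi)$ and it is enough to show the trace-type restriction bound
\[
\|B_\lambda\chi\|_{L^2\to L^2(\Sphere^{d-1})}^{2}\leq C\langle\lambda\rangle^{-(d-1)},\qquad |\Im\lambda|\leq\alpha.
\]
For real $\lambda$ this is a standard $L^2$-restriction inequality for the Paley--Wiener function $\widehat{\chi f}$, which is entire of exponential type because $\chi f$ is compactly supported. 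The extension to the strip $|\Im\lambda|\leq\alpha$ can be obtained either from a Cauchy-formula argument in $\lambda$ at fixed $\omega$, or from the uniform Bessel-function bounds recalled in Section~\ref{app-bounds} applied to the closed-form identity $\int_{\Sphere^{d-1}}e^{i\lambda z\cdot\omega}\,d\omega = (2\pi)^{d/2}(\lambda|z|)^{-(d-2)/2}J_{(d-2)/2}(\lambda|z|)$ for the kernel of $B_\lambda^{*}B_\lambda$. Multiplying by $|\lambda|^{d-2}$ then gives (\ref{eq:t-est1}).

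Next, the standard limiting absorption principle for $-\Delta$ on $L^2(\Real^d)$ yields $\|\chi R_0(\lambda)\chi\|_{L^2\to L^2}\leq C\langle\lambda\rangle^{-1}$ uniformly for $0\leq\Im\lambda\leq\alpha$, i.e.\ on $\Lambda_0$ intersected with the strip. To push the bound to the portions of $\Lambda_{\pm1}$ lying in the sector $-\pi/2<\arg\lambda<3\pi/2$ with $|\Im\lambda|\leq\alpha$, I invoke (\ref{eq:difference}): for $\mu$ with $-\pi<\arg\mu<0$, set $\lambda'=e^{i\pi}\mu\in\Lambda_0$; then $R_0(\mu)=R_0(\lambda')+T(\lambda')$, and combining the $\Lambda_0$ bound with the already established (\ref{eq:t-est1}) produces $\|\chi R_0(\mu)\chi\|_{L^2\to L^2}\leq C\langle\mu\rangle^{-1}$. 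The same argument with $m=+1$ treats the portion of $\Lambda_1$ in the sector, yielding the uniform resolvent bound throughout the whole region.

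Finally, with $A(\lambda):=VR_0(\lambda)\chi$, the resolvent bound gives $\|A(\lambda)\|_{L^2\to L^2}\leq\|V\|_{L^\infty}\|\chi R_0(\lambda)\chi\|_{L^2\to L^2}\leq C\|V\|_{L^\infty}\langle\lambda\rangle^{-1}\leq 1/2$ once $|\lambda|\geq c_0\langle\|V\|_{L^\infty}\rangle$ with $c_0=c_0(\alpha,\chi)$ chosen large enough. A Neumann series then gives $\|(I+A(\lambda))^{-1}\|\leq 2$, proving (\ref{eq:invbound}); the identity $I-(I+A)^{-1}=(I+A)^{-1}A$ together with the above norm bounds then delivers the final remainder estimate. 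The chief technical point is the sharp trace estimate in the first step: a naive Schur test applied to the Bessel kernel of $T(\lambda)$ only gives $\|\chi T(\lambda)\chi\|\lesssim\langle\lambda\rangle^{(d-3)/2}$, so the oscillation in $J_{(d-2)/2}(\lambda|x-y|)$ --- equivalently the $L^2$-restriction phenomenon for Paley--Wiener functions --- must be genuinely exploited in order to recover the sharp $\langle\lambda\rangle^{-(d-1)}$ factor. Once that is in hand, everything else is routine analytic continuation and Neumann series.
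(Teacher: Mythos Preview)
Your proposal is correct and follows essentially the same route as the paper. The paper simply cites Zworski's notes and Burq \cite{[burq1]} for the estimate (\ref{eq:t-est1}), whereas you sketch its content via the factorization through the restriction operator $B_\lambda$ and the $L^2$-trace bound for Paley--Wiener functions; the remaining steps --- the resolvent bound on $\Lambda_0$, extension across the real axis via (\ref{eq:difference}) combined with (\ref{eq:t-est1}), and the Neumann-series conclusion --- are identical to the paper's argument.
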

In particular, (\ref{eq:invbound}) means that for
$|\lambda|\geq c_0 \langle \|V\|_{L^{\infty}}\rangle$ and satisfying the
other conditions, $I+VR_0(\lambda)\chi$ is invertible on $L^2$.
\begin{proof}
The first claim is proved in \cite{[zworski]} and published in
\cite[section 2, page 255]{[burq1]}.
Using that $\|\chi R_0(\lambda)\chi\|_{L^2\rightarrow L^2}=
\mathcal{O}(\langle \lambda \rangle^{-1})$ when $0<\arg \lambda <\pi$, we have
from
(\ref{eq:difference}) and (\ref{eq:t-est1}) that
\begin{equation}\label{eq:R0bound}
\|\chi R_0(\lambda)\chi\|_{L^2\rightarrow L^2}
= \mathcal{O}(\langle \lambda \rangle^{-1})
\end{equation} for
$- \pi/2 <\arg \lambda < 3\pi / 2$, $|\Im \lambda |
\leq \alpha$.  Then, there is a $c_0>0$ so that when
$|\lambda|\geq c_0 \langle \|V\|_{L^{\infty}}\rangle$, and
$- \pi/2 <\arg \lambda < 3\pi /2$, $|\Im \lambda |\leq \alpha$,
$\|V R_0(\lambda)\chi\|_{L^2\rightarrow L^2} \leq 1/2.$
For such values of $\lambda$, $\|(I+VR_0(\lambda)\chi)^{-1}\|_{L^2 \rightarrow L^2}\leq 2$.
The last claim follows from the series representation of $(I+VR_0(\lambda)
\chi)^{-1}$, using (\ref{eq:R0bound}).
\end{proof}

We now turn to trace norm estimates.
We denote by $\| \cdot \|_1$ the trace norm and by $\| \cdot  \|_2$ the Hilbert-Schmidt norm.

\begin{lemma}\label{l:fmbasic}
Suppose $V\in L^{\infty}_0 (K; F)$.
Then for $-\pi/4\leq \arg \lambda \leq 5\pi/4$, $|\Im \lambda|\leq
2$, $|\lambda|>c_0\langle \|V\|_{L^{\infty}}\rangle$, there is a $C>0$
so that
$$\| VT(\lambda)\chi\|_1\leq C\langle \|V\|_{L^{\infty}}\rangle
\langle \lambda \rangle^{d-2}$$
and
$$\| \frac{d}{d\lambda}VT(\lambda)\chi\|_1\leq C\langle \|V\|_{L^{\infty}}\rangle
\langle \lambda \rangle^{d-2}.$$
Moreover, there is a $c_{0,m}$ so that if
$|\lambda|>c_{0,m}\langle \|V\|_{L^{\infty}}\rangle$,
and $\lambda$ is in the same set as above,
\beq\label{Fbdd1}
\|(I-m (I+VR_0(\lambda)\chi)^{-1}V T(\lambda)\chi)^{-1}\|
_{L^2\rightarrow L^2}
\leq 2.
\eeq
\end{lemma}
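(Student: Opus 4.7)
The plan is to establish the three claims in order, with the first two relying on the plane-wave representation of the kernel of $T(\lambda)$ in (\ref{Top1}), and the third following from Lemma \ref{l:basics} together with a Neumann series.

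For the trace norm bound on $VT(\lambda)\chi$, I rewrite the operator using (\ref{Top1}) as the trace-class valued integral
\[
VT(\lambda)\chi = i\pi(2\pi)^{-d}\lambda^{d-2}\int_{\Sphere^{d-1}} A_\omega(\lambda)\,d\omega,
\]
where $A_\omega(\lambda)$ is the rank-one operator with integral kernel $V(x)e^{i\lambda x\cdot\omega}\cdot\chi(y)e^{-i\lambda y\cdot\omega}$. Since $V$ is supported in the compact set $K$, $\chi$ is compactly supported, and $|\Im\lambda|\leq 2$, each factor $e^{\pm i\lambda x\cdot\omega}$ has modulus bounded uniformly in $\omega\in\Sphere^{d-1}$ by a constant depending only on $K$, $\supp\chi$, and the bound on $|\Im\lambda|$. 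Hence the two factors of the kernel have $L^2$ norms bounded by $C\|V\|_{L^\infty}$ and by an absolute constant, respectively. Since the trace norm of a rank-one operator equals the product of the $L^2$ norms of its factors, $\|A_\omega(\lambda)\|_1\leq C\|V\|_{L^\infty}$ uniformly. Integrating over the bounded-measure sphere and using $|\lambda|^{d-2}\leq\langle\lambda\rangle^{d-2}$ and $\|V\|_{L^\infty}\leq\langle\|V\|_{L^\infty}\rangle$ yields the first estimate.

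For the derivative bound, differentiating under the sphere integral produces one term with prefactor $(d-2)\lambda^{d-3}$ and the same kernel, and one with prefactor $\lambda^{d-2}$ carrying an extra factor $i(x-y)\cdot\omega$ in the kernel. Since $(x-y)\cdot\omega$ is uniformly bounded on $K\times\supp\chi\times\Sphere^{d-1}$, the same rank-one decomposition applies to each piece and delivers a trace norm of order $\langle\lambda\rangle^{d-2}$ (absorbing the smaller $\lambda^{d-3}$ term).

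For the inverse bound (\ref{Fbdd1}), I combine (\ref{eq:t-est1}) with the identity $V=V\chi$ to obtain $\|VT(\lambda)\chi\|_{L^2\to L^2}\leq C\|V\|_{L^\infty}\langle\lambda\rangle^{-1}$, and then use $\|(I+VR_0(\lambda)\chi)^{-1}\|_{L^2\to L^2}\leq 2$ from Lemma \ref{l:basics} to estimate
\[
\|m(I+VR_0(\lambda)\chi)^{-1}VT(\lambda)\chi\|_{L^2\to L^2}\leq 2|m|C\|V\|_{L^\infty}\langle\lambda\rangle^{-1}.
\]
Choosing $c_{0,m}$ large enough that the right-hand side is at most $1/2$ whenever $|\lambda|\geq c_{0,m}\langle\|V\|_{L^\infty}\rangle$, the Neumann series for the operator in (\ref{Fbdd1}) converges, with inverse of norm at most $2$. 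The most delicate point is the rank-one decomposition used in the trace norm estimates, where one must interpret the sphere integral as an operator-valued integral in the trace-class topology and check the uniform trace-class bound on $A_\omega(\lambda)$; everything else reduces to prior estimates plus the Neumann series.
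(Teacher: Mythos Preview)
Your proof is correct and follows essentially the same route as the paper: the paper factors $VT(\lambda)\chi$ through $L^2(\Sphere^{d-1})$ as $c\lambda^{d-2}V(\bbE^{\chi}(\lambda))^t\bbE^{\chi}(-\lambda)$ and uses $\|AB\|_1\leq\|A\|_2\|B\|_2$, which is exactly your rank-one-in-$\omega$ decomposition repackaged as a product of Hilbert--Schmidt operators, and the argument for (\ref{Fbdd1}) is identical. One small imprecision: in the derivative estimate the extra factor $(x-y)\cdot\omega$ does not leave the integrand rank-one, but splitting it as $x\cdot\omega - y\cdot\omega$ gives a sum of two rank-one kernels and your bound goes through.
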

\begin{proof}
We recall that
$\|AB\|_1 \leq \|A\|_2 \|B\|_2$,
and that the cut-off operator $T(\lambda)$ has a factorization
as
\begin{align*}
V T(\lambda ) \chi  = V\chi T(\lambda)\chi  = i \pi (2 \pi)^{-d} \lambda^{d-2}
V (\bbE^{\chi}(\lambda))^t \bbE^{\chi}(-\lambda)\\
\bbE^{\chi}(\lambda):L^2(\Real^d)\rightarrow L^2(\Sphere^{d-1}),\;
\bbE^{\chi}(\lambda,\omega, x)=\exp(-i\lambda x\cdot \omega )\chi(x).
\end{align*}
Since
$$\|\bbE^{\chi}(\lambda)\|^2_2
= \int\int  |\bbE^{\chi}(\lambda,\omega, x)|^2 d\omega dx
\leq C e^{C|Im \lambda|}$$
we obtain that
$$\| V T(\lambda ) \chi\|_1 \leq C \langle \|V\|_{L^{\infty}}\|\rangle
~|\lambda|^{d-2}e^{C|Im \lambda|}$$
with perhaps a different constant $C$.  A similar proof gives the estimate
for the derivative.
Using Lemma \ref{l:basics}, we can find a $c_{0,m}$ so that for
$|\lambda|>c_{0,m}\langle \|V\|_{L^{\infty}}\rangle$,
$$\|m (I+VR_0(\lambda)\chi)^{-1}V T(\lambda)\chi\|_{L^2\rightarrow L^2}
\leq 1/2.$$
Then
$$\|(I-m (I+VR_0(\lambda)\chi)^{-1}V T(\lambda)\chi)^{-1}\|
_{L^2\rightarrow L^2}
\leq 2.$$
\end{proof}

We also need the following result on the function $f_m(\lambda)$
defined in (\ref{eq:fmfirst}).

\begin{lemma}\label{l:fmsimilar}
Suppose $V\in L^{\infty}_0 (K;F)$.
Then there is a constant $c_m>0$ so
that if $\lambda$ lies in the set
$$\{ \lambda\in \Lambda : |\Im  \lambda| \leq 2 ,\;
\; | \arg \lambda | <\pi/4,
\; |\lambda|\geq c_m\langle \|V\|_{L^{\infty}}\rangle\}$$
then
$$ |f_m(\lambda e^{i\pi})/f_m(\lambda)|
\leq \exp(C \langle \|V\|_{L^{\infty}} \rangle \langle \lambda \rangle^{d-2})$$
and
$$|f_m(\lambda )/f_m(\lambda e^{i\pi})| \leq
\exp(C \langle \|V\|_{L^{\infty}} \rangle \langle \lambda \rangle^{d-2})$$
for some $C>0$.  The constants
$c_m$ and  $C$ are independent of $V$, but they depend on
the set $K$, $\chi$, and $m$.
\end{lemma}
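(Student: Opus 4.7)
The plan is to reduce the comparison of $f_m$ at $\lambda$ and at $\lambda e^{i\pi}$ to bounds on a single family of determinants evaluated at $\lambda$. The key elementary observation is that since $d$ is even, the kernel (\ref{Top1}) satisfies $T(\lambda e^{i\pi}) = T(\lambda)$: the prefactor $\lambda^{d-2}$ is invariant under $\lambda \mapsto -\lambda$ because $d-2$ is even, and the sphere integral is invariant under $\omega \mapsto -\omega$. Combined with the identity (\ref{eq:difference}), this gives, with the shorthand $P = I + VR_0(\lambda)\chi$, $Q = VT(\lambda)\chi$ and $A = P^{-1}Q$,
\[
I + VR_0(\lambda e^{i\pi})\chi = P - Q = P(I-A), \qquad VT(\lambda e^{i\pi})\chi = Q,
\]
so that $(I + VR_0(\lambda e^{i\pi})\chi)^{-1}VT(\lambda e^{i\pi})\chi = (I-A)^{-1}A$. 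The algebraic identity $(I-A)\bigl[I - m(I-A)^{-1}A\bigr] = I - (m+1)A$ then gives the multiplicative relation
\[
f_m(\lambda e^{i\pi}) \;=\; \frac{\det(I - (m+1)A)}{\det(I-A)} \;=\; \frac{f_{m+1}(\lambda)}{f_1(\lambda)}.
\]

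With this identity in hand, both desired inequalities reduce to upper and lower bounds on the three determinants $f_k(\lambda) = \det(I - kA(\lambda))$ for $k \in \{1, m, m+1\}$. From Lemma \ref{l:basics}, $\|P^{-1}\|_{L^2 \to L^2} \leq 2$, and from Lemma \ref{l:fmbasic}, $\|Q\|_1 \leq C\langle\|V\|_{L^\infty}\rangle\langle\lambda\rangle^{d-2}$, so $\|A\|_1 \leq 2C\langle\|V\|_{L^\infty}\rangle\langle\lambda\rangle^{d-2}$. The standard trace-norm bound $|\det(I+B)| \leq e^{\|B\|_1}$ gives $|f_k(\lambda)| \leq e^{|k|\,\|A\|_1}$. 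For the reciprocal, the estimate (\ref{Fbdd1}), applied for $k \in \{1, m, m+1\}$ with $c_m$ chosen large enough to accommodate all three simultaneously, yields $\|(I - kA)^{-1}\|_{L^2 \to L^2} \leq 2$; writing $(I - kA)^{-1} = I + kA(I - kA)^{-1}$ then gives $|f_k(\lambda)|^{-1} \leq e^{2|k|\,\|A\|_1}$.

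Combining these with the ratio identity yields
\[
|f_m(\lambda e^{i\pi})/f_m(\lambda)| \leq \exp\bigl((|m+1| + 2 + 2|m|)\,\|A\|_1\bigr),
\]
which has the desired form $\exp(C\langle\|V\|_{L^\infty}\rangle\langle\lambda\rangle^{d-2})$; the reciprocal $|f_m(\lambda)/f_m(\lambda e^{i\pi})|$ is handled symmetrically from the same identity by exchanging the roles of the upper bounds on $|f_{m+1}|$ and of the lower bounds on $|f_1 f_m|$. There is no deep analytic obstacle. The two care points are (i) verifying that the hypotheses $|\arg\lambda| < \pi/4$, $|\Im\lambda| \leq 2$ keep both $\lambda$ and $\lambda e^{i\pi}$ inside the sector where Lemmas \ref{l:basics} and \ref{l:fmbasic} apply, and (ii) selecting $c_m$ large enough that the invertibility bound (\ref{Fbdd1}) is valid uniformly for $k \in \{1, m, m+1\}$ so that the denominators $f_1(\lambda)$ and $f_m(\lambda)$ are actually nonzero in the region considered.
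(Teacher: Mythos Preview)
Your argument is correct and takes a somewhat different route from the paper's. The paper uses the generic factorization
\[
f_m(\lambda e^{i\pi}) = f_m(\lambda)\,\det\Bigl(I + m\bigl(I - mF_V(\lambda)VT(\lambda)\chi\bigr)^{-1}\bigl(F_V(\lambda)VT(\lambda)\chi - F_V(\lambda e^{i\pi})VT(\lambda e^{i\pi})\chi\bigr)\Bigr),
\]
and then bounds the correction determinant directly via $|\det(I+A)|\le e^{\|A\|_1}$, estimating both $F_V(\lambda)VT(\lambda)\chi$ and $F_V(\lambda e^{i\pi})VT(\lambda e^{i\pi})\chi$ separately in trace norm; this does not use the parity of $T$. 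You instead exploit the even-dimensional identity $T(\lambda e^{i\pi})=T(\lambda)$ to obtain the exact multiplicative relation $f_m(\lambda e^{i\pi}) = f_{m+1}(\lambda)/f_1(\lambda)$, reducing everything to bounds on $f_k(\lambda)$ for three fixed values of $k$ at the single point $\lambda$. Your approach is a bit cleaner and reveals an algebraic structure the paper's proof does not make explicit; the paper's approach has the minor advantage of not depending on the parity observation and of producing the ratio bound in one step rather than via three separate upper/lower bounds. Both rely on exactly the same inputs (Lemmas~\ref{l:basics} and~\ref{l:fmbasic} and the inequality $|\det(I+A)|\le e^{\|A\|_1}$), and your care points (i) and (ii) correctly identify what needs to be checked---in particular, once the identity is established, the only estimates you actually use are at $\lambda$, so the role of (i) is just to guarantee that $I+VR_0(\lambda e^{i\pi})\chi$ is invertible so that the derivation of the identity goes through.
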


\begin{proof}
We shall write
$$F_V(\lambda)=(I+VR_0(\lambda)\chi)^{-1}$$
to make the equations less cumbersome.
Note that
\begin{align*}
f_m(e^{i\pi} \lambda)& =
\det(I -m  (I+VR_0(e^{i\pi}\lambda)\chi)^{-1}V T(\lambda e^{i\pi})\chi)\\
& = \det (I-mF_V(e^{i\pi}\lambda)VT(e^{i\pi}\lambda)\chi)\\
&
= \det (I-mF_V(\lambda)VT(\lambda)\chi) \\ & \hspace{3mm}\times
\det \left( I+m(I-mF_V(\lambda)VT(\lambda)\chi)^{-1}
\left(F_V(\lambda)VT(\lambda)
-F_V(\lambda e^{i\pi})VT(\lambda e^{i\pi})\right)\chi\right)
\\ & =  f_m(\lambda)\det \left( I+m(I-mF_V(\lambda)VT(\lambda)\chi)^{-1}\left(
F_V(\lambda)VT(\lambda)
-F_V(\lambda e^{i\pi})VT(\lambda e^{i\pi})
\right)\chi\right).
\end{align*}
First, by Lemma \ref{l:basics} (\ref{eq:invbound}), we have for $j=0$ or $j=1$,
$\| F_V(e^{ij \pi} \lambda)\|_{L^2\rightarrow L^2} \leq 2$.
Second, by Lemma \ref{l:fmbasic} (\ref{Fbdd1}), there is a $c_{0,m}>0$ so that
when $|\lambda|\geq c_{0,m} \langle \|V\|_{L^{\infty}} \rangle$,
$\|(I-mF_V(\lambda)VT(\lambda)\chi)^{-1}\|_{L^2\rightarrow L^2}\leq 2.$
Third, by Lemma \ref{l:fmbasic}, for $j=0$ or $j=1$,
$\|VT(\lambda e^{ij\pi })\chi\|_1\leq C \|V\|_{L^{\infty}}\langle
\lambda\rangle^{d-2}$.  Finally, using $|\det(1+A)|\leq \exp(\|A\|_1)$,
we find that $|f_m(e^{i\pi} \lambda)/f_m(\lambda)| \leq
\exp( C\langle\|V\|_{L^{\infty}}\rangle \langle \lambda \rangle ^{d-2})$
for some constant $C>0$.
A similar technique gives the result for
$|f_m(\lambda)/f_m(e^{i\pi}\lambda)|$.
\end{proof}

The following lemma will be used in proving that a function we construct
is plurisubharmonic.
\begin{lemma}\label{l:fmint}
If $| \Im u| \leq 1$, $|  \arg u | \leq \pi/4$,
$|u|\geq c_m\langle \|V\|_{L^{\infty}}\rangle$, then
$$\left|\int_0^{\pi}\log|f_m(ue^{i\theta})|d\theta
- \int_0^{\pi}\log|f_m(|u|e^{i\theta})|d\theta\right|
\leq C \langle\|V\|_{L^{\infty}}\rangle ^2\langle u\rangle^{d-3}.
$$
\end{lemma}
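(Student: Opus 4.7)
The plan is to rewrite the difference of the two semicircular integrals as an integral of $\log|f_m(\lambda e^{i\pi})/f_m(\lambda)|$ over a short arc, and then to apply Lemma \ref{l:fmsimilar} pointwise along that arc. The gain of one power of $\langle u\rangle$ over the bound provided by Lemma \ref{l:fmsimilar} will come from the observation that the combined constraints $|\Im u|\leq 1$ and $|u|\geq c_m\langle\|V\|_{L^\infty}\rangle$ force $\phi:=\arg u$ to satisfy $|\phi|=O(|u|^{-1})$, whereas the a priori bound merely gives $|\phi|\leq \pi/4$.

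First I would change variables. Writing $u=|u|e^{i\phi}$ and substituting $\psi=\theta+\phi$ in the first integral gives $\int_0^\pi \log|f_m(ue^{i\theta})|\,d\theta = \int_\phi^{\pi+\phi}\log|f_m(|u|e^{i\psi})|\,d\psi$. Setting $h(\psi):=\log|f_m(|u|e^{i\psi})|$, which is locally integrable since the zeros of $f_m$ are isolated of finite multiplicity, and cancelling the overlap with $\int_0^\pi h\,d\psi$, the difference collapses to $\int_0^\phi \bigl[h(\pi+s)-h(s)\bigr]\,ds$.

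Second, I would verify the hypotheses of Lemma \ref{l:fmsimilar} at each $\lambda=|u|e^{is}$ with $s$ between $0$ and $\phi$: one has $|\arg\lambda|\leq|\phi|\leq\pi/4$, $|\lambda|=|u|\geq c_m\langle\|V\|_{L^\infty}\rangle$, and, crucially, $|\Im\lambda|=|u||\sin s|\leq|u||\sin\phi|=|\Im u|\leq 1\leq 2$. Lemma \ref{l:fmsimilar} then yields the pointwise bound $|h(\pi+s)-h(s)|=\bigl|\log|f_m(\lambda e^{i\pi})/f_m(\lambda)|\bigr|\leq C\langle\|V\|_{L^\infty}\rangle\langle u\rangle^{d-2}$.

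Finally I would exploit the smallness of $\phi$. From $\Im u=|u|\sin\phi$ one gets $|\sin\phi|\leq|u|^{-1}$, which combined with $|\phi|\leq\pi/4$ gives $|\phi|\leq C|u|^{-1}$. Multiplying the length of the $s$-interval by the pointwise bound produces $C\langle\|V\|_{L^\infty}\rangle\langle u\rangle^{d-3}$, which implies the stated bound since $\langle\|V\|_{L^\infty}\rangle\leq\langle\|V\|_{L^\infty}\rangle^2$. The only nonroutine point is the imaginary-part computation showing that the short arc $\{|u|e^{is}:0\leq s\leq\phi\}$ lies inside the near-real-axis strip $|\Im\lambda|\leq 2$ where Lemma \ref{l:fmsimilar} is available; once this is checked, the rest is bookkeeping.
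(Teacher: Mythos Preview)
Your proposal is correct and follows essentially the same approach as the paper: both write $u=|u|e^{i\phi}$, shift the integration variable to reduce the difference to $\int_0^{\phi}\log\bigl|f_m(|u|e^{i(\theta+\pi)})/f_m(|u|e^{i\theta})\bigr|\,d\theta$, apply Lemma~\ref{l:fmsimilar} pointwise, and then use $|\phi|\leq C|u|^{-1}$ to gain the extra power. Your write-up is in fact a bit more careful than the paper's in explicitly checking that the arc $\{|u|e^{is}:s\text{ between }0\text{ and }\phi\}$ stays in the strip $|\Im\lambda|\leq 2$ required by Lemma~\ref{l:fmsimilar}, and in noting that $\langle\|V\|_{L^\infty}\rangle\leq\langle\|V\|_{L^\infty}\rangle^2$ to match the stated exponent.
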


\begin{proof}  We write $u=|u|e^{i\phi}$.  Then
\begin{align*} &
\int_0^{\pi}\log|f_m(ue^{i\theta})|d\theta \\ &
= \int_0^{\pi}\log|f_m(|u|e^{i\theta+i\phi})|d\theta \\
& = \int_0^{\pi}\log|f_m(|u|e^{i\theta})|d\theta
+\int_{\pi}^{\pi+\phi}\log|f_m(|u|e^{i\theta})|d\theta
- \int _{0}^{\phi}\log|f_m(|u|e^{i\theta})|d\theta.
\end{align*}
But
$$
\int_{\pi}^{\pi+\phi}\log|f_m(|u|e^{i\theta})|d\theta
- \int _{0}^{\phi}\log|f_m(|u|e^{i\theta})|d\theta
=  \int_0^{\phi}\log \left|
\frac {f_m(|u|e^{i(\theta+\pi)})}{f_m(|u|e^{i\theta})}
\right| d\theta$$
Using Lemma \ref{l:fmsimilar}, when $|u|$ is sufficiently large,
\begin{align*}
 \sign (\phi) \int_0^{\phi}\log \left|
\frac {f_m(|u|e^{i(\theta+\pi)})}{f_m(|u|e^{i\theta})}\right|
d\theta  & \leq C|\phi|\langle \|V\|_{L^{\infty}}\rangle (1+|u|^{d-2})
\\ &
\leq C |u|^{-1}\langle \|V\|_{L^{\infty}}\rangle (1+|u|^{d-2})
\end{align*}
where we use the fact that $|\phi|\leq C |u|^{-1}$ when $|u|$ is large.
We may also write
\begin{align*} &
\sign (\phi)\left( \int_{\pi}^{\pi+\phi}\log|f_m(|u|e^{i\theta})|d\theta
- \int _{0}^{\phi}\log|f_m(|u|e^{i\theta})|d\theta \right)
\\ &
= -\sign (\phi) \int_0^{\phi}\log \left|
\frac {f_m(|u|e^{i\theta})}{f_m(|u|e^{i(\theta+\pi)})}\right|
d\theta  \\ &
\geq -C|\phi|\langle \|V\|_{L^{\infty}}\rangle  \langle u\rangle ^{d-2}
\\ &
\geq -C|u|^{-1} \langle |V\|_{L^{\infty}}\rangle \langle u \rangle ^{d-2}
\end{align*}
again using Lemma \ref{l:fmsimilar}.
This finishes the proof of the lemma.
\end{proof}


\section{Order of Growth and Plurisubharmonic Functions}\label{complexanalysis}

We establish the main technical results in this section. It is the analog of section 2 of \cite{polar}
for the even-dimensional case. In the first subsection, we review the notion of
plurisubharmonic functions, pluripolar sets, and order,
and prove
Lemma \ref{l:horder}.
In the second subsection,
we construct a plurisubharmonic function associated with $f_m ( \lambda)$ to which we
will apply this lemma in order to estimate the order of growth of the resonance counting function.
The main result is Theorem \ref{main-psh1}.


\subsection{Review of some complex analysis}

For the convenience of the reader, we recall some basic notions used in
\cite{polar} that can be found in the book \cite{l-g}.
A domain $\Omega \subset \C^m$ is an open connected set.

\begin{defin}
A real-valued function $\phi (z)$ taking values in $[ -\infty, \infty)$ is
{\em plurisubharmonic} in
a domain $\Omega \subset \C^m$, and we write $\phi \in PSH ( \Omega)$, if:
\begin{itemize}
\item $\phi$ is upper semicontinuous and $\phi \not\equiv - \infty$;
\item for every $z \in \Omega$, for every $w \in \C^m$,  and every $r > 0$ such that
$\{ z + u w : |u| \leq r, u \in \C \}
\subset \Omega$
we have
\beq\label{subMV1}
\phi (z) \leq \frac{1}{2 \pi} \int_0^{2 \pi} ~\phi ( z + r e^{i \theta}w ) ~d \theta .
\eeq
\end{itemize}
\end{defin}

A function $\phi$ on $\Omega \subset \C^m$ is {\it locally plurisubharmonic}
on $\Omega$ if $\phi$ is upper semicontinuous and $\phi \not\equiv - \infty$ and
if for each $z \in \Omega$ there is a radius $b(z) > 0$ such that for all $w \in \C^m$
so that $|w| < b(z)$, $z + w e^{i \theta} \in \Omega$ and
\beq\label{locsubMV1}
\phi (z) \leq \frac{1}{2 \pi} \int_0^{2 \pi} ~\phi ( z + w e^{i \theta} ) ~d \theta .
\eeq
We recall that every locally plurisubharmonic function on a domain $\Omega$ is in $PSH (\Omega)$
\cite[Proposition I.19]{l-g}.

\begin{defin}\label{pluripolar1}
A set $E \subset \C^m$ is {\em pluripolar} if for each $a \in E$ there is a neighborhood
$V_a$ containing $a$ and a function $\phi_a \in PSH(V_a)$ such that $E \cap V_a \subset \{
z \in V_a : \phi_a (z) = - \infty \}$.
\end{defin}

\begin{defin}\label{defn-order}
For $r>0$, let $s(r)>0$ be a monotone increasing function of $r$.  If
$$\lim \sup _{r\rightarrow \infty}\frac{\log s(r)}{\log r}= \mu <\infty,$$
then $s(r)$ is said to be of {\em order} $\mu$.
\end{defin}

For a function $h$ holomorphic in
\begin{equation}\label{eq:hholo}
\{ \lambda \in\Complex:\; |\lambda|\geq R,\; \Im \lambda \geq 0\}
\end{equation}
and $r>R$,
define $n_{+,R}(h,r)$ to be the number of zeros of $h$, counted with multiplicities, in the closed
upper half plane with norm between $R$ and $r$, inclusive.
\begin{lemma}\label{l:horder}
Let $R>0$, let $h$ be holomorphic in the set in (\ref{eq:hholo}),
and suppose, in addition, that $h$ has only finitely many
zeros on the real axis.
Suppose that for some $p>0$, and for some $\epsilon>0$, we have
$$
\int_R^{r}\left|\frac{h'(s)}{h(s)}\right| ds
= \mathcal{O}(r^{p-\epsilon})\; \text{and}\;
\int_{-r}^{-R}\left|\frac{h'(s)}{h(s)}\right| ds
= \mathcal{O}(r^{p-\epsilon}).
$$
Then $n_{+,R}(h,r)$ has order $p$ if and only if
$$\lim \sup _{r\rightarrow \infty}
\frac{\log \int_0^{\pi}\log|h(re^{i\theta})|d\theta}{\log r}=p.$$
\end{lemma}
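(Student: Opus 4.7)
The plan is to derive the Jensen-type identity
\begin{equation*}
I(r) \equiv \int_0^{\pi} \log|h(re^{i\theta})|\, d\theta = 2\pi N(r) + O(r^{p-\epsilon}) + O(1),
\end{equation*}
where $N(r) := \int_R^r n_{+,R}(h,s)/s\, ds$, via the argument principle applied to $h$ on the boundary of the upper half-annulus $\{R \leq |\lambda| \leq r,\; \Im \lambda \geq 0\}$. Once this identity is in hand, the equivalence of orders follows because the error term has order strictly less than $p$.

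To derive the identity, fix $r > R$ at which $h$ has no zeros on $|\lambda|=r$ (true for a.e.\ $r$) and apply the argument principle to $h$ on the positively oriented boundary $\Gamma_{R,r}$, consisting of $[R,r]$, the outer upper semicircle $\gamma_r$, $[-r,-R]$, and the inner upper semicircle $\gamma_R$ (traversed clockwise). The finitely many real zeros of $h$ are handled by small semicircular indentations, contributing a bounded term. On $\gamma_r$, the Cauchy--Riemann equations in polar form give $\partial_\theta \arg h(re^{i\theta}) = r\,\partial_r \log|h(re^{i\theta})|$, hence $\Delta_{\gamma_r} \arg h = r I'(r)$. On each real segment,
\begin{equation*}
|\Delta \arg h| = \left|\Im \int \frac{h'(s)}{h(s)}\, ds\right| \leq \int \left|\frac{h'(s)}{h(s)}\right|\, ds = O(r^{p-\epsilon})
\end{equation*}
by hypothesis, while $\Delta_{\gamma_R} \arg h = O(1)$ since $R$ is fixed. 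The argument principle then yields $2\pi n_{+,R}(h,r) = r I'(r) + O(r^{p-\epsilon}) + O(1)$; dividing by $r$ and integrating from $R$ to $r$ produces the identity above.

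To conclude, the standard comparisons $N(r) \leq n_{+,R}(h,r)\log(r/R)$ and $n_{+,R}(h,r) \leq N(2r)/\log 2$ show that $N$ and $n_{+,R}(h,\cdot)$ have the same order. Combined with the identity $I(r) = 2\pi N(r) + O(r^{p-\epsilon})$ and the fact that $p-\epsilon < p$, this gives both directions of the equivalence: if $n_{+,R}$ has order $p$, then $2\pi N(r)$ dominates the error and $\limsup \log I(r)/\log r = p$; conversely, order-$p$ growth of $I$ forces the same for $N$, since the error is of strictly lower order. The main obstacle will be the careful setup of the contour so that $\arg h$ is continuously defined along $\Gamma_{R,r}$ (requiring a generic choice of $r$ to avoid zeros on the outer semicircle, and small indentations around the finitely many real zeros) together with bookkeeping of orientations; once the polar Cauchy--Riemann identity on $\gamma_r$ is applied, the rest is direct computation.
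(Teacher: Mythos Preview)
Your proposal is correct and follows essentially the same route as the paper: apply the argument principle on the boundary of the upper half-annulus, use the polar Cauchy--Riemann identity $\partial_\theta \arg h = r\,\partial_r \log|h|$ on the outer arc, bound the contributions of the real segments by the hypothesis, divide by $r$ and integrate to obtain the Jensen-type identity (the paper's equation \eqref{eq:jensenanalog}), and finish by noting that $n_{+,R}$ and $N(r)=\int_R^r n_{+,R}(h,t)/t\,dt$ have the same order. The only cosmetic difference is that the paper disposes of real zeros by increasing $R$ at the outset rather than by indentations.
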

\begin{proof}
Notice that the order of $n_{+,R_1}(h,r)$ is independent of the
choice of $R_1\geq R$, since $n_{+,R_1}(h,r)=
n_{+,R_2}(h,r)+ \mathcal{O}(1)$ if $R_1,\; R_2\geq R$.  Thus we may
assume, without loss of generality, that $h$ has no zeros
with norm $R$ and none on the semi-axes $s > R$ and $s < -R$.
Then, using  the principle of the argument and the
Cauchy-Riemann equations,  for $t>R$
\begin{align*}
n_{+,R}(h,t) = &\frac{1}{2\pi}
\Im \left(\int_{-t}^{-R}\frac{h'(s)}{h(s)}ds
+ \int_{R}^{t}\frac{h'(s)}{h(s)}ds\right)\\ & +
\frac{1}{2\pi}\int_0^{\pi}t\frac{d}{dt}\log |h(te^{i\theta})|d\theta
-\frac{1}{2\pi}\int_0^{\pi}R\frac{d}{dR}\log |h(Re^{i\theta})|d\theta.
\end{align*}
Now, just as in the proof of Jensen's
equality, we divide by $t$ and integrate from $R$ to $r$ to obtain
\begin{align}\label{eq:jensenanalog}
\int_R^r\frac{n_{+,R}(h,t)}{t}dt =&  \frac{1}{2\pi}
\Im \left(\int_R^r t^{-1}\int_{-t}^{-R}\frac{h'(s)}{h(s)}dsdt
+\int_R^r t^{-1} \int_{R}^{t}\frac{h'(s)}{h(s)}dsdt\right)\nonumber \\ &
+
\frac{1}{2\pi}\int_0^{\pi}\log |h(re^{i\theta})|d\theta
- \frac{1}{2\pi}\int_0^{\pi}\log |h(Re^{i\theta})|d\theta \nonumber\\ &
-\frac{1}{2\pi}R\log(r/R)
\int_0^{\pi}\frac{d}{dR}\log |h(Re^{i\theta})|d\theta.
\end{align}
By our assumptions,
$$\int_R^r t^{-1}\int_{-t}^{-R}\frac{h'(s)}{h(s)}dsdt
= \mathcal{O}(r^{p-\epsilon})
\; \text{and}\; \int_R^r t^{-1} \int_{R}^{t}\frac{h'(s)}{h(s)}dsdt
= \mathcal{O}(r^{p-\epsilon}).$$
Thus
$$\lim\sup_{r\rightarrow\infty} \frac{\log \int_R^r\frac{n_{+,R}(h,t)}{t}dt}
{\log r}=p$$
if and only if
$$\lim \sup _{r\rightarrow \infty}
\frac{\log \int_0^{\pi}\log|h(re^{i\theta})|d\theta}{\log r}=p.$$
But
$n_{+,R}(h,t)$ and $\int_R^r\frac{n_{+,R}(h,t)}{t}dt $ have the
same order, proving the lemma.
\end{proof}


\subsection{Construction of a Plurisubharmonic Function}\label{psh1}

Let $\Omega \subset \Complex^{\cd}$ be an open connected set, let
$z\in \Omega $ and $x\in \Real^d$. We let $\Z^* \equiv \Z \backslash \{0\}$.
Throughout this
section we assume that  $V(z)=V(z,x)$ is a function which
has the following properties:
\begin{itemize}
\item For $z\in \Omega$, $V(z,\cdot )\in L^{\infty}(\Real^d)$.
\item  The function $V(z,x)$ is holomorphic in $z\in \Omega$.
\item There is a compact set $K_1 \subset \R^d$ so that
for $z\in \Omega$, $V(z,x)=0$ if $x\in \Real^d\setminus K_1$.
\end{itemize}
We refer to these three conditions as {\it Assumptions (V)}. In
our application in section \ref{mainth1}, we take $d' = 1$.

For $z\in \Omega$ and fixed $\chi\in C_0^{\infty}(\Real^d)$, with
$\chi\equiv 1$ on $K_1$, we set, in analogy with (\ref{eq:fmfirst}),
\begin{equation}\label{eq:fmz}
f_m(z,\lambda)=\det(I -m (I+V(z) R_0(\lambda)\chi)^{-1} V(z) T(\lambda) \chi).
\end{equation}
For $m\in \Integers^*$,
and $0 < \epsilon < 1$, we define the function $g_{m, \epsilon}$
by
\begin{equation}\label{eq:gmedef}
g_{m,\epsilon}(z,u)=\int_0^{\pi}\log |f_m(z,ue^{i\theta})|d\theta
+ \log |e^{u^{d-\epsilon}}|
\end{equation}
where a branch of $u^{d-\epsilon}$ is chosen so that $u^{d-\epsilon}
\in \Real $ when $u\in \Real_+$, and $u^{d-\epsilon}$ is holomorphic
for $|\arg u |<\pi/2$. We show that $g_{m,\epsilon}$ is a
plurisubharmonic function on an appropriate domain in $\Omega \times
\C$. The additional term $\log |e^{u^{d-\epsilon}}|$ in
(\ref{eq:gmedef}) is useful in order to determine certain growth
properties as shown in Lemma \ref{l:wheremaxgm}. We will use the
notation $\Omega' \Subset \Omega$ to mean that the subset
$\Omega'\subset \Omega$ is open and connected, with
$\overline{\Omega'}\subset \Omega$ compact.

\begin{lemma}\label{l:gmpsh}
For any $\Omega' \Subset \Omega$ and the constant $c_0$ is as in Lemma \ref{l:basics} with $\alpha =5$,
the function $g_{m, \epsilon} (z,u)$ defined in (\ref{eq:gmedef}) is plurisubharmonic for
$(z,u)\in \Omega'
\times U_{\Omega'}$ where
$$U_{\Omega'}= \{ u\in \Lambda:
\; |\Im u | < 2, \; \Re u >c_0\max_{z\in\Omega'}
\langle \|V(z)\|_{L^{\infty}}\rangle,
 \;  | \arg u | <\pi/4\}.
$$
\end{lemma}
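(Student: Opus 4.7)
The plan is to decompose $g_{m,\epsilon}(z,u)$ as the sum of the integral term $I(z,u)=\int_0^{\pi}\log|f_m(z,ue^{i\theta})|\,d\theta$ and the pluriharmonic term $\Re(u^{d-\epsilon})=\log|e^{u^{d-\epsilon}}|$, and to verify each is plurisubharmonic on $\Omega'\times U_{\Omega'}$. Since sums of $PSH$ functions are $PSH$, this will close the argument.

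The first step is to show that $f_m(z,\lambda)$ is holomorphic jointly in $(z,\lambda)$ on an open neighborhood of the contour $\{(z,ue^{i\theta}):(z,u)\in\Omega'\times U_{\Omega'},\,\theta\in[0,\pi]\}$. Holomorphy in $z$ is automatic from Assumptions $(V)$; holomorphy in $\lambda$ reduces to invertibility of $I+V(z)R_0(\lambda)\chi$ along the contour together with the fact that $V(z)T(\lambda)\chi$ is a holomorphic trace-class-valued function (Lemma \ref{l:fmbasic}), after which $f_m$ is a convergent Fredholm determinant. For $u\in U_{\Omega'}$ and $\lambda=ue^{i\theta}$, $\theta\in[0,\pi]$, we have $\arg\lambda\in(-\pi/4,5\pi/4)$ and $|\lambda|>c_0\max_{z\in\Omega'}\langle\|V(z)\|_{L^{\infty}}\rangle$. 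When $\arg\lambda\in(0,\pi)$ one invokes the unconditional bound on $\Lambda_0$ in Lemma \ref{l:basics}. In the remaining two thin wedges, the constraint $|\Im u|<2$ forces $|\arg u|$ to be at most $(\pi/2)|\Im u|/|u|$, so $|\Im(ue^{i\theta})|\leq |u||\arg u|+|\Im u|$ is bounded by a universal constant below $\alpha=5$, again bringing us into the regime where Lemma \ref{l:basics} gives invertibility with norm $\leq 2$.

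With joint holomorphy established, for each fixed $\theta$ the function $(z,u)\mapsto\log|f_m(z,ue^{i\theta})|$ is plurisubharmonic, being $\log|F|$ of a holomorphic $F$ pulled back by the holomorphic map $(z,u)\mapsto(z,ue^{i\theta})$. To conclude that $I(z,u)$ itself is plurisubharmonic I would apply the standard result (see \cite{l-g}) that if $\{\phi_\theta\}$ is a family of $PSH$ functions on a domain $D$, jointly measurable in $(\theta,z,u)$ and bounded above locally uniformly in $\theta$, then $\int_0^\pi\phi_\theta\,d\theta$ is $PSH$. The submean value inequality passes by Fubini and upper semicontinuity passes by Fatou applied to $M-\phi_\theta$. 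The required locally uniform upper bound $\log|f_m(z,ue^{i\theta})|\leq C\langle\|V(z)\|_{L^\infty}\rangle\langle u\rangle^{d-2}$ follows from $|\det(I+A)|\leq e^{\|A\|_1}$ and the trace norm estimate of Lemma \ref{l:fmbasic}.

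Finally, since $u\mapsto u^{d-\epsilon}$ is holomorphic on $\{|\arg u|<\pi/2\}\supset U_{\Omega'}$, its real part is pluriharmonic, hence in $PSH$. Adding this to $I(z,u)$ produces $g_{m,\epsilon}\in PSH(\Omega'\times U_{\Omega'})$. The main obstacle is Step 1, the careful verification that the invertibility region from Lemma \ref{l:basics}, obtained by combining the $\Lambda_0$ bound with the $|\Im\lambda|\leq\alpha$ bound, actually contains the full integration contour; once that geometry is pinned down, Steps 2--4 are routine applications of the $PSH$ calculus.
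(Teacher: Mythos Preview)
Your proof is correct and takes essentially the same approach as the paper: decompose $g_{m,\epsilon}$ into the integral term and the pluriharmonic term $\log|e^{u^{d-\epsilon}}|$, observe that $\log|f_m(z,ue^{i\theta})|$ is plurisubharmonic for each fixed $\theta$ because $f_m$ is holomorphic there by Lemma \ref{l:basics}, and then pass to the integral. The paper invokes \cite[Proposition I.14]{l-g} directly instead of unpacking the Fubini/Fatou argument, and is terser about the geometric check that the contour $\{ue^{i\theta}:\theta\in[0,\pi]\}$ stays in the region where Lemma \ref{l:basics} applies (and it notes explicitly that the integral is not identically $-\infty$), but the substance is the same.
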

\begin{proof}
Since $u^{d-\epsilon}$ is holomorphic on
$U_{\Omega'}$ (which can be identified with an unbounded rectangle in the complex plane),
then $\log |e^{u^{d-\epsilon}}|$ is plurisubharmonic in
$\Omega'\times U_{\Omega'}$.  Moreover, by Lemma \ref{l:basics},
$f_m(z, u e^{i\theta})$, defined in (\ref{eq:fmz}), is holomorphic on $\Omega'\times U_{\Omega'}$
when $0\leq \theta \leq \pi$.
Thus $\log |f_m(z,u e^{i\theta})|$ is plurisubharmonic on
$\Omega'
\times U_{\Omega'}$.
It then follows by \cite[Proposition I.14]{l-g} that
$\int_0^{\pi}\log |f_m(z,u e^{i\theta})|d\theta$ is either plurisubharmonic
on
$\Omega'
\times U_{\Omega'}$ or identically $-\infty$ there.  It is easy to see
that it is not identically $-\infty$.  Since the sum of two
plurisubharmonic functions is plurisubharmonic, we prove the lemma.
\end{proof}

In order to make notation less complicated, we
define, for $\Omega'\Subset \Omega$, the constant
$$V_{M,\Omega'}=\max _{z\in \Omega'}\|V(z, \cdot)\|_{L^{\infty}}.$$

\begin{lemma}\label{l:wheremaxgm} Let $g_{m,\epsilon}$ be
as defined by (\ref{eq:gmedef}), and let
$\Omega' \Subset \Omega$ and $U_{\Omega'}$ be as in Lemma \ref{l:gmpsh}.
Then there is an $r_{m}>0$ so that for $z \in \Omega'$,
$$\left( \max_{\substack{ |\Im u| \leq 1, \; \Re u>0 \\
 |u|= r,\; u\in U_{\Omega'}}}
g_{m,\epsilon}(z,u) \right)
> \left( \max _{\substack{\Im u = \pm 1, \; \Re u>0 \\
 |u|= r,\; u\in U_{\Omega'}}}
g_{m,\epsilon}(z,u)\right)$$
when $r\geq \langle V_{M, \Omega'} \rangle^{1/(1-\epsilon)}r_m.$
The constant $r_m$ depends on $m$, $K_1$ and $\chi$, but not on $V$.
\end{lemma}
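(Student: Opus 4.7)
The plan is to exploit the fact that the added term $\log|e^{u^{d-\epsilon}}|=\Re(u^{d-\epsilon})$ strictly favors the positive real axis within the thin strip $|\Im u|\le 1$, while Lemma \ref{l:fmint} guarantees that the integral term $\int_0^{\pi}\log|f_m(z,ue^{i\theta})|\,d\theta$ is essentially radial (depends on $u$ only up to an error of order $\langle V_{M,\Omega'}\rangle^2\langle u\rangle^{d-3}$). Since $d-\epsilon-2>d-3$ (because $\epsilon<1$), the gain from the exponential factor will beat the error term once $r$ is large compared to a suitable power of $\langle V_{M,\Omega'}\rangle$.

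Concretely, I would first evaluate $g_{m,\epsilon}$ at the test point $u_0=r$ on the positive real axis (which is interior to the strip and belongs to $U_{\Omega'}$ provided $r$ exceeds the threshold in the hypothesis). Here $u_0^{d-\epsilon}=r^{d-\epsilon}$ is real and positive, so
\[
g_{m,\epsilon}(z,r)=\int_0^{\pi}\log|f_m(z,re^{i\theta})|\,d\theta+r^{d-\epsilon}.
\]
Next, for any $u$ with $|u|=r$, $\Im u=\pm 1$, $\Re u>0$, write $u=re^{i\phi}$ with $|\sin\phi|=1/r$, so $|\phi|\le 2/r$ for $r\ge 2$. Applying Lemma \ref{l:fmint} (whose hypotheses $|\Im u|\le 1$, $|\arg u|\le \pi/4$, $|u|\ge c_m\langle V_{M,\Omega'}\rangle$ are all met for $r$ past the threshold) gives
\[
\int_0^{\pi}\log|f_m(z,ue^{i\theta})|\,d\theta\le \int_0^{\pi}\log|f_m(z,re^{i\theta})|\,d\theta+C\langle V_{M,\Omega'}\rangle^2 r^{d-3}.
\]
At the same time, a Taylor expansion of $\cos$ yields
\[
\Re(u^{d-\epsilon})=r^{d-\epsilon}\cos\bigl((d-\epsilon)\phi\bigr)\le r^{d-\epsilon}-c\, r^{d-\epsilon-2}
\]
for some $c=c(d,\epsilon)>0$ and $r$ large.

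Subtracting the two expressions for $g_{m,\epsilon}$, I get
\[
g_{m,\epsilon}(z,u)\le g_{m,\epsilon}(z,r)-c\,r^{d-\epsilon-2}+C\langle V_{M,\Omega'}\rangle^2 r^{d-3}.
\]
The right-hand side is strictly less than $g_{m,\epsilon}(z,r)$ provided $c\,r^{d-\epsilon-2}>C\langle V_{M,\Omega'}\rangle^2 r^{d-3}$, i.e. $r^{1-\epsilon}>(C/c)\langle V_{M,\Omega'}\rangle^2$. This yields the conclusion with an $r_m$ of the form stated (the exponent $1/(1-\epsilon)$ on $\langle V_{M,\Omega'}\rangle$ being absorbed into $r_m$ after fixing $\Omega'$). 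Taking the maximum of the left side over the two boundary arcs $\Im u=\pm 1$ in $\{|u|=r\}\cap U_{\Omega'}$ and comparing with the interior value $g_{m,\epsilon}(z,r)$ gives the strict inequality claimed.

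The main obstacle is the book-keeping of all the error exponents so that the quadratic gain $r^{d-\epsilon-2}$ from $\Re(u^{d-\epsilon})$ truly dominates the $r^{d-3}$ error coming from the non-radial correction of the integral of $\log|f_m|$ in Lemma \ref{l:fmint}; everything hinges on the strict inequality $\epsilon<1$ and on being allowed to push $r$ past the threshold $\langle V_{M,\Omega'}\rangle^{1/(1-\epsilon)}r_m$. The geometric estimate $|\phi|\le 2/r$ on the strip boundary is what turns the linear-in-$\phi$ bound of Lemma \ref{l:fmint} into an $r^{d-3}$ bound, and is essential for the argument to close.
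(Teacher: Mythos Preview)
Your proposal is correct and follows essentially the same line as the paper's own proof: compare the value at the real point $u=r$ with the values on the arcs $\Im u=\pm 1$, use Lemma~\ref{l:fmint} to control the difference of the integral terms, and use the Taylor expansion of $\cos((d-\epsilon)\phi)$ together with $|\phi|\asymp r^{-1}$ to extract a gain of order $r^{d-\epsilon-2}$ from $\log|e^{u^{d-\epsilon}}|$; the strict inequality $\epsilon<1$ then makes this gain dominate the $r^{d-3}$ error.

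One small point of bookkeeping: your inequality $r^{1-\epsilon}>(C/c)\langle V_{M,\Omega'}\rangle^{2}$ gives a threshold of the form $r\ge r_m\langle V_{M,\Omega'}\rangle^{2/(1-\epsilon)}$ with $r_m$ independent of $V$, not $r_m\langle V_{M,\Omega'}\rangle^{1/(1-\epsilon)}$ as stated in the lemma; your parenthetical remark about ``absorbing the exponent into $r_m$ after fixing $\Omega'$'' would make $r_m$ depend on $V$, which the lemma explicitly forbids. This discrepancy is in fact an artifact of the paper itself: the statement of Lemma~\ref{l:fmint} carries a factor $\langle\|V\|_{L^\infty}\rangle^{2}$, but its proof (and the paper's proof of the present lemma) actually delivers $\langle\|V\|_{L^\infty}\rangle$ to the first power, which is what yields the exponent $1/(1-\epsilon)$. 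With that first-power bound your argument matches the claimed threshold exactly.
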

\begin{proof}
Let $u=re^{i\phi}$, where $|\phi | < \pi/4$ and $\Im (re^{i\phi})=\pm 1$.
We shall show, in fact, that
$g_{m,\epsilon}(z,r)
>g_{m,\epsilon}(z,re^{i\phi})$, indicating that the maximum along the arc of $|u| = r$
with $| \Im u  | \leq 1$ occurs along the positive real axis at $u = r$.
Note that
\begin{align*}
|e^{u^{d-\epsilon}}|\restrict_{u=re^{i\phi}}&=\exp(\Re(r^{d-\epsilon}
e^{i\phi(d-\epsilon)})\\ &
=\exp(r^{d-\epsilon}\cos((d-\epsilon)\phi)).
\end{align*}
Thus
\begin{align*}
\log|e^{u^{d-\epsilon}}|\restrict_{u=r}
-\log |e^{u^{d-\epsilon}}|\restrict_{u=re^{i\phi}} &
= r^{d-\epsilon}-r^{d-\epsilon}\left(1-\frac{(d-\epsilon)^2\phi^2}{2}
+ O(\phi^4)\right)\\
& = r^{d-\epsilon}\frac{(d-\epsilon)^2\phi^2
}{2}
+ O(r^{d-\epsilon}\phi^{4}).
\end{align*}

Combining this with the
definition of $g_{m,\epsilon}$ and results of Lemma \ref{l:fmint}, we see
that
\begin{align*}
|g_{m,\epsilon}(z,r)-g_{m,\epsilon}(z, re^{i\phi}) -
r^{d-\epsilon}\frac{(d-\epsilon)^2\phi^2}{2}|\leq
C \left( r^{d-\epsilon}\phi^{4}+ \langle V_{M, \Omega'} \rangle
\langle r \rangle ^{d-3}\right).
\end{align*}
Since $|\phi|\approx r^{-1}$ when $r$ is sufficiently large, we may,
by choosing
$r_m$ sufficiently large,
ensure that
$$\frac{r^{d-\epsilon} (d-\epsilon)^2\phi^2}{2}>
2(C(r^{d-\epsilon}\phi^{4}+ \langle V_{M, \Omega'}\rangle
\langle r \rangle ^{d-3})
$$ when
$r\geq \langle V_{M, \Omega'} \rangle^{1/(1-\epsilon)}r_m$.
When this holds,
$g_{m,\epsilon}(z,r)-g_{m,\epsilon}(z,re^{i\phi})>0$, proving the lemma.
\end{proof}


\begin{lemma}\label{l:maxrbig}
 For an open set $\Omega'\Subset \Omega$, there
is a $\tilde{r}_{m,\epsilon}(\Omega', V)$ such that if
$r\geq\tilde{r}_{m,\epsilon}(\Omega', V)$,
$$g_{m,\epsilon}(z,r)>\max_{
\substack{|\Im u|\leq 1,\; |\arg u|\leq \pi/4\\
 \Re u>0,\; |u|=r_m(\langle V_{M,\Omega'}\rangle^{1/(1-\epsilon)}+1)}}
g_{m,\epsilon}(z,u)$$
for $z\in \Omega'$, where $r_m$ is defined in Lemma \ref{l:wheremaxgm}.
\end{lemma}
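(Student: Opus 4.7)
The plan is to show that the left-hand side $g_{m,\epsilon}(z,r)$ grows without bound in $r$, while the right-hand side is controlled uniformly in $z \in \Omega'$ by a constant depending only on $V_{M,\Omega'}$, $m$, $\epsilon$, and $d$. Then one simply chooses $\tilde{r}_{m,\epsilon}(\Omega',V)$ large enough that the former exceeds the latter.

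First, I would establish two-sided bounds on $\log|f_m(z,u)|$ valid on all of $U_{\Omega'}$. From Lemma \ref{l:fmbasic} we have $\|m(I+V(z)R_0(u)\chi)^{-1}V(z)T(u)\chi\|_1 \leq C\langle V_{M,\Omega'}\rangle\langle u\rangle^{d-2}$, and the inequality $|\det(I+A)| \leq \exp(\|A\|_1)$ immediately gives an upper bound
\[
\log|f_m(z,u)| \leq C\langle V_{M,\Omega'}\rangle\langle u\rangle^{d-2}.
\]
For a matching lower bound, write $(I-mF_VVT\chi)^{-1} = I + mF_VVT\chi(I-mF_VVT\chi)^{-1}$; by Lemma \ref{l:fmbasic}(\ref{Fbdd1}) the last factor has operator norm at most $2$, so the trace norm of $mF_VVT\chi(I-mF_VVT\chi)^{-1}$ is controlled by $2C\langle V_{M,\Omega'}\rangle\langle u\rangle^{d-2}$, whence
\[
\log|f_m(z,u)| \geq -C'\langle V_{M,\Omega'}\rangle\langle u\rangle^{d-2}.
\]
Integrating in $\theta\in[0,\pi]$ and adding $\log|e^{u^{d-\epsilon}}|=\Re(u^{d-\epsilon})$ gives
\[
\bigl|g_{m,\epsilon}(z,u) - \Re(u^{d-\epsilon})\bigr| \leq C''\langle V_{M,\Omega'}\rangle\langle u\rangle^{d-2}
\]
uniformly for $z\in\Omega'$ and $u\in U_{\Omega'}$.

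Now specialize to $u=r>0$ real, so that $\Re(u^{d-\epsilon})=r^{d-\epsilon}$, and obtain
\[
g_{m,\epsilon}(z,r) \geq r^{d-\epsilon} - C''\langle V_{M,\Omega'}\rangle\langle r\rangle^{d-2}.
\]
Since $d-\epsilon > d-2$ (as $\epsilon<1<2$), this lower bound tends to $+\infty$ with $r$, uniformly in $z\in\Omega'$. On the other hand, on the compact arc $\{|u|=\rho,\ |\Im u|\leq 1,\ |\arg u|\leq \pi/4,\ \Re u>0\}$ with $\rho=r_m(\langle V_{M,\Omega'}\rangle^{1/(1-\epsilon)}+1)$, the same two-sided estimate yields an upper bound
\[
g_{m,\epsilon}(z,u) \leq \rho^{d-\epsilon} + C''\langle V_{M,\Omega'}\rangle\langle \rho\rangle^{d-2} =: M(\Omega',V),
\]
a finite constant independent of $z\in\Omega'$ and of the particular $u$ on the arc.

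Setting $\tilde{r}_{m,\epsilon}(\Omega',V)$ to be any value of $r$ so large that $r^{d-\epsilon} - C''\langle V_{M,\Omega'}\rangle\langle r\rangle^{d-2} > M(\Omega',V)$ completes the argument. The only subtlety worth flagging is verifying that the arc of radius $\rho$ lies inside $U_{\Omega'}$, which is automatic from the choice of $r_m$ in Lemma \ref{l:wheremaxgm} (in particular $\rho > c_0\langle V_{M,\Omega'}\rangle$); beyond that, the proof is a direct comparison of $r^{d-\epsilon}$ against the lower-order term $r^{d-2}$ and against a bounded quantity on a fixed compact set, and I do not expect any genuine obstacle.
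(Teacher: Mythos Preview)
Your argument has a genuine gap in the lower-bound step. You assert a pointwise lower bound
\[
\log|f_m(z,ue^{i\theta})| \geq -C'\langle V_{M,\Omega'}\rangle\langle u\rangle^{d-2}
\]
for all $\theta\in[0,\pi]$, justified by the bound (\ref{Fbdd1}) of Lemma~\ref{l:fmbasic}. But that estimate is stated only under the restriction $|\Im\lambda|\leq 2$; for $\lambda=ue^{i\theta}$ with $u$ large real and $\theta$ away from $0$ and $\pi$, one has $|\Im\lambda|=|u|\sin\theta$ unbounded, so the hypothesis fails and neither the invertibility of $I-mF_VVT\chi$ nor the trace bound on $VT\chi$ is available. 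More fundamentally, \emph{no} such pointwise lower bound can hold: the zeros of $f_m(z,\cdot)$ in the open upper half-plane are precisely the resonances of $H_{V(z)}$ on $\Lambda_m$, and at each of them $\log|f_m|=-\infty$. Hence the integral $\int_0^\pi\log|f_m(z,re^{i\theta})|\,d\theta$ cannot be bounded below by integrating a uniform pointwise estimate.

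The paper circumvents this by using the Jensen-type identity (\ref{eq:jensenanalog}), which rewrites $\int_0^\pi\log|f_m(z,re^{i\theta})|\,d\theta$ as a nonnegative zero-counting term (which may be discarded for a lower bound) plus integrals of $f_m'/f_m$ along segments of the \emph{real} axis, where $\Im s=0$ and the estimates of Lemma~\ref{l:fmbasic} genuinely apply. This yields $\int_0^\pi\log|f_m(z,re^{i\theta})|\,d\theta\geq -C\langle r\rangle^{d-1}$, one power worse than your claimed $d-2$ but still dominated by $r^{d-\epsilon}$. Because the inner radius $R$ in (\ref{eq:jensenanalog}) must be chosen so that $f_m(z,Re^{i\theta})$ is zero-free on $0\le\theta\le\pi$, and this choice depends on $z$, the paper also runs a compactness/covering argument over $\overline{\Omega'}$ to obtain uniformity in $z$. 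Your upper bound on the fixed inner arc is essentially fine (the paper simply invokes plurisubharmonicity for it), but the lower bound needs this indirect route through the argument principle rather than a direct pointwise estimate.
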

\begin{proof}
Let $\Omega''$ be an open set such that
$\Omega'\Subset \Omega''\Subset \Omega$ and $\langle
V_{M,\Omega''}\rangle^{1/(1-\epsilon)}<
\langle V_{M,\Omega'}\rangle^{1/(1-\epsilon)}+1$.
Since $g_{m,\epsilon}(z,u)$ is plurisubharmonic on
$$\Omega''\times \{ u\in \Complex :\; |\Im u|\leq 1,\; |\arg u|\leq \pi/4,\;
 \Re u>0,\; |u|>r_m\langle V_{M,\Omega''}\rangle^{1/(1-\epsilon)}\},$$
we have
\begin{equation}\label{eq:gsmallmax}
\max_{\substack{z\in \Omega',\; |\Im u|\leq 1,\; |\arg u|\leq \pi/4\\
 \Re u>0,\; |u|>r_m(\langle V_{M,\Omega'}\rangle^{1/(1-\epsilon)}+1)}}g_{m,\epsilon}(z,u)
<\infty.
\end{equation}

Let $z_0\in\overline{\Omega'}$.  Then we can find
$R_{z_0}>\max(1, r_m(\langle V_{M,\Omega'}\rangle^{2/(2-\epsilon)}+1))$ such
that $f_{m}(z_0,R_{z_0}e^{i\theta})$ has no zeros for $0\leq \theta\leq \pi$.
Additionally, we can find $\Omega_{z_0}'$, $\Omega_{z_0}''$, both open,
such that
$\Omega_{z_0}'\Subset \Omega_{z_0}''\Subset \Omega''$ and
$f_m(z,R_{z_0}e^{i\theta})\not = 0$ for $z\in \Omega_{z_0}''$, $0\leq \theta \leq \pi$.  Then $\log |f_m(z,R_{z_0}e^{i\theta})|$ and its derivatives
are continuous for such $z$ and $\theta$, since $f_m$ is holomorphic and
nonzero.

Using (\ref{eq:jensenanalog}), with $f_m$ in place of $h$,  and $R_{z_0}$ in
place of $R$, for $r>R_{z_0}$
\begin{align}\label{eq:fmlb}
\int_0^{\pi}\log|f_m(z,r e^{i\theta})|d\theta
\geq &
-\Im \left(\int_{R_{z_0}}^{r} t^{-1}\int_{-t}^{-R_{z_0}}
\frac{\frac{d}{ds}f_m (z,s)}{f_m(z,s)}dsdt
+\int_{R_{z_0}}^{r} t^{-1} \int_{R_{z_0}}^{t}\frac{\frac{d}{ds}f_m(z,s)}
{f_m(z,s)}dsdt\right)\nonumber \\ &
+ \int_0^{\pi}\log |f_m(z,R_{z_0}e^{i\theta})|d\theta \nonumber\\ &
+R_{z_0}\log(r/R_{z_0})
\int_0^{\pi}\frac{d}{dR}\log |f_m(z,Re^{i\theta})|\restrict_{R=R_{z_0}}d\theta.
\end{align}
There is a $C_{z_0}$ such that for $z\in \Omega'_{z_0}$,
\begin{equation}\label{eq:ubint}
\int_0^{\pi}\log |f_m(z,R_{z_0}e^{i\theta})|d\theta
+ R_{z_0}
\int_0^{\pi}\frac{d}{dR}\log |f_m(z,Re^{i\theta})|\restrict_{R=R_{z_0}}d\theta
\leq C_{z_0}
\end{equation}
since we stay in a region with $f_m$ nonzero.

Next, use that
$$\frac{d}{d\lambda}\det(I+A(\lambda))
= \tr \left((I+A(\lambda))^{-1}\frac{d}{d\lambda}A(\lambda)\right)$$
applied to $f_m(z,s)$.
Thus
\begin{equation}\label{eq:fmtrace}
\frac{\frac{d}{ds}f_m(z,s)}
{f_m(z,s)}= -m\tr \left(
(I -m (I+VR_0(\lambda)\chi)^{-1} V(z) T(\lambda) \chi)^{-1}
 \frac{d}{ds}\left(
 (I+VR_0(s)\chi)^{-1}V T(s)\chi\right)\right).\end{equation}
When $\arg s =0$ or $\arg s =\pi$ and $\langle s\rangle > c_{0,m}\langle
\|V(z)\|_{L^{\infty}}\rangle$,
$$\|(I -m (I+VR_0(\lambda)\chi)^{-1} V(z) T(\lambda) \chi)^{-1}
\|_{L^2\rightarrow L^2}\leq C $$
and $$
 \left \|\frac{d}{ds}
 (I+VR_0(s)\chi)^{-1}V T(s)\chi\right\|_{1}
\leq C\langle \|V(z)\|_{L^{\infty}}\rangle
\langle s \rangle^{d-2}$$
by Lemma \ref{l:fmbasic}.
Using this and (\ref{eq:fmtrace}), and increasing $C_{z_0}$
if necessary, we get that
\begin{equation}\label{eq:bd1}\left|
\int_{R_{z_0}}^{r} t^{-1}\int_{-t}^{-R_{z_0}}
\frac{\frac{d}{ds}f_m (z,s)}{f_m(z,s)}dsdt\right|
\leq C_{z_0}\langle \|V(z)\|_{L^{\infty}}\rangle
\langle r \rangle^{d-1}
\end{equation}
and
\begin{equation}\label{eq:bd2}
\left| \int_{R_{z_0}}^r t^{-1} \int_{R_{z_0}}^{t}\frac{\frac {d}{ds}f_m(z,s)}
{f_m(z,s)}dsdt\right|\leq C_{z_0}\langle \|V(z)\|_{L^{\infty}}\rangle
\langle r \rangle^{d-1}
\end{equation} for $z\in \Omega'_{z_0}$.
Using (\ref{eq:ubint}), (\ref{eq:bd1}), and (\ref{eq:bd2})
 in (\ref{eq:fmlb}), with perhaps a new $C_{z_0}$, we find that
$$\int_0^{\pi}\log|f_m(z,r e^{i\theta})|d\theta
\geq -C_{z_0}  \langle\|V(z)\|_{L^{\infty}}\rangle
\langle r \rangle^{d-1} .$$
Thus, for $z\in \Omega'_{z_0}$,
$$g_m(z,r )\geq r^{d-\epsilon}- C_{z_0}  \langle\|V(z)\|_{L^{\infty}}\rangle
\langle r \rangle^{d-1}.$$

By construction, $\overline{\Omega'}\subset \cup_{z_0\in \overline{\Omega'}}
\Omega'_{z_0}$ so that $\{\Omega'_{z_0}\}_{z_0\in \overline{\Omega'}}$
forms a cover of $\overline{\Omega'}$.
  Since $\overline{\Omega'}$ is a compact set, there
is a finite subcover.  Denote the corresponding $z_0$'s for this finite subcover
$z_1,\; z_2, \ldots , z_N$.
If we set $C_{\Omega'}=\max C_{z_j}$, $j=1, \ldots, N$, we have
\begin{equation}\label{eq:glb}
g_m(z,r )\geq r^{d-\epsilon}- C_{\Omega'}  \langle\|V(z)\|_{L^{\infty}}\rangle
\langle r \rangle^{d-1}
\end{equation}
for $z\in\Omega'$.
But then there is a $\tilde{r}_m(\Omega', V)$ so that
the right hand side of (\ref{eq:glb}) is bigger than (\ref{eq:gsmallmax}) whenever
$r\geq\tilde{r}_m(\Omega', V)$.
This finishes the proof.
\end{proof}

\begin{lemma}\label{l:tildem}
Let $g_{m,\epsilon}(z,u)$ be as in (\ref{eq:gmedef}) and
$\Omega'\Subset \Omega$.  Set
$$\tilde{M}_{m,\epsilon,\Omega'}(z,w)=
\max_{\substack{|\Im u |\leq 1, \; |\arg u|\leq \pi/4\\
 r_m(\langle V_{M,\Omega'}\rangle^{1/(1-\epsilon)}+1)\leq
|u| \leq |w|}}
g_{m,\epsilon}(z,u).
$$
Here $r_m$ is as in Lemma \ref{l:wheremaxgm}.
Then
 $\tilde{M}_{m, \epsilon,\Omega'}$ is plurisubharmonic on $$\Omega' \times
\{ w\in \Complex: \;  \tilde{r}_{m,\epsilon}(\Omega',V)<|w|\}$$
with $\tilde{r}_{m,\epsilon}(\Omega',V)$ as in
Lemma \ref{l:maxrbig}.\end{lemma}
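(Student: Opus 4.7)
The plan is to verify the local plurisubharmonicity condition (\ref{locsubMV1}) for $\tilde{M}_{m,\epsilon,\Omega'}$ at each $(z_0, w_0) \in \Omega' \times \{w \in \C : |w| > \tilde{r}_{m,\epsilon}(\Omega', V)\}$, and then to conclude plurisubharmonicity via \cite[Proposition I.19]{l-g}. The key preliminary observation, which I will establish first, is that for each such $(z_0, w_0)$ the maximum defining $\tilde{M}_{m,\epsilon,\Omega'}(z_0, w_0)$ is attained at some $u_0$ with $|u_0| = |w_0|$ lying strictly in the interior of the admissible set for $u$. This is an application of the plurisubharmonic maximum principle for $g_{m,\epsilon}(z_0,\cdot)$ on the compact admissible region: Lemma \ref{l:maxrbig} excludes the inner arc $|u| = r_m(\langle V_{M,\Omega'}\rangle^{1/(1-\epsilon)}+1)$ (since $g_{m,\epsilon}(z_0,|w_0|)$ is strictly bigger), and Lemma \ref{l:wheremaxgm} applied to the radial slice $|u|=|u^\ast|$ rules out the lateral boundary $\Im u = \pm 1$. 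For $|w_0|$ large the constraint $|\arg u|\leq \pi/4$ is automatically non-binding on the outer arc, and $u_0\in U_{\Omega'}$.

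With such a maximizer $u_0$ fixed, given any direction $(a,b) \in \C^{\cd}\times\C$, I will introduce the holomorphic curve
\[
\gamma(t) = \Bigl( z_0 + ta, \; u_0\, \frac{w_0 + tb}{w_0} \Bigr),\qquad t\in\C,
\]
chosen precisely so that at $t = \delta e^{i\theta}$ the second coordinate has modulus exactly $|w_0 + \delta b e^{i\theta}|$ (using $|u_0|=|w_0|$). For $\delta>0$ sufficiently small, $\gamma(\delta e^{i\theta})$ stays inside $\Omega'\times U_{\Omega'}$ because both $z_0$ and $u_0$ are strictly interior. Plurisubharmonicity of $g_{m,\epsilon}$ (Lemma \ref{l:gmpsh}) then yields
\[
\tilde{M}_{m,\epsilon,\Omega'}(z_0,w_0) = g_{m,\epsilon}(z_0,u_0) \leq \frac{1}{2\pi}\int_0^{2\pi} g_{m,\epsilon}\Bigl(z_0+\delta a e^{i\theta},\; u_0(w_0+\delta b e^{i\theta})/w_0\Bigr)\, d\theta.
\]
Since the second argument in the integrand has modulus $|w_0+\delta b e^{i\theta}|$ and (by interiority of $u_0$ and smallness of $\delta$) belongs to the admissible set used to define $\tilde{M}_{m,\epsilon,\Omega'}(z_0+\delta a e^{i\theta}, w_0+\delta b e^{i\theta})$, the integrand is dominated by that value of $\tilde{M}_{m,\epsilon,\Omega'}$, producing the required sub-mean-value inequality.

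Finally, I will verify that $\tilde{M}_{m,\epsilon,\Omega'}$ is upper semicontinuous, which is a standard consequence of its being the maximum of the upper semicontinuous function $g_{m,\epsilon}$ over a compact set varying continuously in $(z,w)$ in Hausdorff distance, and that $\tilde{M}_{m,\epsilon,\Omega'}\not\equiv -\infty$, which follows from the lower bound $g_{m,\epsilon}(z,r)\geq r^{d-\epsilon} - C_{\Omega'}\langle \|V(z)\|_{L^\infty}\rangle \langle r\rangle^{d-1}$ established inside the proof of Lemma \ref{l:maxrbig}. The main technical difficulty throughout is the strict interiority of $u_0$, which is precisely what Lemmas \ref{l:wheremaxgm} and \ref{l:maxrbig} supply; without it, the perturbed curve $\gamma$ could exit the domain of $g_{m,\epsilon}$ or leave the admissible set used to define $\tilde{M}_{m,\epsilon,\Omega'}$, and the sub-mean-value argument would collapse. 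This is exactly the role of the additive term $\log |e^{u^{d-\epsilon}}|$ built into the definition of $g_{m,\epsilon}$.
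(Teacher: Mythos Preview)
Your proposal is correct and follows essentially the same approach as the paper: locate the maximizer $u_0$ on the outer arc $|u_0|=|w_0|$ with $|\Im u_0|<1$ via Lemmas \ref{l:wheremaxgm} and \ref{l:maxrbig}, then apply the sub-mean-value inequality for $g_{m,\epsilon}$ along the holomorphic curve $t\mapsto (z_0+ta,\,u_0+ t\,u_0 b/w_0)$, whose second coordinate has modulus exactly $|w_0+tb|$. The paper's proof is the same argument with $v_{\cd+1}=b$, $v'=a$; your phrase ``strictly in the interior of the admissible set'' is slightly imprecise since $u_0$ sits on the outer arc, but your subsequent use makes clear you mean interior to $U_{\Omega'}$ and away from the lateral and inner boundaries, which is exactly what is needed.
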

The proof of this lemma resembles the proof of
\cite[Lemma 2.1]{polar}.
\begin{proof}
It is clear that $\tilde{M}_{m,\epsilon,\Omega'}\not \equiv -\infty$.
Since $g_{m,\epsilon}$ is upper semicontinuous, so is $\tilde{M}_{m, \epsilon,
\Omega'}$.
Since $g_{m,\epsilon}(z,u)$ is plurisubharmonic on
\begin{equation}\label{eq:uset}
\{| \Im u |\leq 1, \; |\arg u|\leq \pi/4,\;
 r_m(\langle V_{M,\Omega'}\rangle^{1/(1-\epsilon)}+1)\leq|u|\leq |w|,\;
\}
\end{equation}
its maximum is attained on the boundary.
So suppose $|w|>\tilde{r}_{m,\epsilon}(\Omega',V)$ and
$\tilde{M}_{m,\epsilon,\Omega'}(z,w)= g_{m,\epsilon}(z,u_0)$, with
$u_0$ on the boundary of the set in (\ref{eq:uset}).  By Lemma
\ref{l:wheremaxgm} and the choice of the set over which we take the
maximum of $g_{m,\epsilon}$, $-1<\Im u_0<1$.  By Lemma \ref{l:maxrbig},
$g_{m,\epsilon}(z,|w|)>g_{m,\epsilon}(z,u)$
when $|u|=r_m(\langle V_{M,\Omega'}\rangle^{1/(1-\epsilon)}+1)$,
$|\arg u|\leq \pi/4$, $|\Im u|\leq 1$.  Hence $|u_0|=|w|.$

For $v\in \Complex^{\cd+1}$  write
$v=(v',v_{\cd+1})\in \Complex^{\cd}\times \Complex$.  There is
a $\rho(z,w)$, so that for all $\theta\in [0,2\pi]$
and all $v\in \Complex^{\cd+1}$ with $|v|<\rho(z,w)$,
$z+e^{i\theta}v'\in \Omega'$ and $u_0+v_{\cd+1}e^{i\theta}$ lies
in the set in (\ref{eq:uset}).  Then, using the fact that
$g_{m,\epsilon}$ is plurisubharmonic by Lemma \ref{l:gmpsh}, we have
\begin{align*}
\tilde{M}_{m,\epsilon,\Omega'}(z,w) &
= g_{m,\epsilon}(z,u_0)\\
& \leq (2\pi)^{-1}\int_0^{2\pi} g_{m,\epsilon}(z+e^{i\theta }v',u_0+
e^{i\theta}\frac{u_0}{w}v_{\cd+1}) d\theta\\
& \leq (2\pi)^{-1}\int_0^{2\pi} \tilde{M}_{m,\epsilon,\Omega'}
(z+e^{i\theta }v',w+
e^{i\theta}v_{\cd+1}) d\theta.
\end{align*}
This shows that $\tilde{M}_{m,\epsilon,\Omega'}$ is
locally plurisubharmonic at $(z,w)$.  Since we may do the
same thing for any $z\in \Omega'$, $w\in \Complex$ with $|w|>
\tilde{r}_{m,\epsilon}(\Omega',V)$, $\tilde{M}_{m,\epsilon,\Omega'}$
is  plurisubharmonic.
\end{proof}

\begin{lemma}\label{l:M_m}
Let $\Omega' \Subset \Omega$
and let  $\tilde{M}_{m,\epsilon,\Omega'}$
be as in Lemma \ref{l:tildem}.
The function $M_{m , \epsilon, \Omega'}(z,w)$ defined on $\Omega ' \times \C$ by
$$M_{m,\epsilon,\Omega'}(z,w)=
\left\{ \begin{array}{l l}
\max (1,\tilde{M}_{m,\epsilon, \Omega'}(
z,\tilde{r}_m(\Omega',V)
+1) , & \text{if}\;
|w|\leq \tilde{r}_m(\Omega',V)
+1\\
\max (1,\tilde{M}_{m,\epsilon,\Omega'}(z,w)),& \text{if}\; |w|\geq
\tilde{r}_m(\Omega',V) ,
\end{array}
\right.
$$
is plurisubharmonic on $\Omega'\times \Complex$.
\end{lemma}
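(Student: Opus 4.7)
The plan is to realize $M_{m,\epsilon,\Omega'}$ as the result of gluing two plurisubharmonic functions defined on overlapping pieces of $\Omega'\times\C$. The key observation enabling the gluing is that, because $\tilde{M}_{m,\epsilon,\Omega'}(z,w)$ is defined as a maximum of $g_{m,\epsilon}(z,u)$ over the set $\{r_m(\langle V_{M,\Omega'}\rangle^{1/(1-\epsilon)}+1)\le|u|\le|w|\}$ intersected with a fixed sector, its value depends on $w$ only through $|w|$ and is monotone non-decreasing in $|w|$. This monotonicity is exactly what makes the inner and outer pieces compatible.

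First I would introduce the constant-in-$w$ function
$$M_1(z,w):=\max\bigl(1,\tilde M_{m,\epsilon,\Omega'}(z,\tilde r_m(\Omega',V)+1)\bigr),$$
which is plurisubharmonic on $\Omega'\times\C$: restricting the PSH function $\tilde M_{m,\epsilon,\Omega'}$ from Lemma \ref{l:tildem} to the slice $w=\tilde r_m(\Omega',V)+1$ gives a PSH function of $z\in\Omega'$, so $M_1$ is PSH in $z$ and trivially constant in $w$. Separately, set
$$M_2(z,w):=\max\bigl(1,\tilde M_{m,\epsilon,\Omega'}(z,w)\bigr)$$
on $\Omega'\times\{w\in\C:|w|>\tilde r_{m,\epsilon}(\Omega',V)\}$; by Lemma \ref{l:tildem} together with the fact that the maximum of two PSH functions is PSH, $M_2$ is plurisubharmonic on its domain.

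Next I would verify that $M_2$ is dominated by $M_1$ throughout the overlap region $\{\tilde r_{m,\epsilon}(\Omega',V)\le|w|\le\tilde r_{m,\epsilon}(\Omega',V)+1\}$: by the monotonicity noted above,
$$\tilde M_{m,\epsilon,\Omega'}(z,w)\le\tilde M_{m,\epsilon,\Omega'}(z,\tilde r_{m,\epsilon}(\Omega',V)+1)\qquad\text{whenever}\ |w|\le\tilde r_{m,\epsilon}(\Omega',V)+1,$$
so that $M_2\le M_1$ there. Using the upper semicontinuity of $\tilde M_{m,\epsilon,\Omega'}$, this also gives, for every boundary point $(z_0,w_0)\in\Omega'\times\{|w|=\tilde r_{m,\epsilon}(\Omega',V)\}$,
$$\limsup_{\substack{(z,w)\to(z_0,w_0)\\ |w|>\tilde r_{m,\epsilon}(\Omega',V)}}M_2(z,w)\ \le\ M_1(z_0,w_0).$$
This is the hypothesis of the standard gluing theorem for plurisubharmonic functions (see, e.g., \cite[Corollary I.24]{l-g}), which yields that the function equal to $M_1$ on $\Omega'\times\{|w|\le\tilde r_{m,\epsilon}(\Omega',V)\}$ and to $\max(M_1,M_2)$ on $\Omega'\times\{|w|>\tilde r_{m,\epsilon}(\Omega',V)\}$ is plurisubharmonic on all of $\Omega'\times\C$.

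Finally I would identify this glued function with $M_{m,\epsilon,\Omega'}$ as given in the statement. For $|w|\le\tilde r_{m,\epsilon}(\Omega',V)$ it reduces to $M_1$, which is the inner case. For $\tilde r_{m,\epsilon}(\Omega',V)\le|w|\le\tilde r_{m,\epsilon}(\Omega',V)+1$ the monotonicity bound gives $\max(M_1,M_2)=M_1$, again matching the inner case. For $|w|\ge\tilde r_{m,\epsilon}(\Omega',V)+1$, monotonicity in $|w|$ yields $M_2\ge M_1$, so $\max(M_1,M_2)=M_2=\max(1,\tilde M_{m,\epsilon,\Omega'}(z,w))$, the outer case. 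The only nontrivial step is the invocation of the gluing theorem; everything else reduces to the slice-PSH property of $\tilde M_{m,\epsilon,\Omega'}$ and its monotonicity in $|w|$, both of which are immediate from Lemma \ref{l:tildem} and the definition.
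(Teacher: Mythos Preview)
Your proof is correct and follows essentially the same approach the paper indicates: the paper's own proof is only a reference to \cite[Lemma 2.2]{polar} together with the fact that the supremum of two plurisubharmonic functions is plurisubharmonic, and what you have written is a clean, self-contained spelling-out of exactly that argument --- define the constant-in-$w$ piece, define the outer piece, and glue using monotonicity in $|w|$ and the standard gluing lemma for PSH functions.
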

\begin{proof}
This proof follows just as the proof of \cite[Lemma 2.2]{polar}, if
one uses in addition the fact that the supremum of two plurisubharmonic
functions is plurisubharmonic (\cite[Proposition I.3]{l-g}).
\end{proof}

Note that the
dependence of $M_{m, \epsilon, \Omega'}(z,w)$ on $w$ is only through
the norm $|w|$. It follows from Lemmas \ref{l:tildem} and \ref{l:M_m} that
the function $r\mapsto M_{m, \epsilon, \Omega'}(z,r)$ is monotone increasing. The following lemma
demonstrates the relationship between
the order of $n_{V(z)}(r)$ and the order of $r\mapsto M_{m, \epsilon, \Omega'}(z,r)$.

\begin{lemma}\label{l:morder}
Let $\Omega'\Subset\Omega$
and let $\rho_{m, \epsilon, \Omega'}(z)$ be the order
of $r\rightarrow M_{m, \epsilon, \Omega'}(z,r)$.
We then have
$$\rho_{m, \epsilon, \Omega'}(z)= \max(d-\epsilon,
\text{order of}\; n_{V(z),m}(r))$$
for $z\in \Omega'$, where, as above, $n_{V(z),m}(r)$ is the number of
resonances of $H_{V(z)}$ on $\Lambda_m$,
 $m \in \Z^*$  of norm at most $r>0$.
\end{lemma}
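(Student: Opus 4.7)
The strategy is to reduce $M_{m,\epsilon,\Omega'}(z,r)$ to an expression along the positive real axis and then to use the Jensen-type identity (\ref{eq:jensenanalog}) to relate $g_{m,\epsilon}(z,r)$ to the resonance-counting function on $\Lambda_m$. By Lemma \ref{l:wheremaxgm}, for $r$ sufficiently large compared to $\langle V_{M,\Omega'}\rangle^{1/(1-\epsilon)}$, the maximum of $g_{m,\epsilon}(z,u)$ over any arc $\{|u|=s\}\cap U_{\Omega'}\cap\{|\Im u|\leq 1\}$ is attained at $u=s\in\R_+$. Combined with the definition in Lemma \ref{l:M_m}, this yields, for $r$ large,
\[
M_{m,\epsilon,\Omega'}(z,r) \;=\; \max\bigl(1,\max_{r_\ast\leq s\leq r} g_{m,\epsilon}(z,s)\bigr),
\]
with $r_\ast = r_m(\langle V_{M,\Omega'}\rangle^{1/(1-\epsilon)}+1)$ and the inner maximum over real $s$.

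I then apply (\ref{eq:jensenanalog}) to $h(\lambda)=f_m(z,\lambda)$, taking $R$ large enough that the results of Lemma \ref{l:fmbasic} apply on the real semi-axes $\{|s|\geq R\}$. The bound $\|m(I+VR_0\chi)^{-1}VT\|_{L^2\to L^2}\leq 1/2$ there forces $f_m(z,s)\neq 0$, so there are at most finitely many real zeros. The trace formula (\ref{eq:fmtrace}) combined with (\ref{Fbdd1}) and the trace-norm estimate of Lemma \ref{l:fmbasic} gives $|f_m'(z,s)/f_m(z,s)|\leq C\langle s\rangle^{d-2}$ on both real semi-axes, hence $\int_R^r |f_m'(z,s)/f_m(z,s)|\,ds=\mathcal{O}(r^{d-1})$. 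Substituting these bounds into (\ref{eq:jensenanalog}), and writing $n_m(z,t):=n_{+,R}(f_m(z,\cdot),t)$, $N_m(z,r):=\int_R^r n_m(z,t)/t\,dt$, produces
\[
\int_0^\pi \log|f_m(z,re^{i\theta})|\,d\theta \;=\; 2\pi N_m(z,r) + \mathcal{O}(r^{d-1}),
\]
so that $g_{m,\epsilon}(z,r) = r^{d-\epsilon} + 2\pi N_m(z,r) + \mathcal{O}(r^{d-1})$.

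By the correspondence noted immediately after (\ref{eq:fmfirst}), the zeros of $f_m(z,\cdot)$ in the open upper half plane agree, with multiplicity and up to a finite set, with the resonances of $H_{V(z)}$ on $\Lambda_m$; this gives $n_m(z,r)=n_{V(z),m}(r)+\mathcal{O}(1)$. Let $\nu$ be the order of $n_{V(z),m}$; a standard comparison shows $N_m(z,\cdot)$ has the same order. Since $\epsilon<1$, the error $\mathcal{O}(r^{d-1})$ is $o(r^{d-\epsilon})$, so $g_{m,\epsilon}(z,r)\to+\infty$, and considering the cases $\nu\leq d-\epsilon$ and $\nu>d-\epsilon$ separately one sees that $g_{m,\epsilon}(z,\cdot)$ has order exactly $\max(\nu,d-\epsilon)$ along the positive real axis. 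Its running maximum $\max_{s\leq r}g_{m,\epsilon}(z,s)$ shares this order because $g_{m,\epsilon}(z,\cdot)$ is asymptotic (to within $\mathcal{O}(r^{d-1})$) to a sum of the monotone functions $r^{d-\epsilon}$ and $2\pi N_m(z,r)$; combined with the reduction in the first paragraph, this proves $\rho_{m,\epsilon,\Omega'}(z)=\max(\nu,d-\epsilon)$.

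The principal subtlety is that $\int_0^\pi\log|f_m(z,re^{i\theta})|\,d\theta$ can be negative, so one cannot directly invoke Lemma \ref{l:horder} as a black box to identify its order; working with the signed identity (\ref{eq:jensenanalog}) retains the nonnegative contribution $2\pi N_m(z,r)$. The choice $\epsilon<1$ is exactly what makes the Jensen error $\mathcal{O}(r^{d-1})$ negligible relative to the dummy term $r^{d-\epsilon}$ introduced in (\ref{eq:gmedef}).
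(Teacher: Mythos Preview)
Your argument is essentially the paper's: both apply the Jensen-type identity (\ref{eq:jensenanalog}) to $f_m(z,\cdot)$, use the trace bounds of Lemma \ref{l:fmbasic} and (\ref{eq:fmtrace}) to show the semi-axis contributions are $\mathcal{O}(r^{d-1})$, and then compare with the explicit $r^{d-\epsilon}$ term. The paper packages the first step into Lemma \ref{l:horder} and argues by cases according as the order of $n_{V(z),m}$ exceeds $d-1$; you instead unwind (\ref{eq:jensenanalog}) directly and keep the nonnegative term $2\pi N_m(z,r)$ visible, which is a perfectly good (and arguably tidier) variant, and your closing remark about the sign issue is the right reason to do so.

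There is one small gap. Lemma \ref{l:wheremaxgm} does \emph{not} say that the arc maximum of $g_{m,\epsilon}(z,\cdot)$ on $\{|u|=s,\ |\Im u|\le 1\}$ is attained at $u=s$; it only says the maximum is not on the boundary $|\Im u|=1$ (its proof compares $u=r$ with the two endpoints). To reduce $\tilde{M}_{m,\epsilon,\Omega'}(z,r)$ to a real-axis quantity you should invoke Lemma \ref{l:fmint}: for $|u|=s$, $|\Im u|\le 1$, it gives
\[
\Bigl|\,g_{m,\epsilon}(z,u)-g_{m,\epsilon}(z,s)\,\Bigr|
\;\le\; C\langle s\rangle^{d-3}
\;+\;\bigl|\,s^{d-\epsilon}(1-\cos((d-\epsilon)\phi))\,\bigr|
\;=\;\mathcal{O}(s^{d-2}),
\]
since $|\phi|=\mathcal{O}(s^{-1})$. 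This $\mathcal{O}(s^{d-2})$ correction is harmless for the order computation and is exactly how the paper passes from (\ref{eq:fmlimit}) to (\ref{eq:fmorder1}). With that adjustment your proof is complete.
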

\begin{proof}
We first derive a convenient expression for the order of $n_{V(z),m}(r)$.
Recall that the zeros of $f_m(z,\lambda)$ with $\lambda \in \Lambda_0$
correspond, with multiplicity, to poles of $R_{V(z)}(\lambda)$ with
$\lambda \in \Lambda_m$ with at most a finite number of exceptions
for fixed $z$.
Using Lemma \ref{l:horder}, (\ref{eq:bd1}), and (\ref{eq:bd2}), we see that
if the order of $n_{V(z),m}(r)$ exceeds $d-1$, then it is equal to
\begin{equation}\label{eq:fmlimit}
\lim\sup_{r\rightarrow \infty}\frac{\log \int_0^{\pi}
\log |f_m(z,re^{i\theta})|d\theta}{\log r}.
\end{equation}  Conversely, if the limit in (\ref{eq:fmlimit}) exceeds
$d-1$, then it is equal to the order of $n_{V(z),m}(r)$.
Additionally, we note that by Lemma \ref{l:fmint}, if the limit in
(\ref{eq:fmlimit}) exceeds $d-1$, then it is equal to
\beq\label{eq:fmorder1}
\lim\sup_{r\rightarrow \infty}\frac{\log \left\{ \displaystyle{ \max_{|u|=r,\; |\Im u|\leq 1,\;
|\arg u|\leq \pi /4}\int_0^{\pi}
\log |f_m(z,ue^{i\theta})|d\theta} \right\} }{\log r}.
\eeq
We now relate the order of $n_{V(z),m}(r)$, given by (\ref{eq:fmorder1}) provided
it is greater than $d-1$, to the order of $M_{m,\epsilon,\Omega'}(z,u)$. When $r$ is very large,
\beq\label{eq:gorder1}
\log|\exp(u^{d-\epsilon})|\approx r^{d-\epsilon},\;
~\text{for}\; |u|=r,\; |\Im u|\leq 1,\;
|\arg u|\leq \pi /4 .
\eeq
We define the constant $r_{m,V, \Omega'} \equiv r_m(\langle V_{M,\Omega'}
\rangle^{1/(1-\epsilon)}+1)$
where $r_m$ is defined in Lemma \ref{l:wheremaxgm}.
Using (\ref{eq:gorder1}), this constant $r_{m,V,\Omega'}$, and the definitions
of $g_{m,\epsilon}$ and $M_{m,\epsilon, \Omega'}$,
we obtain
\begin{align*}
\lim \sup _{r\rightarrow \infty}\frac{\log M_{m,\epsilon, \Omega'}(z,r)}
{\log r}
& = \lim \sup _{r\rightarrow \infty}
\frac{\log \left\{ \displaystyle{ \max_{ r_{m,V, \Omega'} \leq |u| \leq r, \; |\Im u |\leq 1, \; |\arg u|\leq \pi/4}
g_{m,\epsilon}(z,u) } \right\} }{\log r}
\\ & = \max (d-\epsilon, \; \text{order of }\; n_{V(z),m}(r)).
\end{align*}
\end{proof}

\noindent
We now come to the main result of this section.

\begin{thm}\label{main-psh1}
Let $\Omega\subset \Complex^{\cd}$ be an open connected
set, let $m\in \Integers$, and let $V(z,x)$ satisfy the assumptions
(V).  If for some $z_m\in \Omega$, the function $n_{V(z_m),m}(r)$ has order
$d$, then there is a pluripolar set $E_m\subset \Omega$
such that $n_{V(z),m}(r)$ has order $d$ for $z\in \Omega \setminus E_m$.
Moreover, if for each $m\in \Integers^*$, there is a $z_m$ such that
$n_{V(z_m),m}(r)$ has order
$d$, then there is a pluripolar set $E$ such that for every $m\in \Integers^*$,
the function $n_{V(z),m}(r)$ has order $d$ for $z\in \Omega \setminus E$.
\end{thm}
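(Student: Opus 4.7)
The plan is to combine the plurisubharmonic function $M_{m,\epsilon,\Omega'}$ built in Lemma~\ref{l:M_m} with the classical theorem on limsups of sequences of plurisubharmonic functions. Fix $m\in\Z^*$. I would prove the assertion on each relatively compact, connected open subset $\Omega'\Subset\Omega$ containing the distinguished point $z_m$; an exhaustion $\Omega=\bigcup_j\Omega'_j$ together with the fact that countable unions of pluripolar sets are pluripolar (\cite{l-g}) then yields $E_m\subset\Omega$. The simultaneous statement for all $m\in\Z^*$ follows by applying the same union principle to $E=\bigcup_{m\in\Z^*}E_m$.

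On a fixed $\Omega'\Subset\Omega$ containing $z_m$, set
\[
u_k(z)=\frac{\log M_{m,\epsilon,\Omega'}(z,e^k)}{k},\qquad z\in\Omega',
\]
for $k$ sufficiently large that $M_{m,\epsilon,\Omega'}(z,e^k)\geq 1$. The sequence $\{u_k\}$ consists of plurisubharmonic functions on $\Omega'$ (inherited from the $\log|f_m|$ and pluriharmonic $\Re(u^{d-\epsilon})$ structure of $M_{m,\epsilon,\Omega'}$), and Lemma~\ref{l:morder} gives
\[
\limsup_{k\to\infty}u_k(z)=\max\bigl(d-\epsilon,\ \text{order of }n_{V(z),m}(r)\bigr),
\]
so the hypothesis at $z_m$ yields $\limsup u_k(z_m)=d$. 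I would then invoke the standard result (see \cite{l-g}) that the upper semicontinuous regularization $u^*$ of $u:=\limsup u_k$ is plurisubharmonic and $\{u<u^*\}$ is pluripolar whenever $\{u_k\}$ is locally uniformly bounded above. Granting the uniform upper bound discussed below, $u^*\leq d$ on $\Omega'$ while $u^*(z_m)\geq d$, so the maximum principle on the connected set $\Omega'$ forces $u^*\equiv d$. Outside the pluripolar set $\{u<u^*\}$ one then has $u(z)=d$; since $d-\epsilon<d$, Lemma~\ref{l:morder} identifies this with the order of $n_{V(z),m}(r)$ being exactly $d$.

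The main obstacle is the locally uniform upper bound $M_{m,\epsilon,\Omega'}(z,w)\leq C_{\Omega'}|w|^d$ for large $|w|$, which translates into $u_k(z)\leq d+o(1)$ uniformly on $\Omega'$. I would derive it from the Jensen-type identity (\ref{eq:jensenanalog}) applied to $h=f_m(z,\cdot)$: the counting-function integral is controlled by Vodev's bound $n_{V(z),m}(r)\leq C\langle r\rangle^d$, which holds uniformly for $z\in\overline{\Omega'}$ because the holomorphic family $V(z)$ has uniformly bounded $L^\infty$-norm there; the real-axis integrals of $h'/h$ are bounded uniformly by (\ref{eq:bd1})--(\ref{eq:bd2}); and the boundary terms are of lower order. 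This yields $\int_0^\pi\log|f_m(z,re^{i\theta})|\,d\theta\leq C_{\Omega'}r^d$ uniformly in $z\in\overline{\Omega'}$ for large $r$, which combined with $\log|e^{u^{d-\epsilon}}|\leq|u|^{d-\epsilon}$ gives the required bound on $g_{m,\epsilon}$ and hence on $M_{m,\epsilon,\Omega'}$. Once this is in place, the PSH-limsup theorem and the maximum principle close the argument.
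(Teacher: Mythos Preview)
Your overall strategy matches the paper's: build the plurisubharmonic function $M_{m,\epsilon,\Omega'}$ on $\Omega'\times\C$, extract from it a sequence of plurisubharmonic functions on $\Omega'$ whose $\limsup$ recovers the order, apply the $\limsup$/regularization theorem and the maximum principle, and finish by exhausting $\Omega$ and taking countable unions of pluripolar sets. The exhaustion and the passage to all $m\in\Z^*$ are exactly as in the paper.

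There is, however, a genuine gap in your choice of sequence. You set $u_k(z)=k^{-1}\log M_{m,\epsilon,\Omega'}(z,e^k)$ and assert that $u_k$ is plurisubharmonic, citing ``the $\log|f_m|$ and pluriharmonic $\Re(u^{d-\epsilon})$ structure'' of $M$. But Lemma~\ref{l:M_m} gives you only that $(z,w)\mapsto M_{m,\epsilon,\Omega'}(z,w)$ is plurisubharmonic; hence $z\mapsto M_{m,\epsilon,\Omega'}(z,e^k)$ is plurisubharmonic for each fixed $k$. The \emph{logarithm} of a positive plurisubharmonic function is not in general plurisubharmonic (e.g.\ $\phi(z)=1+2|z|^2+\Re z^2$ is subharmonic on $\C$, while $\log\phi$ is not), and nothing in the construction of $g_{m,\epsilon}$ or $M_{m,\epsilon,\Omega'}$ makes $M$ log-plurisubharmonic. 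Your parenthetical justification does not address this: the $\log$'s inside $g_{m,\epsilon}$ are already used up in making $M$ itself PSH, and a further $\log$ on the outside is a different matter. Without plurisubharmonicity of the $u_k$, the $\limsup$/regularization theorem you invoke does not apply.

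The paper handles exactly this point by quoting \cite[Proposition~I.40]{l-g}: given a plurisubharmonic function on $\Omega'\times\C$, that proposition manufactures a sequence $\{\psi_q\}$ of \emph{negative} plurisubharmonic functions on $\Omega'$ with $\limsup_q\psi_q(z)=-1/\rho_{m,\epsilon,\Omega'}(z)$. Negativity gives the uniform upper bound for free, and Vodev's pointwise bound $\rho_{m,\epsilon,\Omega'}(z)\le d$ (via Lemma~\ref{l:morder}) yields $\limsup_q(\psi_q+1/d)\le 0$ with equality at $z_m$; then \cite[Proposition~I.39]{l-g} gives the pluripolar exceptional set. You should either invoke Proposition~I.40 directly, or supply an honest argument that your particular $u_k$ are plurisubharmonic. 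As a secondary remark, your route also requires a \emph{uniform} (in $z\in\overline{\Omega'}$) bound $M_{m,\epsilon,\Omega'}(z,r)\le C_{\Omega'}r^d$, whereas the paper only needs the pointwise order bound; your sketch via \eqref{eq:jensenanalog} is plausible but would still need the compactness argument for the boundary terms at $R$ carried out in Lemma~\ref{l:maxrbig}.
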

The proof of this theorem uses the function $M_{m,\epsilon, \Omega'}$ which
we have developed.  Using this function, what remains to prove resembles
the proof of \cite[Corollary 1.42]{l-g}; see also
\cite[Proposition 2.3]{polar}.  We include the proof for the convenience of
the reader.
\begin{proof}
Let $\Omega'\Subset \Omega$ be an open connected set, with $z_m\in \Omega'$,
and let $0<\epsilon<1$.
By \cite[Proposition 1.40]{l-g},
since $M_{m,\epsilon, \Omega'}$ is
plurisubharmonic on $\Omega'\times \Complex$,
there is a sequence $\{ \psi_q\}$
of negative plurisubharmonic functions on $\Omega'$ so that
$$-(\rho_{m, \epsilon, \Omega'}(z))^{-1}=\lim \sup _{q\rightarrow \infty}
\psi_q(z)\; \text{for} \; z\in \Omega'.$$
By results of \cite{[Vodev1],[Vodev2]}
and Lemma \ref{l:morder}, the order $\rho_{m, \epsilon, \Omega'}(z)$
of $M_{m,\epsilon, \Omega'}$
does not exceed $d$.  Thus
$$\lim \sup _{q\rightarrow \infty}
\left(\psi_q(z)+\frac{1}{d}\right)\leq 0$$
and $$\lim \sup _{q\rightarrow \infty}
\left(\psi_q(z_m)+\frac{1}{d}\right)= 0.$$
Thus by \cite[Propostion 1.39]{l-g},
there is a pluripolar set $E'_m\subset \Omega'$ such that
$\rho_{m, \epsilon, \Omega'}(z) = d$ for $z\in \Omega'\setminus E'_m$.
Thus, by Lemma \ref{l:morder}, the order of $n_{V(z),m}(r)$ is
$d$ for $z\in \Omega'\setminus E'_m$.

Now let $\{ \Omega_j ~|~ j \in  \N \}$ be a countable collection of subsets $\Omega_j \Subset \Omega$,
$\cup_{j \in \N} \Omega_j = \Omega$, with $z_m \in \Omega_1$.
Then any $z \in \Omega$ belongs to $\Omega_j$, for some $j$, so $z_m$
and $z$ belong to $\Omega_j$. Applying the analysis above to $\Omega_j$, there is a pluripolar
set $E_{m,j} \subset \Omega_j$ so that the order of $n_{V(z),m}(r)$ is
$d$ for $z\in \Omega_j \setminus E_{m,j}$. We define $E_m = \cup_{j \in \N} E_{m,j}
\subset \Omega$. Then $E_m$ is pluripolar since it is the countable union of
pluripolar sets \cite[Proposition 1.37]{l-g}
and the order of $n_{V(z),m}(r)$ is
$d$ for $z\in \Omega \setminus E_m$.

Suppose that for each $m\in\Integers^*$ there is a $z_m$ with
$n_{V(z_m),m}(r)$ having order $d$.
Set $E=\cup_{\Z^*} E_m$, where $E_m$ is a pluripolar set such that
$n_{V(z),m}(r)$ has order $d$ for $z\in \Omega\setminus E_m$, as
guaranteed by the first part of the theorem.  Then
for all $m\in\Integers,$ $n_{V(z),m}(r)$ has order $d$ for
$z\in \Omega \setminus E$ and $E$ is pluripolar \cite[Proposition 1.37]{l-g}.
\end{proof}



\section{Proof of Theorem \ref{main1}}\label{mainth1}

In this section, we sketch the proof of Theorem \ref{mainth1}.
The main ingredients are Theorem \ref{main-psh1},
the lower bound on the number of resonances for the step potential proved in section \ref{examples},
and the argument of \cite{ch-hi1}.
The function $f_m (\lambda)$, defined in (\ref{eq:fmfirst}) for $m \in \Z^*$,
is analytic in
a neighborhood of
$\{ \lambda \in \Lambda ~|~  0 \leq \arg \lambda \leq \pi, ~\mbox{and} ~
|\lambda|> c_0\langle \|V\|_{L^{\infty}} \rangle \}$, where $c_0 > 0$
is the constant in Lemma \ref{l:basics}.
To indicate the dependence on
a particular potential $V$, we will write $f_{V,m}(\lambda)$.  However,
we use a fixed $\chi=\chi_V$, with $\chi\equiv 1$ on $K$.
For positive constants $N,M,q > 0$ and $j > 2 N c_0$, we define subsets of
$L^\infty_0 ( K; F)$ by
\bea\label{sets1}
A_m (N,M,q,j) & \equiv & \left\{ V \in L^\infty_0 (K; F) ~:~ \langle \| V \|_{L^\infty}
\rangle \leq N, \right. \nonumber \\
 & & \int_0^\pi ~\log | f_{V,m} ( r e^{i \theta} ) | ~d \theta
\leq M  r ^q, \nonumber \\
 & & \left. \mbox{for}\;  2N c_0 \leq r \leq j \right\} .
\eea

\begin{lemma}\label{closed1}
The set $A_m (N,M,q,j) \subset L_0^\infty (K;F)$ is closed.
\end{lemma}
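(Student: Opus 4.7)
The plan is to show both defining conditions of $A_m(N,M,q,j)$ are preserved under $L^\infty$-convergence. Let $(V_n) \subset A_m(N,M,q,j)$ with $V_n \to V$ in $L_0^\infty(K;F)$. The norm condition $\langle \|V\|_{L^\infty}\rangle \leq N$ is immediate from continuity of the $L^\infty$-norm. It remains to show that for each $r \in [2Nc_0, j]$,
\[
F_r(V) := \int_0^\pi \log |f_{V,m}(re^{i\theta})|\, d\theta \leq Mr^q.
\]
I will prove that $V \mapsto F_r(V)$ is continuous on the closed ball $\{\langle\|V\|_{L^\infty}\rangle \leq N\}$; then $F_r(V) = \lim_n F_r(V_n) \leq Mr^q$, and intersecting these closed conditions over $r \in [2Nc_0, j]$ shows $A_m(N,M,q,j)$ is closed.

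The first step is uniform convergence of $f_{V_n,m}(\lambda)$ to $f_{V,m}(\lambda)$ on the arc $\{|\lambda| = r,\ 0 \leq \arg \lambda \leq \pi\}$. Since $r \geq 2Nc_0 \geq c_0\langle \|V_n\|_{L^\infty}\rangle$ uniformly in $n$, Lemma \ref{l:basics} gives that $(I + V_n R_0(\lambda)\chi)^{-1}$ is uniformly bounded on $L^2$ and depends Lipschitz-continuously on $V_n$ in operator norm (via the Neumann series representation of the inverse). Lemma \ref{l:fmbasic}, through the factorization of $T(\lambda)$, shows that $V_n T(\lambda)\chi$ depends continuously on $V_n$ in trace norm. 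Continuity of the Fredholm determinant on the trace class then yields the claimed uniform convergence of $f_{V_n,m}$ to $f_{V,m}$ on the arc.

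The main obstacle is to pass from this uniform convergence to convergence of $F_r(V_n) \to F_r(V)$, because $\log|f_{V,m}|$ may have integrable logarithmic singularities at zeros of $f_{V,m}$ on the arc. The plan is to exploit the holomorphicity of $f_{V,m}(\lambda)$ in $\lambda$ on a neighborhood of the arc (guaranteed by Lemma \ref{l:basics}). Since $f_{V,m}$ is not identically zero, it has finitely many zeros $\lambda_1,\ldots,\lambda_k$ (counted with multiplicity) in a fixed compact neighborhood $U$ of the arc, and by Hurwitz's theorem the zeros $\lambda_j^{(n)}$ of $f_{V_n,m}$ in $U$ converge to the $\lambda_j$ with no extraneous zeros for large $n$. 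Writing $f_{V_n,m}(\lambda) = g_n(\lambda)\prod_{j=1}^k (\lambda - \lambda_j^{(n)})$, with $g_n$ nonvanishing on $U$ and $g_n \to g$ uniformly on the arc, yields
\[
F_r(V_n) = \int_0^\pi \log|g_n(re^{i\theta})|\, d\theta + \sum_{j=1}^k \int_0^\pi \log|re^{i\theta} - \lambda_j^{(n)}|\, d\theta.
\]
Each summand converges to its $V$-analogue: the first by uniform convergence and continuity of $\log$ away from zero, the second by continuity on $\C$ of the classical logarithmic potential $z_0 \mapsto \int_0^\pi \log|re^{i\theta} - z_0|\, d\theta$, since integrable logarithmic singularities are stable under perturbation of $z_0$. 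Hence $F_r(V_n) \to F_r(V)$, completing the proof.
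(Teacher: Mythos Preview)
Your proof is correct and follows essentially the same route as the paper: both establish uniform convergence of $f_{V_n,m}$ to $f_{V,m}$ on the relevant arcs via trace-norm continuity of the Fredholm determinant, using the bounds of Lemmas \ref{l:basics} and \ref{l:fmbasic}. The paper invokes the explicit inequality $|\det(I+A)-\det(I+B)|\leq \|A-B\|_1 e^{\|A\|_1+\|B\|_1+1}$ and then simply writes ``Consequently, $\int_0^\pi \log|f_{V,m}(re^{i\theta})|\,d\theta \leq Mr^q$,'' whereas you supply the missing justification for that step---the Hurwitz/factorization argument showing that $\int_0^\pi \log|f_{V_n,m}|\,d\theta \to \int_0^\pi \log|f_{V,m}|\,d\theta$ despite possible zeros on the arc---which the paper leaves implicit.
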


\begin{proof}
Let $V_k\in A_m (N,M, q, j)$, such that $V_k\rightarrow V$ in the $L^{\infty}$ norm.
Then clearly $\langle \|V\|_{L^\infty}\rangle \leq N$.
We shall use (\ref{eq:fmfirst})
and the bound
\begin{equation}\label{eq:detbd}
|\det(I+A)-\det(I+B)|\leq \| A-B\|_1e^{\|A\|_1+\|B\|_1+1},
\end{equation}
see, for example, \cite{[Simon2]}.
With $A = (I + V_j R_0 (\lambda) \chi)^{-1} V_j T(\lambda) \chi$ and
$B= (I + V R_0 (\lambda) \chi)^{-1} V T(\lambda) \chi$, we
have
\beq\label{trace1}
\| A - B \|_1 \leq S_1+S_2,
\eeq
where
\beq\label{trace2}
S_1 = \left\|
\left\{ (I+V_j R_0 (\lambda) \chi)^{-1}(V-V_j) \chi R_0 (\lambda)
\chi (I+ VR_0 (\lambda) \chi)^{-1}
\right\} V \chi T(\lambda) \chi \right \|_1 ,
\eeq
and
\beq\label{trace3}
S_2 = \left \| ( I + V_j R_0 (\lambda) \chi )^{-1} (V_j - V) \chi T(\lambda) \chi
\right\|_1 .
\eeq
We note that when  $0\leq \arg \lambda \leq \pi$, and $
2N c_0 \leq |\lambda| \leq j$,
the term $( 1 + V_j R_0 (\lambda) \chi )^{-1}$ is uniformly bounded by (\ref{eq:invbound}).
Furthermore, it converges to
$( 1 + V R_0 (\lambda) \chi )^{-1}$ in operator norm as
$V_j \rightarrow V$ and we have the bound (\ref{eq:R0bound}).
Consequently, by the trace bound in Lemma \ref{l:fmbasic}, the
terms $S_1$ and $S_2$ in (\ref{trace2})-(\ref{trace3})
converge to zero.
Since the individual trace norms are uniformly bounded, it follows from (\ref{eq:detbd})
that $f_{V_k,m}(\lambda) \rightarrow f_{V,m}(\lambda)$ uniformly. Consequently,
$$ \int_0^\pi \log | f_{V,m}(r e^{i \theta})| ~d \theta \leq M  r^q , $$
for $r$ in the specified region,
so that $V \in A_m (N,M,q,j)$.
\end{proof}

In the next step, we characterize
those $V \in L^\infty_0 (K; F)$ for which the resonance counting function exponent
is strictly less than the dimension $d$.
For $N,\;M,\;q>0$, let
\beq
B_m (N,M,q)= \bigcap_{j\geq 2N c_0}A_m(N,M,q,j).
\eeq
Note that $B_m(N,M,q)$ is closed by Lemma \ref{closed1}.

\begin{lemma}\label{l:inB}
Let $V\in L^{\infty}_0 (K;F)$, with
$$\limsup _{r\rightarrow
\infty}\frac{\log n_{V,m} (r)}{\log r}<d.
$$
Then there exist $N,\; M\in \Natural$,
$l\in \Natural$, such that $V\in B_m (N,M,d-1/l)$.
\end{lemma}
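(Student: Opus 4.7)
\medskip

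\noindent
\textbf{Proof plan for Lemma \ref{l:inB}.}
The plan is to translate the hypothesis on the growth of $n_{V,m}(r)$ into an upper bound on the semicircle integral of $\log|f_{V,m}|$, using the Jensen-type identity (\ref{eq:jensenanalog}). The first condition in the definition of $A_m(N,M,q,j)$ is handled by simply choosing an integer $N$ so that $N\geq\langle\|V\|_{L^{\infty}}\rangle$. To apply the machinery cleanly, I will pick $N$ large enough that, in addition, $2Nc_0$ exceeds $c_{0,m}\langle\|V\|_{L^{\infty}}\rangle$ (so that Lemma \ref{l:fmbasic} applies for $|\lambda|\geq 2Nc_0$), exceeds the (finitely many) poles of $(I+VR_0(\lambda)\chi)^{-1}$ in the closed upper half of $\Lambda_0$, and exceeds the (finitely many) real zeros of $f_{V,m}$. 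Then $f_{V,m}$ is holomorphic and zero-free on the real segments $s\geq 2Nc_0$ and $s\leq -2Nc_0$, so Lemma \ref{l:horder} and the formula (\ref{eq:jensenanalog}) apply with $R=2Nc_0$.

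Since $\rho_{V,m}<d$ by hypothesis, pick an integer $l\geq 2$ with $d-1/l>\rho_{V,m}$, and choose $\delta>0$ so that $\rho_{V,m}+\delta<d-1/l$. Then $n_{V,m}(t)\leq C_{\delta}t^{\rho_{V,m}+\delta}$ for $t$ large, so
\begin{equation*}
\int_R^r \frac{n_{V,m}(t)}{t}\,dt \;=\; \mathcal{O}(r^{\rho_{V,m}+\delta}) \;=\; \mathcal{O}(r^{d-1/l}).
\end{equation*}
On the other hand, using the trace-class identity (\ref{eq:fmtrace}) together with the operator bounds from Lemmas \ref{l:basics} and \ref{l:fmbasic}, we have $|f_{V,m}'(s)/f_{V,m}(s)|\leq C\langle\|V\|_{L^{\infty}}\rangle\langle s\rangle^{d-2}$ for real $|s|\geq 2Nc_0$, hence the line integrals in (\ref{eq:jensenanalog}) are $\mathcal{O}(r^{d-1})$; since $d-1<d-1/l$ for $l\geq 2$, these too are $\mathcal{O}(r^{d-1/l})$. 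The boundary terms in (\ref{eq:jensenanalog}) contribute only $\mathcal{O}(\log r)$. Substituting these three estimates into (\ref{eq:jensenanalog}) yields
\begin{equation*}
\int_0^{\pi}\log|f_{V,m}(re^{i\theta})|\,d\theta \;=\; \mathcal{O}(r^{d-1/l}) \qquad (r\to\infty).
\end{equation*}

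The remaining point is to promote this asymptotic bound to an inequality $\int_0^{\pi}\log|f_{V,m}(re^{i\theta})|\,d\theta\leq Mr^{d-1/l}$ valid for \emph{every} $r\geq 2Nc_0$, since the definition of $A_m(N,M,d-1/l,j)$ and hence of $B_m$ requires the bound from $r=2Nc_0$ onward. For $r$ larger than some threshold $R_0$, the asymptotic bound supplies a suitable constant. On the compact interval $[2Nc_0,R_0]$, the upper bound $\log|f_{V,m}(\lambda)|\leq C\langle\|V\|_{L^{\infty}}\rangle|\lambda|^{d-2}e^{C|\Im\lambda|}$ (coming from $|\det(I+A)|\leq e^{\|A\|_1}$ and the trace bound of Lemma \ref{l:fmbasic}) keeps the integral uniformly bounded above. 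Hence $M\in\N$ can be chosen large enough so that the inequality holds for all $r\geq 2Nc_0$; then $V\in A_m(N,M,d-1/l,j)$ for every $j\geq 2Nc_0$, so $V\in B_m(N,M,d-1/l)$.

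The main technical obstacle is step~(3), namely the transition from the order statement $\rho_{V,m}<d$ to a concrete polynomial upper bound on the semicircle integral; this is exactly the content of the Jensen-type formula (\ref{eq:jensenanalog}) combined with the real-axis trace estimates, and is the reason that Lemma \ref{l:horder} and the careful choice of cutoffs in Section~\ref{preliminaries1} were set up. Once this quantitative form of ``order versus integral'' is available, the assembly of the constants $N$, $M$, $l$ is routine.
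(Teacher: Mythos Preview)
Your proposal is correct and follows essentially the same route as the paper. The paper's proof is terser: it cites Lemma~\ref{l:horder} to pass from the hypothesis $\rho_{V,m}<d$ to $\limsup_{r\to\infty}\frac{\log\int_0^\pi\log|f_{V,m}(re^{i\theta})|\,d\theta}{\log r}=p<d$, and then observes that this gives $\int_0^\pi\log|f_{V,m}(re^{i\theta})|\,d\theta\le M r^{p'}$ for some $p'<d$ and all $r\ge c_0\langle\|V\|_{L^\infty}\rangle$; you instead unpack Lemma~\ref{l:horder} by invoking the Jensen identity~(\ref{eq:jensenanalog}) directly and feeding in the trace estimate~(\ref{eq:fmtrace}), which is exactly what the proof of Lemma~\ref{l:horder} does. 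Your explicit requirement $l\ge 2$ (so that the $\mathcal{O}(r^{d-1})$ real-axis contributions are absorbed by $r^{d-1/l}$) and your treatment of the compact range $[2Nc_0,R_0]$ via the crude bound $|\det(I+A)|\le e^{\|A\|_1}$ make explicit two steps the paper leaves implicit, but there is no substantive difference in strategy.
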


\begin{proof}
Applying Lemma \ref{l:horder} with
$h = f_{V,m}$, we find
$$
\limsup_{r\rightarrow \infty}\frac{\log \int_0^\pi
\log| f_{V,m} (   re^{i\theta})|
~d \theta }{\log r}
=p<d.
$$
It follows that there is a $p'\geq p$, $p'<d$,
and an $M\in \Natural $ such that
$$
\int_0^\pi \log | f_{V,m} (r e^{i \theta})| ~d \theta \leq M r^{p'}
$$
when $r \geq c_0 \langle \|V\|_{L^\infty} \rangle$.  Choose $l\in \Natural $
so that $p'\leq d-1/l$ and $N\in \Natural$ so that $N\geq \|V\|_{L^\infty}$,
and then  $V\in B_m(N,M,d-1/l)$ as desired.
\end{proof}

\begin{lemma}
\label{l:gd}
The set
$$
\mathcal{M}_m = \left\{ V\in L^{\infty}_0(K;F): \;
\lim \sup_{r\rightarrow \infty}\frac{\log n_{V,m}(r)}{\log r}=d \right\}
$$
is a $G_{\delta}$ set, and thus
$$
\mathcal{M}=\bigcap_{m\in \Z^*}\mathcal{M}_m
$$
is as well.
\end{lemma}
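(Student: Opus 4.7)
The plan is to express the complement $\mathcal{M}_m^c$ as a countable union of closed sets, which makes $\mathcal{M}_m$ a $G_\delta$ set, and then deduce the statement for $\mathcal{M}$ from the fact that $G_\delta$ sets are closed under countable intersections.

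By Vodev's upper bound (\ref{vodev2}), $\rho_{V,m} \leq d$ for every $V \in L^\infty_0(K; F)$, so $V \in \mathcal{M}_m^c$ if and only if $\limsup_{r\to\infty}\log n_{V,m}(r)/\log r < d$. Lemma \ref{l:inB} immediately gives the inclusion
\[
\mathcal{M}_m^c \subset \bigcup_{N, M, l \in \N} B_m(N, M, d - 1/l).
\]
For the reverse inclusion, suppose $V \in B_m(N, M, d-1/l)$, so that $\int_0^\pi \log|f_{V,m}(re^{i\theta})|\,d\theta \leq M r^{d-1/l}$ for every $r \geq 2Nc_0$. Applying the Jensen-type identity (\ref{eq:jensenanalog}) with $h = f_{V,m}$ together with the trace-derivative bounds (\ref{eq:bd1}) and (\ref{eq:bd2}), whose constants depend only on $N$ because $\|V\|_\infty \leq N$, one obtains $\int_R^r n_{+,R}(f_{V,m}, t)/t\,dt = O(r^{d-1/l}) + O(r^{d-1})$. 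Monotonicity of $t \mapsto n_{+,R}(f_{V,m}, t)$ then yields $n_{+,R}(f_{V,m}, r) = O(r^{\max(d-1/l,\, d-1)})$, and since the zeros of $f_{V,m}$ on $\Lambda_0$ correspond with multiplicity (up to finitely many exceptions) to the poles of $\chi_V R_V\chi_V$ on $\Lambda_m$, this gives $\rho_{V,m} \leq \max(d-1/l,\, d-1) < d$, so $V \in \mathcal{M}_m^c$.

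Having established the equality $\mathcal{M}_m^c = \bigcup_{N,M,l\in\N} B_m(N,M,d-1/l)$, I note that each $B_m(N,M,q)$ is, by definition, the intersection of the closed sets $A_m(N,M,q,j)$ of Lemma \ref{closed1} and is therefore closed. Hence $\mathcal{M}_m^c$ is $F_\sigma$ and $\mathcal{M}_m$ is $G_\delta$. Finally, $\mathcal{M} = \bigcap_{m \in \Z^*}\mathcal{M}_m$ is a countable intersection of $G_\delta$ sets, hence $G_\delta$.

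The main obstacle is the reverse inclusion, i.e., converting the pointwise growth bound on the angular integral of $\log|f_{V,m}|$ into a pointwise growth bound on $n_{V,m}(r)$ \emph{uniformly} in $V \in B_m(N,M,d-1/l)$. This requires verifying that the constants arising from the Jensen-type formula and the trace-norm bounds of Lemma \ref{l:fmbasic} depend only on $N$, $K$, $\chi$, and $m$, not on $V$ itself; the estimates already carried out in Lemma \ref{l:morder} and in (\ref{eq:bd1})--(\ref{eq:bd2}) supply essentially all the ingredients, so the remaining work is bookkeeping rather than substantially new analysis.
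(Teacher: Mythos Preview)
Your proof is correct and follows the same route as the paper: both show $\mathcal{M}_m^c = \bigcup_{N,M,l} B_m(N,M,d-1/l)$, which is $F_\sigma$ since each $B_m$ is closed. For the reverse inclusion the paper simply invokes Lemma~\ref{l:horder} (order of $n_{V,m}$ equals $d$ iff the angular integral has limsup order $d$), while you unpack the Jensen identity (\ref{eq:jensenanalog}) directly; these are the same argument.

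One remark: your closing paragraph misidentifies the obstacle. No uniformity in $V$ is needed for the reverse inclusion. The statement ``$V\in B_m(N,M,d-1/l)\Rightarrow \rho_{V,m}<d$'' is a pointwise implication for each fixed $V$, so the constants arising from the Jensen formula (in particular the choice of a radius $R$ on which $f_{V,m}$ is zero-free) may freely depend on $V$. The bookkeeping you flag is therefore unnecessary.
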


\begin{proof} By Lemma \ref{l:inB}, the complement of
$\mathcal{M}_m$ is contained in
\begin{equation*}
\bigcup_{(N,M,l) \in \Natural^3}  B_m (N,M,d-1/l),
\end{equation*}
which is an $F_{\sigma}$ set since it is a countable union of
closed sets.  By Lemma \ref{l:horder}, if $V\in \mathcal{M}_m,$
then $V\not \in B_m(N,M,d-1/l)$ for any $N,\; M,\; l \in \Natural.$
Thus $\mathcal{M}_m$ is the complement of an $F_{\sigma}$ set.
\end{proof}

\noindent
We can now prove our theorem.

\begin{proof}[Proof of Theorem \ref{main1}]
Since Lemma \ref{l:gd} shows that $\mathcal{M}_m$ is a $G_{\delta}$ set,
we need only show that $\mathcal{M}_m$
 is dense in $L^{\infty}_0 (K;F)$.  To do this,
we follow the proof of \cite[Corollary 1.3]{polar} with appropriate modifications.
We give the proof here for the convenience of the reader.
Let $V_0\in L^{\infty}_0 (K;F)$ and let $\epsilon >0$.
By Theorem \ref{main111}, proved in section \ref{examples},
we may choose a nonzero, real-valued, spherically symmetric $V_1\in
L^{\infty}_0 (K;\Real)$ so that $V_1\in \mathcal{M}_m$, for all $m \in \Z^*$.
 Then consider the function $V(z)=V(z,x)=zV_1(x)+(1-z)V_0(x)$.
This satisfies the conditions of Theorem \ref{main-psh1}, with $V(1) = V_1$ and $V(0)=V_0$.
Thus, there exists a pluripolar set $E\subset \Complex$,
so that for $z\in \Complex\setminus E$, we have
$$
\limsup_{r\rightarrow \infty}\frac{\log n_{V(z),m}(r)}{\log r}=d.
$$
for $z\in \Complex
\setminus E$.
Since ${E}_{\restrict \Real}\subset \Real$ has
Lebesgue measure $0$
(e.g. \cite[Section 12.2]{ransford}),
we may find $z_0\in \Real$, $z_0\not \in E$, with
$|z_0|<\epsilon / (1+ \|V_0\|_{L^{\infty}}+\|V_1\|_{L^{\infty}})$.
Then $V(z_0)\in \mathcal{M}_m$
for all $m \in \Z^*$,
and $\|V(z_0)-V_0\|_{L^{\infty}}<\epsilon$.  Moreover, if $V_0$ is real-valued,
so is $V(z_0)$.
\end{proof}


\section{Lower bounds on Resonances for Certain Bounded,
Compactly-Supported Potentials in Even Dimensions}
\label{examples}

We prove Theorem \ref{main111} in this section.
That is, we study in even dimensions
the Schr\"odinger operator with a potential
given by a multiple of the characteristic function of the unit ball,
$V(x) = V_0 \chi_{[0,1]} (|x|)$, with $V_0 > 0$. We show that
$n_{V,m}(r)\geq C_{d,m}r^d+O(r^{d-1})$, with $C_{d,m}>0$. This
establishes the existence of potentials $V \in L^\infty_0
(\Real^d)$, for $d$ {\it even}, for which the resonance counting
function has the maximal order of growth $d$ {\it on each Riemann
sheet} $\Lambda_m$, defined by $m \pi < \arg \lambda < (m+1) \pi$, $m
\in \Z^*$. Similar calculations were performed in \cite{[Zworski2]} and \cite{stef06}
in the odd dimensional case.
In particular, Stefanov considered resonances for the transmission problem
with a stepwise constant wave speed in \cite[section 9]{stef06}. This consists in
studying the resonances of $- c(|x|)^2 \Delta$, with $c(|x|) = c \neq 1$, for $|x| \leq R$,
and one otherwise. Since the perturbation enters at
second-order, expansions to order $\nu^{-1}$ are sufficient.
For the zero-order perturbation by $V$ considered here,
more terms in the expansions are needed.


\subsection{Resonances for Schr\"odinger Operators on a Half-line}\label{resonances22}

The resonances for the spherically symmetric Schr\"odinger operator
$H_V = -\Delta + V$, with $V = V(|x|)$ a real-valued, bounded,
spherically symmetric potential, are completely determined by the
resonances for each one-dimensional Schr\"odinger operator $H_l$, $l
= 0, 1, 2, \ldots$, on $L^2 ( \Real^+)$ obtained by symmetry
reduction. We are interested in solutions to $H_\ell \psi_\ell = \lambda^2
\psi_\ell$, where
\bea\label{sphsymm1}
( H_\ell \psi_\ell )(r) & = & -
\psi_\ell''(r) - \frac{(d-1)}{r } \psi_\ell'(r) + \frac{l ( l + d -
2)}{r^2} \psi_\ell (r) + V(r) \psi_\ell (r)  \nonumber \\
 &=& \lambda^2 \psi_\ell(r).
\eea
Setting $\psi_\ell(r) = r^{-\frac{d-2}{2}} u_\ell(r)$, we find the
equation for $u_\ell$ is
\beq\label{sphsymm2}
\left[ \frac{d^2}{dr^2} +
\frac{1}{r} \frac{d}{dr} + \lambda^2 - \frac{L_d + \left( \frac{d-2}{2} \right)^2
}{r^2} - V(r) \right] u_\ell (r) = 0 ,
\eeq
where $L_d = \ell ( \ell
+ d - 2)$.

For the unperturbed case $V\equiv 0$, equation (\ref{sphsymm2}) is a
standard Bessel equation. We follow the notation of
\cite{[Olver1],[Watson1]} and define the Hankel functions as
\beq\label{defn-hankel1} H_s^{(1)} (z) = [ J_{-s}(z) - e^{-i s
\pi}J_s (z)]/ [ i \sin s \pi ], \eeq and
\beq\label{defn-hankel10}
H_s^{(2)} (z) = [ e^{is \pi} J_{s}(z) - J_{s} (z)]/ [ i \sin s \pi ],
\eeq
where these formulas are defined in the limit when $s \in \Z$. Note that $J_s
(z) = [ H_s^{(1)}(z) + H_s^{(2)}(z)] / 2$. A pair of linearly
independent solutions of (\ref{sphsymm2}) is \beq\label{solns1}
u_\ell^{(1)} (r) = J_{\ell + \frac{d-2}{2} } (\lambda r),
~~\mbox{and} ~u_\ell^{(2)} (r) = H_{\ell + \frac{d-2}{2} }^{(1)}
(\lambda r). \eeq The solution $u_\ell^{(1)}(r)$ is regular at
$r=0$, whereas the solution $u_\ell^{(2)}(r)$ satisfies the outgoing
condition, behaving like $\sim e^{i \lambda r}/ (\lambda r)^{1/2}$,
for $\lambda$ with $0 < \arg \lambda < \pi$,
as $r \rightarrow \infty$.
In light of these formulas, we
let $\nu \equiv \ell + (d-2)/2$, and note that it is an integer for
$d$ even, and half an odd integer for $d$ odd.
Returning to the free radial equation (\ref{sphsymm1}) with $V=0$, we
define spherical solutions as
\beq\label{linindsolns1}
j_\nu ( \lambda r) \equiv \left( \frac{\pi}{2r}
\right)^\frac{d-2}{2} J_{ \ell + \frac{d-2}{2} } (\lambda r) , ~~\mbox{and} ~~
h_\nu (\lambda r) = \left( \frac{\pi}{2r}
\right)^{\frac{d-2}{2}} H_{\ell + \frac{d-2}{2} }^{(1)} (\lambda r).
\eeq

For the Schr\"odinger operator $H_\ell$ on a half-line $\Real^+$ given in (\ref{sphsymm1}),
we define the resonances
of $H_\ell$
as poles in the meromorphic continuation
of the Green's function $G_\nu (r,r'; \lambda)$, with $\nu = \ell + (d-2)/2$. These poles
are independent of $r$ and $r'$ as they are the zeros of the continuation of a Wronskian
onto $\Lambda_m$, $m \in \Z^*$, or $\C^-$, for $d$ even or $d$ odd, respectively.
In the odd-dimensional case, these poles
correspond to those $\lambda \in \C^-$ for which there is
 a purely outgoing solution to
$H_\ell \psi_\nu = \lambda^2 \psi_\nu$ satisfying a Dirichlet boundary condition
at $r = 0$.

We now consider the potential $V(r) =
V_0 \chi_{[0,1]} (r)$, with $V_0 > 0$, and construct the Green's function
on the physical sheet $\Lambda_0$. Let $\Sigma (\lambda) \equiv ( \lambda^2 -
V_0)^{1/2}$, where the square root is defined so that this function has
branch cuts $( - \infty, -V_0^{1/2}] \cup [ V_0^{1/2}, \infty)$.
Because of the simple nature of the potential $V(r)$,
the equation for $0 < r < 1$ is
\beq\label{reduced1}
- \psi_\nu '' - \frac{(d-1)}{r} \psi_\nu ' + \frac{\ell ( \ell + d -2)}{r^2}
  \psi_\nu
  = \Sigma(\lambda)^2 \psi_\nu ,
\eeq
and for $r >1$, the solution $\psi_\nu$ satisfies the free equation
\beq\label{reduced2}
- \psi_\nu '' - \frac{(d-1)}{r} \psi_\nu ' + \frac{\ell ( \ell
+ d -2)}{r^2} \psi_\nu
  = \lambda^2 \psi_\nu.
\eeq
We choose two linearly independent solutions,
$\phi_{\nu}$ and $\psi_{\nu}$ of (\ref{reduced1})--(\ref{reduced2})
so that
$\phi_\nu ( r=0; \lambda) = 0$ and $\psi_\nu (r; \lambda) = h_\nu ( \lambda r)$
for $r >1$. The Green's function has the form
\beq\label{green1}
G_\nu (r,r'; \lambda) =
\frac{1}{W_\nu(\lambda) } \left\{ \begin{array}{ll}
     \phi_\nu ( r; \lambda  ) \psi_\nu ( r' ; \lambda ) , &  r < r' \\
    \phi_\nu ( r' ; \lambda ) \psi_\nu ( r; \lambda ) , &  r > r'
     \end{array}
     \right. ,
\eeq
where the Wronskian $W_\nu (\lambda)$, evaluated at $r=1$, is given by
\beq\label{wronskian1}
W_\nu (\lambda) = \Sigma (\lambda) j_{\nu} ' ( \Sigma (\lambda) ) h_{\nu }^{(1)} (\lambda ) -
\lambda j_\nu ( \Sigma ( \lambda)) h_{\nu}^{(1) '}  (\lambda ) .
\eeq
%
The function $W_\nu (\lambda)$ admits an analytic
continuation in $\lambda$ to $\C$, for $d$ odd, and to $\Lambda$, for $d$ even provided $\nu$
is also even. For the case of $d$ even and $\nu$ odd,
 $\Sigma (\lambda)W(\lambda)$ has an analytic continuation
to $\Lambda$.
This is consistent with the fact that the Green's function extends meromorphically
to $\Lambda$. For $d=2$, the Green's function has a logarithmic singularity at $z=0$.

We conclude that a value $\lambda_0 \in \Lambda_m$, $m \neq 0$, for $d$ even
(or $\lambda_0 \in \Complex^-$ if $d$ is odd) is a resonance if it satisfies the
following condition:
\beq\label{basic1}
\Sigma (\lambda_0) j_{\nu} ' ( \Sigma (\lambda_0) ) h_{\nu }^{(1)} (\lambda_0 ) =
\lambda_0 j_\nu ( \Sigma ( \lambda_0))
h_{\nu}^{(1) '}  (\lambda_0 ) .
\eeq
Using the definitions (\ref{linindsolns1}), 
we rewrite this relation as
\beq\label{basic11}
\Sigma (\lambda_0) J_{\nu} ' ( \Sigma (\lambda_0) ) H_{\nu}^{(1)} (\lambda_0 ) -
\lambda_0 J_\nu ( \Sigma (\lambda_0))
H_{\nu}^{(1) '}  (\lambda_0 ) = 0, ~\nu = \ell + (d-2)/2 .
\eeq

In order to study the defining equation (\ref{basic11}) on $\Lambda_m$, we define a
function $F_m^{(\nu)} (\lambda)$ on $\Lambda_0$ by
\beq\label{basic2}
F_m^{(\nu)}(\lambda) = \Sigma ( \lambda) J_{\nu }' ( \Sigma ( \lambda) ) H_{\nu }^{(1)} (e^{im \pi} \lambda )
 - e^{im \pi} \lambda J_{\nu} ( \Sigma  ( \lambda ))
H_{\nu}^{(1) '} ( e^{im \pi} \lambda  ) ,
\eeq
using the fact that $\Sigma ( e^{i m \pi} \lambda ) = \Sigma (\lambda)$, for $m \in \Z$.
It follows from
the fundamental equation (\ref{basic11})
that the zeros of $F_m^{(\nu)} (\lambda)$
on $\Lambda_0$ correspond to
the resonances of the one-dimensional Schr\"odinger operator
$H_\ell$ on the sheet $\Lambda_m$, for $|m| \geq 1$.
We will study the zeros of $F_m^{(\nu)} (\lambda)$
on $\Lambda_0$ in sections \ref{mthsheetlb1} and \ref{unifasymp2} below.

We make some important observations concerning $F_m^{(\nu)} (\lambda)$.
First, we note that when $V_0=0$,
$F_m^{(\nu)} (\lambda)$ is $e^{im \pi } \lambda$ times the Wronskian
of $J_\nu$ and $H_\nu^{(1)}$ evaluated at $e^{im \pi}\lambda$.
This is easily seen to be equal to the constant $2i \pi$, which
is consistent with the fact that there are no resonances in the free case.
Secondly, when $m=0$, corresponding to the physical sheet,
there are no zeros of $F_0^{(\nu)}(\lambda)$ on $\Lambda_0$, see (\ref{mzeros01b}).
Thirdly, this equation reflects the symmetry properties of the meromorphic
continuation of the resolvent and $S$-matrix depending on whether $d$ is odd or even.
As mentioned in section \ref{preliminaries1}, for $d \geq 2$ even, and $V$ real,
if $k \in \Lambda_m$, $m \in \Z^*$
is a zero, then so is $- \overline{k} \in \Lambda_{-m}$.
In the odd-dimensional case, again with $V$ real,
if $k \in \C^-$ is a zero, then so is $- \overline{k} \in \C^-$,
so the resonances are symmetric about the imaginary axis.

\vspace{.1in}
\noindent
{\bf Remark.}
For $d =3$, the poles of the $S$-matrix associated
with the Schr\"odinger operator (\ref{sphsymm1}) with a spherically symmetric
step potential barrier $V_0 > 0$ (the case considered here) or well $V_0 < 0$, were
studied by Nussenzveig \cite{[Nussenzveig1]}. The $S$-matrix is explicitly computable.
Let $\Sigma( \lambda) \equiv ( \lambda^2 \pm V_0)^{1/2}$,
where the plus sign is for the potential well, and
the minus sign is for the barrier.
For each angular momentum $\ell \geq 0$, the corresponding component
of the $S$-matrix is
\beq\label{s-matrix1}
S_\ell ( \lambda) = - \frac{ \lambda j_\ell ( \Sigma (\lambda )){ h_\ell^{(2)}} ' (\lambda)
- \Sigma (\lambda) j_\ell ' (\Sigma (\lambda)) h_\ell^{(2)} (\lambda) }{  \lambda j_\ell ( \Sigma (\lambda ) )
 {h_\ell^{(1)} }' (\lambda) - \Sigma (\lambda) {j_\ell}' (\Sigma (\lambda)) h_\ell^{(1)} (\lambda)}.
\eeq
This should be compared with (\ref{basic11}). As this formula also holds in the even-dimensional
case, we see that the resonances defined via the continuation of the
resolvent are the same as those defined by the
continuation of the $S$-matrix.  Nussenzveig studied the behavior
of bound states $V_0 < 0$ and resonances  $V_0 > 0$, describing various asymptotics expansions
of their real and imaginary parts,
especially in the case of $\ell = 0$.


\subsection{Analysis of Zeros: an Overview}\label{sec:summary-zero1}

We discuss here the strategy for estimating the zeros of $F_m^{(\nu)}(\lambda)$
on the sheet $\Lambda_0$. We prove that inside a semicircle of radius $r > 0$,
the angular momentum states with $\ell < r$ contribute at least $c_mr^d$ zeros, where $c_m > 0$
depends on $m \in \Z^*$. We need to control the
remainder term
uniformly in $r$ and $\ell$ as $r \rightarrow \infty$.
Hence, we need to study the zeros of $F_m^{(\nu)}(\lambda)$
as $|\lambda|$ and $\nu$ go to infinity.
We define new variables $z \equiv \lambda / \nu$
and $\tilde{z} (z) \equiv (z^2 - \nu^{-2} V_0)^{1/2}$,
so that we study
\beq\label{basic02}
F_m^{(\nu)}(\nu z)  = \nu \tilde{z} (z) J_{\nu }' ( \nu \tilde{z} ) H_{\nu }^{(1)} (e^{im \pi} \nu z ) -
e^{im \pi} \nu z J_{\nu} ( \nu \tilde{z}(z))
H_{\nu}^{(1) '} ( e^{im \pi} \nu z  ) .
\eeq
for $z$ in a fixed region. For this, we use the
uniform large-order asymptotics of the Bessel and Hankel functions
due to Olver \cite{[Olver2],[Olver3]}.
A special role in these uniform asymptotics is played by the compact, eye-shaped region $K$ in the complex
plane defined as follows. This region is bounded by the curves containing the points labeled
$B,P,E,E',P'$ in Figure 1 and \cite[chapter 11]{[Olver1]}. Let $t_0$
be the positive root of $t = \coth t$, so $t_0 \sim 1.19967864 \ldots$. The region $K$ is
the symmetric region in the neighborhood of the origin bounded in $\C^+$ by the curve
\beq\label{k-region1}
z = \pm ( t \coth t - t^2)^{1/2} + i ( t^2 - t \tanh t)^{1/2}, ~0 \leq t \leq t_0,
\eeq
intercepting the real axis at $\pm 1$ and intercepting the imaginary
axis at $i z_0$, where $z_0 = (t_0^2 - 1)^{1/2} \sim 0.66274 \ldots$. The region $K$
is bounded by the conjugate curve in the lower half-plane.
The significance of this region $K$ is illustrated by the fact
that the ordinary Bessel function $J_\nu (z)$ decays
exponentially in $\nu$ for $z \in Int (K)$, whereas it grows
exponentially in $\nu$ and $z$ for $z \in Ext (K)$.

Essential to understanding the uniform asymptotics of the Bessel and Hankel functions is the mapping
on the complex plane $z \rightarrow \zeta (z)$ defined by
\beq\label{map1}
\rho (z) \equiv \frac{2}{3} \zeta^{3/2} (z) = \log \frac{1 + \sqrt{1 - z^2} }{z} - \sqrt{1 - z^2} .
\eeq
The relationship between the variables $z$,
$\zeta (z)$, and $\rho (z)$ is clarified by Figure 1 and is
described in detail beginning on page 335 of Olver's article \cite{[Olver3]}
(reproduced on page 420 of his book
\cite{[Olver1]} where the author changed notation and used $\xi$ for what is
called $\rho$ here and in \cite{[Olver3]}).
We present these figures, with permission of the author and publisher,
in Figure 1.


\begin{figure}
  \centering
\begin{center}
\includegraphics[height=40mm,angle=180]{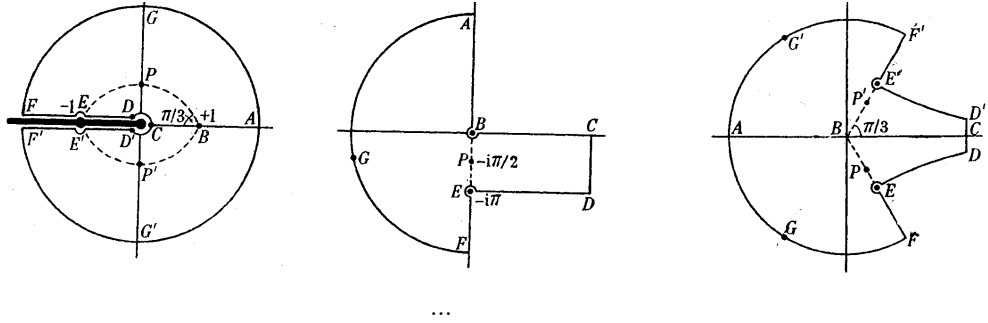}
\end{center}
...
  \caption{Left: the $z$-plane with the region $K$ enclosed by the curves joining $B,P,E,E', P'$.
  Middle: The image of the $z$ plane under the mapping $\rho$  defined in (\ref{map1}).
  Right: The image of the $z$-plane under the mapping $\zeta$ defined in (\ref{map1}).
Reproduced, with permission of the author and publisher, from \cite[page 336]{[Olver3]}.}
\end{figure}


The cut along the negative real axis in the $z$-plane is mapped onto two
cuts along the rays $\arg \zeta = \pm \pi / 3$. The eye-shaped region $K$ is mapped
into the sector $| \arg \zeta | < \pi / 3$. A neighborhood of the positive real axis in
the $z$-plane is mapped onto a neighborhood of the negative real axis in the $\zeta$-plane.
The upper boundary of the eye-shaped region $K$ in the $z$-plane is mapped onto
the line segment $(0, -i \pi)$ in the $\rho$-plane, with the point $ z_0 i$ being mapped to $-i \pi / 2$.
We analyze the function $F_m^{(\nu)}(\nu z )$ for large $\nu$ with $z$ in a neighborhood of $K$.

\subsection{Resonances on the $m^{th}$ Sheet $\Lambda_m, ~m \in \Z^*$}\label{mthsheetlb1}

We consider the general case $m \neq 0$ and prove a lower bound on the number of resonances.
The zeros of $F_m^{(\nu)}(\nu z)$, defined in (\ref{basic02}) for
$z \in \Lambda_0$, correspond to the resonances of $H_\ell$ on the sheet
$\Lambda_m$. We recall from section \ref{sec:summary-zero1} that in order to use the uniform asymptotics
of section \ref{unifasymp2},
we defined $z \equiv \lambda / \nu$ and $\tilde{z} (z) \equiv (z^2 - \nu^{-2} V_0)^{1/2}$.
We also recall that $d \geq 2$ is even so that $\nu = \ell +(d-2)/2$ is a
nonnegative integer.
Using formulas (\ref{hankel11}) and (\ref{hankel-der1}), we find
that
\beq\label{m-sheet1}
F_m^{(\nu)} (\nu z)  = (-1)^{m\nu} [ F_0^{(\nu)}(\nu z ) - 2 m G_0^{(\nu)} (\nu z) ],
\eeq
where, from (\ref{basic02}),
\beq\label{zero-sheet1}
F_0^{(\nu)} (\nu z) = \nu \tilde{z} J_\nu ' (\nu \tilde{z}) H_{\nu}^{(1)} (\nu z) -
 \nu z J_\nu (\nu \tilde{z}) H_{\nu}^{(1)'} (\nu z ) ,
\eeq
and we define
\beq\label{m-sheet21}
G_0^{(\nu)} ( \nu z) \equiv  \nu \tilde{z} J_\nu '
(\nu \tilde{z})  J_{\nu}  (\nu z )  - \nu z J_{\nu} (\nu \tilde{z} ) J_\nu^\prime (\nu z ) .
\eeq
It follows that in order to study the solutions of $F_m^{(\nu)} ( \nu z ) = 0$,
we need to consider those $z$ with $0 < \arg z < \pi$ for which
\beq\label{m-sheet2}
{F}_0^{(\nu)} ( \nu z) = 2 m G_0^{(\nu)} ( \nu z) .
\eeq
It is sufficient for the lower bound to prove that
for any
$\nu < r$, for
$r >> 0$ sufficiently large, that there are at least $\nu(1-\epsilon_1)$,
$\epsilon_1>0$, solutions
of this equation in the half-disk
$\Im \lambda > 0$ and $| \lambda | \leq r$, uniformly in $r$ and $\nu$.

To prove that there are at least $\nu(1-\epsilon_1)$ zeros
of $F_m^{(\nu)}(\lambda)$ near the upper boundary of the
eye-shaped region $\nu K$,
we concentrate a small region $\Omega_{1, \epsilon}$ near the upper
boundary of $K$. In particular, we define, for fixed $\epsilon>0$,
\begin{equation}\label{eq:omega1}
\Omega_{1, \epsilon}= \{ z\in \C^{+}:
\text{dist}\; (z,\partial K^+)<\epsilon\} \cap\{ z\in \C^{+}:\;
|z+ 1|>\epsilon\; \text{and}\; |z-1|>\epsilon\}.
\end{equation}
We recall that $\rho (z)$ is defined in (\ref{map1})
and that the region $\Omega_{1, \epsilon}$ near the upper boundary of $K$
is mapped onto a neighborhood of the line segment $(-i \pi
+ih(\epsilon)  , -ih(\epsilon))$, where $h(\epsilon)>0$,
$h(\epsilon)=\mathcal{O}(\epsilon)$,
in the $\rho$-plane
(see the middle figure of Figure 1 or Fig.\ 3--6 in \cite{[Olver2]}).
Consequently, we will identify the
zeros near the upper edge
of $K$ in the $z$-plane with their image $\rho (z)$ near this line segment in the $\rho$-plane.

We compute the uniform asymptotics of $F_0^{(\nu)} (\nu z)$ in section
\ref{subsec:condition-m-0}.
From (\ref{mzeros01a}), we have
\bea\label{mzeros01b}
F_0^{(\nu)} (\nu z) &=& \frac{- 2i}{\pi} \left\{ 1 - \frac{1}{\nu} \left[ \frac{V_0 (1-z^2)^{1/2}}{2 z^2}
\right]
+\mco\left(\frac{1}{\nu^2}\right) \right\}.
\eea
The uniform asymptotics of the term $G_0^{(\nu)}$ on the right of (\ref{m-sheet2})
is also computed in section \ref{subsec:condition-m-0}:
\beq\label{m-sheet3}
G_0^{(\nu)} (\nu z) =  \frac{e^{-2 \nu \rho}}{2 \pi} \left[ \frac{V_0 }{2 \nu^2 (1-z^2)} + \mathcal{O}
\left( \frac{1}{\nu^3} \right) \right] ,
\eeq
where the error is uniform for
$z\in \Omega_{1,\epsilon}$.
Consequently, the condition for zeros on the $m^{th}$-first sheet is
\beq\label{reson-mth}
e^{2 \nu \rho (z) } \left( 1 + g_1(z, \nu)  \right)
= \frac{i m  V_0 }{4 \nu^2} \left( \frac{1}{1-z^2} \right) +
g_2(z, \nu),
\eeq
where $g_1(z,\nu)= \mathcal{O}(1/\nu)$, and $g_2(z,\nu)= \mathcal{O}(1/\nu^3)$, both uniformly for
$z\in \Omega_{1,\epsilon}$.
We note that for $V_0= 0$ there are no solutions to this equation.

We now study the solutions of
(\ref{reson-mth}).
The variable $\rho$ lies in a
set that is the image of $\Omega_{1, \epsilon}$ under the mapping
$z \rightarrow \rho$ given in (\ref{map1}). This set
contains a neighborhood of an interval of the negative
imaginary axis of the form
 $(-\pi+ h(\epsilon), -h(\epsilon))i
\subset (-\pi,0)i$. We will prove that there exists at least $\nu (1- \epsilon_1)$ solutions in
a neighborhood of this set.
We rewrite (\ref{reson-mth}) as
\beq\label{reson-second}
\nu^2 e^{2 \nu \rho} (1-z^2) \left( 1 + g_1(z, \nu)  \right)
 = \frac{i m V_0 }{4}
  +g_3(\rho,\nu),
\eeq
with $g_3 (z, \nu) = \mathcal{O} (\nu^{-1})$, uniformly for $\rho$ in the image
of $\Omega_{1,\epsilon}$
We define two functions:
\beq\label{eq:gfnc1}
g(z , \nu ) = \nu^2 e^{2 \nu \rho} - \frac{i m V_0}{4} ,
\eeq
and
\beq\label{eq:ffnc1}
f(z, \nu) = \nu^2 (1-z^2) e^{2 \nu \rho} \left( 1 + g_1(z, \nu)  \right)
- \frac{i m V_0}{4} - g_3(z, \nu).
\eeq
We will prove that near each zero of $g(z, \nu)$, in a neighborhood of  $(-\pi+ h(\epsilon), -h(\epsilon))i
\subset (-\pi,0)i$,
there is a zero of $f(z, \nu)$ using
Rouche's Theorem.

\begin{lemma}\label{l:zerosofg}
The function $g( z, \nu )$, defined in (\ref{eq:gfnc1}) for $m \in \Z^*$, has infinitely-many zeros of the
form
\beq\label{g-zeros1}
\rho_k = \left\{ \frac{1}{2 \nu} \log \left( \frac{|m| V_0}{4} \right) - \frac{ \log \nu}{\nu} \right\}
+ i \frac{\pi}{\nu} \left[ k +
\operatorname{sgn}(m)\frac{1}{4} \right] , ~~ k \in \Z .
\eeq
\end{lemma}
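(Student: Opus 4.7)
The proof I have in mind is a direct calculation: solve $g(z,\nu)=0$ explicitly by inverting the exponential, using the $2\pi i$-periodicity of the complex logarithm to produce a countably infinite family of solutions.

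First I would rewrite the defining equation $g(z,\nu)=0$ as
\[
 e^{2\nu\rho} \;=\; \frac{i m V_0}{4\nu^{2}},
\]
which is permissible since $m\in\Z^*$ and $V_0>0$, so the right-hand side is a nonzero complex number. Writing it in polar form, its modulus is $|m|V_0/(4\nu^2)$ and its argument is $\arg(im)=\operatorname{sgn}(m)\,\pi/2$. Hence taking the complex logarithm and accounting for all branches gives
\[
 2\nu\rho \;=\; \log\!\left(\tfrac{|m|V_0}{4\nu^{2}}\right) + i\,\operatorname{sgn}(m)\,\tfrac{\pi}{2} + 2\pi i\,k, \qquad k\in\Z.
\]

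Dividing by $2\nu$ and using $\log(|m|V_0/(4\nu^2)) = \log(|m|V_0/4) - 2\log\nu$, I obtain
\[
 \rho_k \;=\; \frac{1}{2\nu}\log\!\left(\tfrac{|m|V_0}{4}\right) - \frac{\log\nu}{\nu} + i\,\frac{\pi}{\nu}\!\left[k + \operatorname{sgn}(m)\tfrac{1}{4}\right],
\]
exactly the formula claimed. Since $k$ ranges over $\Z$, this furnishes infinitely many distinct zeros of $g(\cdot,\nu)$, parametrized by $k$. The zeros are equally spaced along a horizontal line in the $\rho$-plane at height $\frac{1}{2\nu}\log(|m|V_0/4) - (\log\nu)/\nu$, with vertical spacing $\pi/\nu$; in the $\rho$-plane this line is essentially parallel to the imaginary axis in the relevant regime, which is consistent with placing the $\rho_k$ near the segment $(-\pi,0)i$ that is the image of the upper boundary of $K$ under the map $z\mapsto\rho(z)$ from (\ref{map1}).

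There is really no substantive obstacle in the proof itself — it is a one-line algebraic computation. The only things to be careful about are (i) correctly identifying $\arg(im)=\operatorname{sgn}(m)\pi/2$, which produces the $\operatorname{sgn}(m)/4$ term in the final formula, and (ii) sweeping all branches of $\log$ via the $+2\pi ik$ term, which is what generates the infinite family. The more substantive work lies in the next step (Rouché's theorem applied to $f(z,\nu)$ versus $g(z,\nu)$ in small disks around each $\rho_k$), but that is outside the statement of this lemma.
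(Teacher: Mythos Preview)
Your proof is correct and follows exactly the approach the paper takes: the paper's proof simply notes that $g$ is periodic in $\Im\rho$ with period $\pi/\nu$ and says one can explicitly solve $g=0$, which is precisely the computation you carry out in detail. (One small slip in your commentary: the zeros lie on a \emph{vertical} line in the $\rho$-plane---fixed real part, imaginary parts spaced by $\pi/\nu$---as you yourself note when you say the line is parallel to the imaginary axis.)
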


\begin{proof} Viewing the function $g(z, \nu)$ as a function of $\rho$,
it is clear that it is periodic in $\Im \rho$ with period $\pi / \nu$. We can then explicitly solve
$g(z , \nu) = 0$.
\end{proof}

\noindent
The values $\rho_k$ provide the approximate solutions of $f(z,\rho) = 0$ as the next lemma shows.

\begin{lemma}\label{l:zerosoff}  For $m\in \Integers^*$,
in a neighborhood of each zero $\rho_k$ of $g(z, \nu)$ with imaginary part
in $( -i \pi + i2 h(\epsilon), -i2 h(\epsilon))$, there is exactly one zero
of $f(z,\nu)$. Consequently,
there are at least $\nu(1-\epsilon_1)$, $\epsilon_1=\mco(\epsilon) >0$
 zeros of $f(z, \nu)$ in a neighborhood of
the interval on the negative imaginary axis $( - \pi  , 0)i$ for all $\nu > 0$ large.
\end{lemma}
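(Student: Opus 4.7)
My plan is to apply Rouché's theorem in the $\rho$-plane, passing from $\Omega_{1,\epsilon}$ to a neighborhood of the segment $(-i\pi+ih(\epsilon),-ih(\epsilon))$ via the biholomorphism $z\mapsto\rho(z)$ of (\ref{map1}). By Lemma \ref{l:zerosofg}, the zeros $\rho_k$ of $g$ in this image are arrayed at imaginary spacing $\pi/\nu$, so disks of radius $O(1/\nu)$ about them are pairwise disjoint for $\nu$ large.

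A direct comparison of $f$ with $g$ fails: on such a disk $\nu^2e^{2\nu\rho}$ is bounded and $g_1,g_3=O(1/\nu)$, but
\[
f-g = \nu^2 e^{2\nu\rho}\bigl[(1-z^2)(1+g_1)-1\bigr] - g_3 = -\nu^2 z^2 e^{2\nu\rho} + O(1/\nu)
\]
is of size $O(1)$ because of the $(1-z^2)$ factor distinguishing $f$ from $g$. My remedy is to introduce, for each $k$, the modified comparison function $g_k(\rho) := \nu^2(1-z_k^2)e^{2\nu\rho} - imV_0/4$ with $z_k := z(\rho_k)$, which freezes the problematic factor. One verifies that $g_k$ has a simple zero at $\sigma_k := \rho_k - (2\nu)^{-1}\log(1-z_k^2)$, with $|\sigma_k-\rho_k|=O(1/\nu)$ (since $1-z_k^2$ is bounded away from $0$ and $\infty$ on the relevant portion of $\Omega_{1,\epsilon}$), and that $g_k'(\sigma_k)=i\nu m V_0/2$. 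Consequently $|g_k(\rho)|\geq c\delta$ on the circle $|\rho-\sigma_k|=\delta/\nu$ for small fixed $\delta$, whereas
\[
|f-g_k| \leq |\nu^2 e^{2\nu\rho}|\bigl(|z_k^2-z^2|+|(1-z^2)g_1|\bigr) + |g_3| = O(1/\nu)
\]
by Lipschitz continuity of $\rho\mapsto z(\rho)$ on the relevant compact set. For $\nu$ large, Rouché's theorem then furnishes exactly one zero of $f$ inside each such disk, hence in a neighborhood of each $\rho_k$.

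Disjointness of these Rouché disks for $\nu$ large follows from $|\sigma_k-\sigma_{k'}|\geq\pi/\nu-O(1/\nu)$. Counting the $\rho_k$ whose imaginary part lies in $(-\pi+2h(\epsilon),-2h(\epsilon))$ gives $\nu(1-4h(\epsilon)/\pi)+O(1) = \nu(1-\epsilon_1)$ such indices $k$, with $\epsilon_1 = 4h(\epsilon)/\pi = O(\epsilon)$, establishing the stated lower bound. The main technical subtlety is the construction of the correct local comparison function $g_k$ to absorb the $(1-z^2)$ discrepancy; the viability of both the disjointness argument and the conclusion ``near each $\rho_k$'' depends on the shift $\sigma_k-\rho_k=O(1/\nu)$ being genuinely small, which in turn hinges on $\Omega_{1,\epsilon}$ remaining bounded away from the branch points $z=\pm 1$ of the map $\rho(z)$.
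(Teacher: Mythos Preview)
Your proof is correct and takes a genuinely different route from the paper's. The paper compares $f$ directly with $g$ via Rouch\'e on a carefully chosen \emph{rectangular} contour about each $\rho_k$: one vertical side sits on the imaginary axis (where $|\nu^2 e^{2\nu\rho}|=\nu^2$, so $|g|\sim\nu^2$ dominates $|f-g|\le(1-\delta)\nu^2+O(\nu)$ thanks to $|z|^2<1-\delta$ on $\Omega_{1,\epsilon}$), the other vertical side is at $\Re\rho=-\alpha(\log\nu)/\nu$ with $\alpha>1$ (where $|\nu^2 e^{2\nu\rho}|$ is tiny, so $|g|\sim|m|V_0/4$ dominates), and the horizontal sides are placed at heights where $\nu^2 e^{2\nu\rho}$ and $imV_0/4$ are orthogonal, giving $|g|=[(\nu^2 e^{2\nu t})^2+(|m|V_0/4)^2]^{1/2}$. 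In every case the crucial bound $|z|^2<1-\delta$ is what makes $|f-g|<|g|$ work.

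Your observation that this direct comparison fails on \emph{small circular} contours is accurate, and your remedy---freezing $(1-z^2)$ at $z_k=z(\rho_k)$ to form $g_k$---is a clean alternative. It trades the contour geometry for an $O(1/\nu)$ shift of the comparison center to $\sigma_k$, after which the Rouch\'e estimate becomes routine. The paper's argument is more elementary (a single comparison function) but leans on the specific rectangular geometry and the inequality $|z|^2<1-\delta$; your argument introduces auxiliary functions $g_k$ but is more robust and would survive even if $|z|$ were allowed close to $1$, so long as $1-z^2$ stays bounded away from $0$. Both yield the same count $\nu(1-\epsilon_1)$ with $\epsilon_1=O(\epsilon)$.
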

We remark that $\epsilon_1>0$ may be made arbitrarily
small by choosing $\epsilon>0$ sufficiently small.

\begin{proof}
\noindent
1. We use Rouche's Theorem applied on
 small rectangular contour $\mathcal{C}$ about $\rho_k$ with
vertical sides $A$ and $B$, and
horizontal sides $C$ and $D$, chosen so that exactly one zero $\rho_k$ of
$g(z, \nu)$ lies in the rectangle.
This is possible as the approximate solutions (\ref{g-zeros1}) are separated
as $|\Im \rho_k - \Im \rho_{k+1}
| = \pi / \nu$. In order to apply Rouche's Theorem, we must show that
on each segment of the contour $\mathcal{C}$ we have
\beq\label{eq:inequality1}
|f(z,\nu) - g(z, \nu)| < |g(z, \nu)|,
\eeq
for all $\nu$ large.
We will repeatedly use the fact that for $z \in \Omega_{1, \epsilon}$, we have $|z|^2 < (1 - \delta (\epsilon) )$,
for some small $\delta > 0$, a function of $\epsilon > 0$ used to define $\Omega_{1, \epsilon}$
specified in (\ref{eq:omega1}).

\noindent
2. Vertical sides $A$ and $B$. Let side $A$ lie on the negative imaginary axis
and be parameterized by $i t$, for $t < 0$. Along this side,
$|g(z,\nu)| = \nu^2 + \mathcal{O}(1)$
and $|(f(z,\nu) - g(z , \nu))\restrict_A| \leq (1- \delta) \nu^2 + \mathcal{O}( \nu^{-1} )$, so
we have (\ref{eq:inequality1}).
Along edge $B$, we let $\rho = - a + it$, for $a = \alpha \log \nu / \nu > 0$, with $\alpha > 1$, and $t<0$. We easily
find that $| (f(z, \nu) - g(z, \nu) )\restrict_B | \leq (1 - \delta) \nu^{- 2( \alpha - 1)} + \mathcal{O}( \nu^{-1})$.
On the other hand, we have $|g(z,\nu) \restrict_B| = (|m| V_0 / 4) ( 1 + \mathcal{O}( \nu^{-2 (\alpha - 1)} ) )$.
It follows that (\ref{eq:inequality1}) holds along $A$ and $B$.

\noindent
3. Horizontal sides $C$ and $D$. For zero $\rho_k$, with $k < 0$, we choose the
straight line segment $C$ so that on $C$
we have
\beq\label{eq:impart1}
\Im \rho_{k-1} = \frac{\pi}{\nu} \left[ k -1+\text{sgn}(m) \frac{1}{4} \right] < \Im \rho <  \Im \rho_{k} =
\frac{\pi}{\nu} \left[ k + \text{sgn}(m)\frac{1}{4} \right] ,
\eeq
and such that
\beq\label{eq:impart2}
2 \nu \Im \rho = \frac{\pi}{2} ~\mbox{mod} ~ 2 \pi .
\eeq
With $\Re \rho=t$, we then have
$$
| (f(z, \nu) - g(z, \nu) )\restrict_C |
\leq (1 - \delta) \nu^2 e^{2 \nu t}(1+\mco(\nu^{-1}))+\mco(\nu^{-1}),$$
whereas $|g(z, \nu)|_C = [ (\nu^2 e^{2 \nu t})^2 + (|m| V_0 / 4)^2 ]^{1/2},$ so we
again have (\ref{eq:inequality1}). A similar choice of $D$ insures the same inequality.
\end{proof}

\noindent
We now prove Theorem \ref{main111} for the case $m \in \Z^*$.
\begin{proof}
Recall that $\nu=l+(d-2)/2$, $l \in \Natural$ and $\epsilon_1>0$ is arbitrary.
From Lemma \ref{l:zerosoff} it follows that
there are, for each $\nu > 0$ large, at least $\nu(1-\epsilon_1)$ zeros
near the upper edge of the eye-shaped region $K$.
Furthermore, each zero of ${F}_m^{(\nu)}(\lambda)$ corresponds to a resonance of
multiplicity $m(l)$, the
dimension  of the space of spherical harmonics
on $\Sphere^{d-1}$ with eigenvalue $l(l+d-2)$.
Since $m(l)\geq cl^{d-2}+ \mathcal{O} (l^{d-3})$, for some $c>0$,
it follows from Lemma \ref{l:zerosoff}
that
\beq\label{count1}
n_{1,V}(r) \geq \sum_{\ell = 1}^{[r]}\frac{1}{4}(l-(d-2)/2)(cl^{d-2}+ \mathcal{O} (l^{d-3}))
\geq Cr^{d}+ \mathcal{O} (r^{d-1}),
\eeq
for some $C>0$, depending on $m \in \Z^*$.
\end{proof}
This proves the lower bound for the even-dimensional case on the $m^{th}$-sheet, $m \in \Z^*$. We recall
that for $V$ real, the symmetry
of the zeros means that the resonances on $\Lambda_{-m}$ are the same as those on $\Lambda_m$.

\vspace{.1in}
\noindent
{\bf Remark.}
We make some historic comments relevant to the odd dimensional case. For $d=3$,
resonances of spherically-symmetric, compactly-supported potentials were studied
by many physicists and Newton provided a nice summary \cite{[newton1]}. In particular,
R.\ Newton studied the zeros of the Jost function $f_\ell ( \lambda)$
for each angular momentum component in dimension three.
These are the same as the zeros of the function $F_1^{(\nu)} (\lambda)$.
For a real potential with compact support inside the ball of radius $R > 0$,
he gives a proof that $f_\ell (\lambda)$ has infinitely many complex roots in the lower-half complex plane,
and that these roots are symmetric with respect to the imaginary axis. This follows from
the fact that $f_\ell (\lambda) f_\ell (-\lambda)$ is an entire function of $\lambda^2$ of order $1/2$. It also
follows that only finitely-many roots lie on the negative imaginary axis. Finally,
he proves that if $\lambda_n$ is a sequence of roots with positive real parts, then
$\Re \lambda_n = n \pi / R + \mathcal{O} (1)$ and $\Im \lambda_n = [ ( \sigma +2)/ (2R)] \log n + \mathcal{O}(1)$.
In particular, this shows that there are infinitely many roots in the region $2 \pi - \epsilon
< \arg \lambda < 0$. These lie outside of the region considered above.

\section{Appendix: Analytic Continuation and Uniform Asymptotics of Bessel and Hankel Functions}
\label{app-bounds}

In this appendix, we provide all the details necessary for
obtaining a lower bound on the number of zeros of
the function $F_m^{(\nu)}(\lambda)$ on the $m^{th}$-sheet.
In the first section we give the analytic continuation of the Bessel and Hankel functions
following Olver \cite[chapter 7]{[Olver1]}.
We then summarize the uniform large-order asymptotics of the Bessel and Hankel functions proved by Olver
\cite{[Olver2],[Olver3]}.
These rely on the asymptotic expansion of the Airy functions (see, for example, \cite[appendix]{[Olver3]})
that we present in the next section. Finally, we
compute the uniform asymptotics of the terms occurring in $F_m^{(\nu)}(\lambda)$.


\subsection{Analytic Continuation of Bessel and Hankel Functions}\label{zerosbesshank1}

The analytic continuations of the ordinary Bessel functions $J_\nu (z)$,
for $\nu \in \R$,
from the region $\Lambda_0$ to the region $\Lambda_m$,
are obtained by the formula
\beq\label{bessel1}
J_{\nu }(z e^{i m \pi }) = e^{ i m \nu \pi} ~J_{\nu} (z).
\eeq
It follows that
\beq\label{bessel-der1}
J_{\nu }'(z e^{i m \pi }) = e^{ i m  \pi ( \nu + 1)} ~J_{ \nu}' (z).
\eeq
As for the Hankel function $H_\nu^{(1)} (z)$,
and $z \in \Lambda_0$, the analytic continuation to $\Lambda _m$, with $m
\in \Z^*$, is obtained
through the following formula (e.g.\  \cite[chapter 7]{[Olver1]}),
\beq\label{hankel1}
H_\nu^{(1)}( e^{i m \pi} z) = - \frac{ \sin (m-1) \nu \pi }{ \sin \nu \pi
} H_\nu^{(1)}(z ) - e^{- i \pi \nu } \frac{ \sin \nu m \pi }{ \sin \nu \pi }
H_\nu^{(2)}  (z) ,
\eeq
where, if $\nu \in Z$, we define the right side by the limit.

In our case, $\nu = \ell + (d-2) / 2$, with $\ell =  0, 1, 2, \ldots$, so that
when $d$ is even, $\nu$ is a non-negative integer
$\nu = 0, 1, 2, \ldots$, and when $d$ is odd, $\nu$ is
half an odd integer. In the case when $d \geq 4$ is even, the Hankel functions are
analytic on the Riemann surface of the logarithm $\Lambda$. When $d = 2$
there is a logarithmic singularity at the origin $z=0$.
In the case when $\nu \in \Z$, formula (\ref{hankel1}) becomes
\bea\label{hankel11}
H_\nu^{(1)}( e^{i m \pi} z) &= &(-1)^{m \nu + 1} [ (m-1) H_\nu^{(1)} (z) + m H_\nu^{(2)} (z) ] \nonumber \\
 &=& (-1)^{m \nu} [ H_\nu^{(1)} (z) - 2 m J_\nu (z) ] .
\eea

As for the derivatives of the Hankel function $H_\nu^{(1)} (z)$,
for $0 < \arg z < \pi$,
the analytic continuation to the sheet $\Lambda _m$ with $m \pi <
\arg z < (m+1) \pi$, for $m \in Z$, is obtained
from (\ref{hankel1}).
Restricting ourselves to the case of interest $\nu \in \Z$, we obtain:
\bea\label{hankel-der1}
{H_\nu^{(1)}}^\prime( e^{i m \pi} z)& = &(-1)^{m (\nu+1) + 1} [ (m-1) {H_\nu^{(1)}} ' (z) + m {H_\nu^{(2)}}'
(z) ] \nonumber \\
 &=&  (-1)^{m \nu + 1} [ {H_\nu^{(1)}}^\prime (z) - 2 m J_\nu^\prime (z) ]   .
\eea


\subsection{Asymptotic Expansions of Bessel and Hankel Functions}\label{JHasypmt}

The asymptotics of the Bessel and Hankel functions used here are expressed in
terms of Airy functions. As in \cite{[Olver1]}, we adopt the convention
that $Ai(w)$ has its zeros on the negative
real axis.
The index $\nu$ is real and positive.
It is convenient to define the following functions:
\beq\label{defn-phi1}
\phi ( \zeta) \equiv \left( \frac{ 4 \zeta }{ 1 - z^2} \right)^{1/4} = \left( - \frac{2}{z} \frac{dz}{d \zeta}
\right)^{1/2},
\eeq
and
\beq\label{defn-chi1}
\chi ( \zeta) \equiv \frac{ \phi' ( \zeta )}{ \phi (\zeta)} = \frac{4 - z^2 [ \phi (\zeta)]^6}{16 \zeta},
\eeq
and
\beq\label{defn-psi1}
\psi ( \zeta) \equiv \frac{2}{z \phi ( \zeta)}.
\eeq
We also need the following series expansions:
\beq\label{coef1}
F_1( \zeta, \nu ) = \sum_{j=0}^\infty \frac{ A_j ( \zeta ) }{ \nu^{2j}}, ~~F_2(\zeta , \nu) =
\sum_{j=0}^\infty \frac{ B_j ( \zeta ) }{ \nu^{2j}} ,
\eeq
with $A_0 ( \zeta) = 1$, and the remaining coefficients are determined
recursively, see \cite[section 9]{[Olver2]}.
The functions $G_j ( \zeta , \nu)$, $j=1,2$, are given as infinite series
\beq\label{hankel-coef0}
G_1( \zeta, \nu ) = \sum_{j=0}^\infty \frac{ C_j ( \zeta ) }{ \nu^{2j}}, ~~G_2(\zeta , \nu)
= \sum_{j=0}^\infty \frac{ D_j ( \zeta ) }{ \nu^{2j}} ,
\eeq
where the coefficients are
\beq\label{hankel-coef1}
C_j ( \zeta ) = \chi ( \zeta) A_j(\zeta) + A_j' ( \zeta)
+  \zeta B_j(\zeta) ,
\eeq
\beq\label{hankel-coef2}
D_j ( \zeta ) = \chi ( \zeta ) B_{j-1} ( \zeta) + B_{j-1} ' ( \zeta ) +  A_j ( \zeta ),
\eeq
with $B_{-1} ( \zeta ) = 0$ and $D_0 ( \zeta ) = 1$.

For the ordinary Bessel functions with $| \arg z | < \pi - \epsilon$, for any
$\epsilon > 0$, Olver proved that
\beq\label{besselasympt1}
J_\nu (\nu z) \sim \phi (\zeta)
\left( \frac{Ai( \nu^{2/3} \zeta )}{\nu^{1/3}} F_1 ( \zeta, \nu ) +
\frac{Ai '( \nu^{2/3} \zeta )}{\nu^{5/3}} F_2 ( \zeta, \nu ) \right) .
\eeq
He also proved that the asymptotic expansion for the derivatives can be obtained
by differentiation. It is useful to recall that the Airy function $Ai(w)$
(and $Ai_{\pm 1} (w)$ introduced below)
satisfies the differential equation $Ai'' ( u) = u Ai (u)$. For the
ordinary Bessel functions, one obtains
for $| \arg z | < \pi - \epsilon$, and for any $\epsilon > 0$,
\beq\label{bessel-deriv-asympt1}
J_\nu' (\nu z) \sim - \psi ( \zeta)
\left( \frac{Ai( \nu^{2/3} \zeta )}{\nu^{4/3}} G_1 ( \zeta, \nu ) +
\frac{Ai '( \nu^{2/3} \zeta )}{\nu^{2/3}} G_2 ( \zeta, \nu ) \right) .
\eeq

We also need the uniform asymptotics of the Hankel function.
As in Olver \cite{[Olver1]}, we define $Ai_{\pm 1} (w) = Ai ( e^{\mp 2 \pi i / 3} w )$.
For $ | \arg z| \leq \pi - \epsilon$, for any $\epsilon > 0$, we have
\beq\label{hankelasympt1}
H^{(1)}_\nu ( \nu z) \sim 2 e^{- i \pi /3} ~\phi ( \zeta)
~\left( \frac{Ai_{- 1}( \nu^{2/3} \zeta )}{\nu^{1/3}} F_1 ( \zeta, \nu )
+ \frac{Ai_{- 1}'( \nu^{2/3} \zeta )}{\nu^{5/3}} F_2 ( \zeta, \nu ) \right),
\eeq
where the functions $F_j ( \zeta , \nu)$, $j=1,2$ are given in (\ref{coef1}).
As with the Bessel functions, we can differentiate this expansion and obtain
\beq\label{hankel-deriv-asympt1}
{H^{(1)}_\nu} ' ( \nu z) \sim - 2 e^{-i \pi /3} ~\psi ( \zeta)
~\left( \frac{Ai_{ - 1}( \nu^{2/3} \zeta )}{\nu^{4/3}} G_1 ( \zeta, \nu )
+ \frac{Ai_{ - 1}'( \nu^{2/3} \zeta )}{\nu^{2/3}} G_2 ( \zeta, \nu ) \right).
\eeq
The functions $G_j ( \zeta , \nu)$, $j=1,2$, are given in (\ref{hankel-coef0}).
As above, these expansions are uniform in $ | \arg z| \leq \pi - \epsilon$, for any $\epsilon > 0$.


\subsection{Asymptotics for Airy Functions}\label{subsub-airy}

The asymptotics of the Bessel and Hankel functions are obtained
from the asymptotic expansions for the Airy functions (e.g.\ \cite[appendix]{[Olver3]}). These expansions
depend on whether the argument is near the zeros of the Airy functions along the
negative real axis, or away from them. For the Hankel functions of the first kind we only need
the asymptotics away from the zeros of the Airy functions. Let $\xi = (2/3) w^{3/2}$ where
$\xi$ is the principle value.
For $| \arg w | < \pi - \epsilon$, for any $\epsilon > 0$, the argument $w$ is away from the zeros
of the Airy function, and we have
\beq\label{airy1}
Ai (w) \sim \frac{e^{- \xi}}{ 2 \pi^{1/2} w^{1/4} } \hat{F}_1 ( \xi),
~\mbox{and} ~Ai' (w) \sim \frac{- w^{1/4}e^{- \xi}}{ 2 \pi^{1/2}  } \hat{G}_1 ( \xi).
\eeq
The functions appearing in these expansions are
\beq\label{airy-coef1}
\hat{F}_1 (\xi) \equiv \sum_{j=0}^\infty \frac{ c_j}{\xi^j }, ~\mbox{and} ~\hat{G}_1 (\xi)
\equiv  \sum_{j=0}^\infty \frac{ d_j}{\xi^j }  .
\eeq
The coefficients $c_j$ and $d_j$ are numerical constants given by
\beq\label{airy-coef2}
c_j = (-1)^j \frac{(2j+1)(2j+3) \cdots (6j-1)}{j! (216)^{j}}   ,
\eeq
and
\beq\label{airy-coef3}
d_j = c_j + c_{j-1} ( j - 5/6) = - \frac{6j+1}{6j-1} c_j ,
\eeq
with $c_0 = 1$ so that $d_0=1$. We note that the coefficients $(c_j, d_j)$
are related to the coefficients $(u_j, v_j)$ used by Olver as $c_j = (-1)^j u_j$ and
$d_j = (-1)^j v_j$.


\subsection{Uniform Asymptotics near the Eye-Shaped Region $K$}\label{unifasymp2}

Of particular importance is the application of the uniform asymptotics that follow
from (\ref{besselasympt1})-(\ref{airy1}) for $z$ near the eye-shaped region $K$ defined
in section \ref{sec:summary-zero1}. Recall that for a fixed $\epsilon > 0$,
we define a neighborhood $\Omega_{1, \epsilon}$ of $K$
to be all $z \in \C^+$ so that
$\text{dist}\; (z,\partial K^+)< \epsilon$ excluding small neighborhoods of $\pm 1$ so
that $z \in \Omega_{1, \epsilon}$ satisfies $|z+ 1|>\epsilon$ and $|z-1| > \epsilon$.

For $z \in \Omega_{1,\epsilon}$, the uniform asymptotics of the Bessel functions and
their derivatives follow from (\ref{besselasympt1}) and the estimates
on the Airy function and its derivative away from their zeros and given in (\ref{airy1}).
The uniform expansion of $J_\nu (\nu z)$
and its derivative involve the series $F_1 (\zeta)$, $F_2 (\zeta)$,
given in (\ref{coef1}), and $\hat{F}_1 (\xi)$ and $\hat{G}_1 (\xi)$,
defined in (\ref{airy-coef1}).
For the Bessel functions, we use $\xi = \nu \rho$ in the
expansion (\ref{airy1}).
We obtain:
\bea\label{besselasymptot2}
J_\nu ( \nu {z} ) &=& \frac{\phi (\zeta) e^{- \nu \rho}}{ 2 \pi^{1/2} \nu^{1/2} \zeta^{1/4}}
\left\{  1 + \frac{1}{\nu} \left(  \frac{c_1}{{\rho}} - \zeta^{1/2} B_0(\zeta) \right) \right. \nonumber \\
 &&\left. + \frac{1}{\nu^2} \left( A_1 (\zeta) + \frac{c_2}{\rho^2} - \frac{ \zeta^{1/2} B_0 (\zeta) d_1}{\rho} \right)
  + \mathcal{O} \left( \frac{1}{\nu^3} \right) \right\} ,
\eea
and for the derivative,
\bea\label{besselasymptot3}
J_\nu ' ( \nu {z} ) &=& - \frac{\psi (\zeta) e^{- \nu \rho}}{ 2 \pi^{1/2} \nu^{1/2} \zeta^{1/4}}
\left\{  - \zeta^{1/2} + \frac{1}{\nu} \left(  C_0 (\zeta) - \frac{ \zeta^{1/2} d_1}{{\rho}}  \right) \right. \nonumber
 \\
  &&\left. +  \frac{1}{\nu^2} \left( \frac{C_0 (\zeta) c_1}{\rho} - \frac{\zeta^{1/2} d_2}{\rho^2} - \zeta^{1/2} D_1 (\zeta) \right)
   +  \mathcal{O} \left( \frac{1}{\nu^3} \right) \right\}  ,
\eea
where the coefficients $C_j (\xi)$ and $D_j (\xi)$ are
defined in (\ref{hankel-coef1}) and (\ref{hankel-coef2}).

We also need the asymptotics for the Hankel functions of the first kind and their derivatives
for $z \in \Omega_{1,\epsilon}$. These are obtained from (\ref{hankelasympt1}) using the
expansions of the Airy functions. We note that for $z \in \Omega_{1, \epsilon}$,
it follows from the map $z \rightarrow \zeta$ that $\arg \zeta \sim - \pi / 3$.
Consequently, for $H_\nu^{(1)} ( \nu z)$, we have $\arg ( e^{2 \pi i / 3} \zeta ) \sim  \pi/ 3$,
and we need the asymptotics of the Airy function away from its zeros given in (\ref{airy1}).
Using these asymptotics (\ref{airy1}),
we find
for the Hankel functions of the first kind, with $\xi=-\nu \rho$
\bea\label{hankelasymptot2}
H_\nu^{(1)}(\nu z)  &=&    \frac{-i \phi ( \zeta) e^{\nu \rho}}{ \pi^{1/2}  \zeta^{1/4} \nu^{1/2} }
\left\{ 1 + \frac{1}{\nu}  \left( \zeta^{1/2} B_0 (\zeta) - \frac{c_1}{\rho} \right) \right. \nonumber \\
&& \left. + \frac{1}{\nu^2} \left( A_1(\zeta) + \zeta^{1/2} B_1 ( \zeta) +
 \frac{c_2}{\rho^2} - \frac{\zeta^{1/2} d_1 B_0(\zeta) }{\rho} \right)
 +  \mathcal{O} \left( \frac{1}{\nu^3}  \right)  \right\}.
\eea
As for the derivative, we use the same expansions of the Airy functions in (\ref{hankel-deriv-asympt1}), and obtain
\bea\label{hankelasymptot3}
{{H_\nu^{(1)}}} ' (\nu z)  &= &  \frac{i \psi ( \zeta) e^{ \nu \rho}}{ \pi^{1/2}  \zeta^{1/4} \nu^{1/2}}
\left\{ \zeta^{1/2} + \frac{1}{\nu}
\left( C_0 (\zeta) - \frac{d_1 \zeta^{1/2}}{\rho} \right) \right. \nonumber \\
 && \left. + \frac{1}{\nu^2} \left( \frac{-C_0(\zeta) c_1}{\rho} + \frac{\zeta^{1/2} d_2}{\rho^2}
 + \zeta^{1/2} D_1 (\zeta) \right) +  \mathcal{O} \left( \frac{1}{\nu^3}   \right) \right\}.
\eea


\subsection{Auxiliary Asymptotic Expansions}\label{subsec:aux1}

The variable $\tilde{z}(z) = [ z^2 - V_0 / \nu^2 ]^{1/2}$ carries the information
about the perturbing potential with strength $V_0$. We need the expansions as $\nu \rightarrow \infty$ of
various quantities depending on $\tilde{z}$ for $z$ in a fixed set $\Omega_{1, \epsilon}$
for which $z$ is bounded away from $\pm 1$ and $0$. First, we note that
\beq\label{eq:auxexp1}
\tilde{z} (z) = z - \frac{ V_0}{2 \nu^2 z} + \mathcal{O} \left( \frac{1}{\nu^4} \right) .
\eeq
Recalling the definition of $\rho$ in (\ref{map1}), we find
\beq\label{eq:auxexp4}
\rho ( \tilde{z}) = \rho (z) + \frac{V_0}{2 \nu^2} \left[ \frac{(1-z^2)^{1/2}}{z^2} \right] +
\mathcal{O} \left( \frac{1}{\nu^4} \right) .
\eeq
It follows from the definition of $\zeta$ in (\ref{map1}) that
\bea\label{eq:auxexp5}
\frac{\zeta (\tilde{z}) }{\zeta (z)} &=& 1 + \frac{V_0}{3 \nu^2} \left[ \frac{ \rho'}{ z \rho} \right]
+  \mathcal{O} \left( \frac{1}{\nu^4} \right) \nonumber \\
 &=&1 + \frac{V_0}{2 \nu^2 \zeta^{3/2}} \left[ \frac{ (1 - z^2)^{1/2} }{ z^2 } \right]
+  \mathcal{O} \left( \frac{1}{\nu^4} \right) \ .
\eea
We write $\tilde{\rho}$ and $\tilde{\zeta}$ for $\rho
( \tilde{z})$ and $\zeta (\tilde{z})$.
We need the expansion of the following combination
that follows from (\ref{eq:auxexp4})--(\ref{eq:auxexp5}):
\beq\label{eq:auxexp7}
\frac{\phi (\tilde{\zeta}) }{\phi (\zeta)} \left( \frac{\zeta}{\tilde{\zeta}} \right)^{1/4}
= 1 - \frac{V_0}{ 4 \nu^2 (1-z^2)} + \mathcal{O} \left( \frac{1}{\nu^4} \right) .
\eeq
Finally, we need an asymptotic expansion for the exponentials
 appearing in the products of Bessel
functions, see (\ref{besselasymptot2})--(\ref{besselasymptot3}).
This follows from
(\ref{eq:auxexp4}):
\beq\label{eq:auxexp8}
e^{- \nu ( \tilde{\rho} - \rho)} = 1 - \frac{ V_0 (1-z^2)^{1/2}}{2 \nu z^2}
+ \frac{V_0^2 (1-z^2)}{8\nu^2 z^4}
+ \mathcal{O} \left( \frac{1}{\nu^3}
\right) .
\eeq

\subsection{Condition for Zeros on the $m^{th}$-Sheet}\label{subsec:condition-m-0}

We now compute the asymptotic expansion of $F_0^{(\nu)}(\nu z)$, defined in (\ref{zero-sheet1}),
to order $\nu^{-2}$, and of $G_0^{(\nu)} (\nu z)$, defined in (\ref{m-sheet21}),
to order $\nu^{-3}$, using the uniform asymptotics of section \ref{unifasymp2}.
As for $F_0^{(\nu)}(\nu z)$, recall that
\beq\label{physheet1a}
F_0^{(\nu)}(\nu z) = \nu \tilde{z}(z) J_\nu ' (\nu \tilde{z}(z)) H_\nu^{(1)}(\nu z) - \nu z
J_\nu ( \nu \tilde{z}(z)) {H_\nu^{(1)}}'(\nu z) .
\eeq
From the asymptotics of the Bessel functions (\ref{besselasymptot2})--(\ref{besselasymptot3})
and of the Hankel functions of the first kind (\ref{hankelasymptot2})--(\ref{hankelasymptot3}),
together with auxiliary expansions (\ref{eq:auxexp4})--(\ref{eq:auxexp5}),
we find that the first term on the right in (\ref{physheet1a}) has the
expansion in inverse powers of $\nu$:
\bea\label{m-sheet331}
\lefteqn{ \nu \tilde{z} (z) J_\nu ' (\nu \tilde{z} (z) ) H_{\nu}^{(1)} (\nu z ) } && \nonumber \\
 &= & \frac{ i}{ \pi} \left\{ -1 + \frac{1}{\nu} \left[ \frac{V_0 (1-z^2)^{1/2}}{2 z^2} +
 \frac{C_0 (\zeta)}{\zeta^{1/2}} + \frac{c_1 - d_1 }{ \rho} - \zeta^{1/2} B_0 (\zeta)  \right]
 + \mathcal{O} \left(\frac{1}{\nu^2} \right) \right\} . \nonumber \\
\eea
Similarly, we find for the second term on the right in (\ref{physheet1a}),
we the expansion in $\nu$ is
\bea\label{m-sheet221}
\lefteqn{ \nu z J_\nu (\nu \tilde{z} (z) )  {H_{\nu}^{(1)}}' (\nu z )  } && \nonumber \\
&=&
\frac{ i}{ \pi} \left\{ 1 + \frac{1}{\nu} \left[ \frac{-V_0 (1-z^2)^{1/2}}{2 z^2} +
 \frac{C_0 (\zeta)}{\zeta^{1/2}} - \zeta^{1/2} B_0 (\zeta) + \frac{ c_1 - d_1}{ \rho}  \right]
  \nonumber
+ \mathcal{O} \left(\frac{1}{\nu^2} \right) \right\} .  \\
\eea

Subtracting the expansion (\ref{m-sheet221}) from (\ref{m-sheet331}), we find that $F_0^{(\nu)}(\nu z)$
has the asymptotic form
\bea\label{mzeros01a}
F_0^{(\nu)} (\nu z) &=& \frac{- 2i}{\pi} \left\{ 1 - \frac{1}{\nu}
 \left[ \frac{V_0 (1-z^2)^{1/2}}{2 z^2}
\right]
+ \mathcal{O} \left( \frac{1}{\nu^2} \right) \right\} .
\eea
This is (\ref{mzeros01b}). Note that this implies that, asymptotically, there are no zeros on the
physical sheet, as expected.

We turn to the second quantity $G_0^{(\nu)}(\nu z)$ defined in (\ref{m-sheet21}):
\beq\label{eq:defng1}
G_0^{(\nu)} (\nu z) = \nu \tilde{z} (z) J_\nu ' ( \nu \tilde{z}) J_\nu (\nu z)
- \nu z J_\nu  ( \nu \tilde{z}) J_\nu '(\nu z) .
\eeq
We treat each term separately. For the first term,
we find
the following asymptotic expansion in $\nu$:
\bea\label{m-sheet441}
\lefteqn{\nu \tilde{z} (z) J_\nu ' ( \nu \tilde{z}) J_\nu (\nu z) } && \nonumber \\
 &=& \left( \frac{-1}{2 \pi} \right) e^{- \nu (\rho+\tilde{\rho})} \left\{ -1 + \frac{1}{\nu} \left[ \frac{C_0(\zeta)}{\zeta^{1/2}} -
  \frac{c_1 + d_1}{\rho} + \zeta^{1/2} B_0(\zeta)
 \right] \right. \nonumber \\
  && \left. +\frac{1}{\nu^2}
\left[ \frac{2 c_1 C_0 (\zeta)}{\rho \zeta^{1/2}} -
  \frac{d_1 c_1 + d_2 + c_2}{ \rho^2} -
  ( B_0(\zeta)C_0(\zeta) +  A_1 (\zeta) + D_1(\zeta)) + \frac{2 \zeta^{1/2} B_0(\zeta) d_1}{\rho}
   \right. \right. \nonumber \\
   && \left. \left.
- \frac{V_0}{4 (1-z^2)} \right]  +
   \mathcal{O} \left(\frac{1}{\nu^3} \right) \right\} . \nonumber \\
\eea
For the second term in $G_0^{(\nu)}(\nu z)$, we obtain
\bea\label{m-sheet551}
\lefteqn{\nu z J_\nu  ( \nu \tilde{z}) J_\nu '(\nu z) } && \nonumber \\
 &=& \left( \frac{-1}{2 \pi} \right) e^{- \nu( \rho+\tilde{\rho})}
 \left\{ -1 + \frac{1}{\nu} \left[ \frac{C_0(\zeta)}{\zeta^{1/2}} -
  \frac{c_1 + d_1}{\rho} + \zeta^{1/2} B_0(\zeta)
\right] \right. \nonumber \\
  && \left. +\frac{1}{\nu^2} \left[ \frac{2 c_1 C_0 (\zeta)}{\rho \zeta^{1/2}} -
  \frac{d_1 c_1 + d_2 + c_2}{ \rho^2} -
  (B_0(\zeta)C_0(\zeta) + A_1 (\zeta) + D_1(\zeta)) + \frac{2 \zeta^{1/2} B_0(\zeta) d_1}{\rho}
   \right. \right. \nonumber \\
   && \left. \left.
+ \frac{V_0}{4 (1-z^2)} \right]  +
   \mathcal{O} \left(\frac{1}{\nu^3} \right) \right\} . \nonumber \\
\eea

Subtracting (\ref{m-sheet551}) from (\ref{m-sheet441}), we obtain the expansion for $G_0^{(\nu)}(\nu z)$:
\beq\label{mzeros01c}
G_0^{(\nu)}(\nu z) =\frac{e^{-  \nu (\rho+\tilde{\rho})}}
{2 \pi} \left[ \frac{V_0}{ 2 \nu^2 (1-z^2)} + \mathcal{O}
\left( \frac{1}{\nu^3} \right) \right] = \frac{e^{- 2 \nu \rho}}{2 \pi} \left[ \frac{V_0}{ 2 \nu^2 (1-z^2)} + \mathcal{O}
\left( \frac{1}{\nu^3} \right) \right] .
\eeq
Note that when $V_0 = 0$, this term vanishes to higher order in $\nu$.
This establishes (\ref{m-sheet3}).


\small
\noindent
{\sc
Department of Mathematics\\
University of Missouri\\
Columbia, Missouri 65211\\
e-mail:{\tt tjc@math.missouri.edu} }

\vspace{2mm}

\noindent
{\sc Department of Mathematics\\
University of Kentucky\\
Lexington, Kentucky 40506-0027\\
e-mail: {\tt hislop@ms.uky.edu}}

\end{document}